\numberwithin{equation}{section}
\numberwithin{table}{section}
\numberwithin{figure}{section}
\theoremstyle{plain}
\newtheorem{theorem}{Theorem}[section]
\newtheorem{proposition}[theorem]{Proposition}
\newtheorem{corollary}[theorem]{Corollary}
\newtheorem{lemma}[theorem]{Lemma}
\theoremstyle{remark}
\newtheorem{remark}{Remark}[section]
\newtheorem{definition}[remark]{Definition}
\DeclareMathOperator*{\argmin}{argmin}
\newcommand{\N}{\mathbb{N}}
\newcommand{\R}{\mathbb{R}}
\newcommand{\PP}{\mathsf{P}} 
\newcommand{\EE}{\mathsf{E}} 
\newcommand{\Var}{\mathsf{Var}} 
\newcommand{\Cov}{\mathsf{Cov}} 
\newcommand{\bb}[1]{\boldsymbol{#1}}
\newcommand{\rd}{\mathrm{d}}
\newcommand{\ind}{\mathds{1}}
\newcommand{\e}{\varepsilon}
\newcommand{\leqdef}{\vcentcolon=}
\newcommand{\reqdef}{=\vcentcolon}
\newcommand{\Rnearrow}{\R_{\raisebox{0.2ex}{\scalebox{0.5}{$\nearrow$}}}^{d+1}}
\newcommand{\Rnearrowtwo}{\R_{\raisebox{0.2ex}{\scalebox{0.5}{$\nearrow$}}}^2}
\begin{document}

\begin{frontmatter}
\title{A linear regression model for quantile function data applied to paired pulmonary 3d CT scans}
\runtitle{A linear regression model for quantile function data}

\begin{aug}
\author[a1]{\fnms{Marie-F\'elicia}~\snm{B\'eclin}\ead[label=e1]{marie-felicia.beclin@umontpellier.fr}\orcid{0009-0005-8117-2983}},
\author[a2]{\fnms{Pierre}~\snm{Lafaye De Micheaux}\ead[label=e2]{lafaye@unsw.edu.au}\orcid{0000-0002-0247-5136}}
\author[a3]{\fnms{Nicolas}~\snm{Molinari}\ead[label=e3]{nicolas.molinari@inserm.fr}
\orcid{0000-0002-1786-0088}}
\and
\author[a4]{\fnms{Fr\'ed\'eric}~\snm{Ouimet}\ead[label=e4]{frederic.ouimet2@mcgill.ca}\orcid{0000-0001-7933-5265}}
\address[a1]{IDESP, INSERM, PreMEdical INRIA, Univ Montpellier, CHU Montpellier, Montpellier, France\printead[presep={,\ }]{e1}}\vspace{-2mm}
\address[a2]{School of Mathematics and Statistics, UNSW Sydney, Sydney, Australia\printead[presep={,\ }]{e2}}\vspace{-2mm}
\address[a3]{IDESP, INSERM, PreMEdical INRIA, Univ Montpellier, CHU Montpellier, Montpellier, France\printead[presep={,\ }]{e3}}\vspace{-2mm}
\address[a4]{Department of Mathematics and Statistics, McGill University, Montr\'eal, Canada\printead[presep={,\ }]{e4}}
\end{aug}

\begin{abstract}
This paper introduces a new objective measure for assessing treatment response in asthmatic patients using computed tomography (CT) imaging data. For each patient, CT scans were obtained before and after one year of monoclonal antibody treatment. Following image segmentation, the Hounsfield unit (HU) values of the voxels were encoded through quantile functions. It is hypothesized that patients with improved conditions after treatment will exhibit better expiration, reflected in higher HU values and an upward shift in the quantile curve. To objectively measure treatment response, a novel linear regression model on quantile functions is developed, drawing inspiration from \citet{Verde2010}. Unlike their framework, the proposed model is parametric and incorporates distributional assumptions on the errors, enabling statistical inference. The model allows for the explicit calculation of regression coefficient estimators and confidence intervals, similar to conventional linear regression. The corresponding data and \texttt{R} code are available on GitHub to facilitate the reproducibility of the analyses presented.
\end{abstract}

\begin{keyword}
\kwd{Air-trapping}
\kwd{asthma}
\kwd{Benralizumab}
\kwd{biomarker}
\kwd{bronchial remodeling}
\kwd{computed tomography}
\kwd{CT scan}
\kwd{histogram}
\kwd{Hounsfield unit}
\kwd{lung}
\kwd{monoclonal antibody}
\kwd{parametric model}
\kwd{quantile function}
\kwd{quantile function regression}
\kwd{quantile regression}
\kwd{segmentation}
\kwd{treatment response}
\end{keyword}

\end{frontmatter}

\section{Introduction}\label{sec:intro}

Asthma is a chronic inflammatory disease that primarily affects small airways, leading to airway remodeling and air-trapping, which impair lung function and often result in poor exhalation. Monoclonal antibody treatments like Benralizumab have shown significant improvements in exacerbation rates and symptoms but are expensive, making it essential to prescribe them to patients who are likely to respond. It is hypothesized that computed tomography (CT) data, particularly air-trapping quantifications, can be used to identify clinical responders to Benralizumab early in the treatment of asthmatic patients. Such early identification could optimize treatment rule-out times and improve clinical decision-making.

The primary clinical challenge of this study is to find a reliable and effective biomarker to measure and predict treatment response in asthma. Medical images, particularly CT-based, have already been used as biomarkers for diagnosis and monitoring disease progression. Quantitative thoracic computed tomography (QTCT) is especially useful in this context. In asthma, imaging modalities such as CT play a crucial role in identifying phenotypes \citep{marin2016fractal,cabon2019k} and quantifying therapeutic responses \citep{pompe_imaging-derived_2023}. Thoracic CT imaging offers the advantage of directly visualizing bronchial remodeling \emph{in vivo}. For example, the clinical trial conducted by Genofre's team \citep{genofre_effects_2023} utilizes markers such as the volume (number of voxels) of lobes, lungs, and airways. Another indicator for classifying severe asthma, used in the analysis of expiration and inspiration CT scans, is the expiration/inspiration (E/I) ratio. This is the ratio of the mean CT-determined values for the bilateral upper and lower pulmonary segments during full expiration to those during full inspiration \citep{mitsunobu_use_2005}. In a study conducted by \cite{hartley_et_al_2016}, a significant difference was observed in the E/I ratio between asthmatic patients, healthy individuals, and those with chronic obstructive pulmonary disease.

Although these traditional indicators are informative, they do not fully exploit the comprehensive data available from CT scans. Hence, there is a pressing need to develop novel imaging-derived measures \citep{trivedi}.

A more thorough analysis can be achieved by examining CT scan histograms \citep{zarei2024quantitative}. These histograms are valuable tools for analyzing medical image data \citep{sumikawa2009computed}, as they provide a quantifiable means of assessing a patient's health status.

However, histograms have limitations, including sensitivity to bin origin and width, which can suppress important details and obscure granularity, multimodality, and skewness. These binning choices can significantly affect the histogram's appearance, making it difficult to discern real patterns from artifacts. Additionally, comparing histograms is often more challenging than assessing fit in a Q-Q plot \citep{Cleveland2004}.

In contrast, quantile functions provide a concise summary of the data distribution without the need for binning. They are more robust to outliers and facilitate easier comparisons across datasets.

More recently, the concurrent emergence of the theory of optimal transport and Wasserstein spaces in the field of statistics \citep{Panaretos2019} has broadened the possibilities for analyzing histogram and density data. Several regression approaches based on Wasserstein distance have emerged; see, e.g., \citet{IrpinoVerde2015,dias2015linear,bonneel_Wasserstein_2016,chen2023wasserstein,ghodrati2022distribution,ghodrati2024transportation,OkanoImaizumi2024,Irpino2024}. However, none of these techniques provide the capability for inference, which is a critical limitation when working in the field of biostatistics.

Consequently, the present study proposes a statistical methodology centered on analyzing quantile functions derived from segmented thoracic CT scans to quantify expiration quality in asthmatic patients, both before and after treatment. To objectively measure treatment response, a novel linear regression model is developed to analyze $n$ pairs of quantile functions, mimicking the structure of classical linear regression but extending its application to cases where both the explanatory and response variables are quantile functions rather than real-valued observations. The model draws inspiration from \citet{Verde2010} (see also \citet{IrpinoVerde2015,dias2015linear}) but distinguishes itself by being fully parametric and incorporating distributional assumptions on the errors, which enhances both the interpretability of the estimators and the ability to perform statistical inference.

Our method should not be mistaken for classical (scalar-on-scalar) quantile regression \citep{Koenker2005}, nor for scalar-on-function quantile regression \citep{Cardot2005,Chen2012,Kato2012,LiWangMaityStaicu2022,Ghosal2023,Yan2023,Beyaztas2024}, function-on-scalar quantile regression \citep{Yang2020a,Yang2020b,Liu2020,Zhang2022,Liu2023}, or function-on-function quantile regression \citep{Beyaztas2022,Zhu2023,Mutis2024,BeyaztasShangSaricam2024}. Although the latter bears certain similarities to the proposed model, the conditional distribution of the functional response are characterized through quantiles, i.e., the predictor and response themselves are not quantile curves. Furthermore, these models are generally semiparametric as they rely on nonparametric methods to estimate the functional regression coefficients. This reliance significantly complicates interpretability and limits the scope for conducting statistical inference. Conversely, our model is fully parametric and allows for the explicit calculation of (scalar) regression coefficient estimators and confidence intervals, akin to conventional linear regression. Although initially motivated by the asthma problem, our method can be used in other applications.

The paper is organized as follows. Section~\ref{sec:medical.problem} details the methodology for medical image processing to construct a quantile function dataset. Section~\ref{sec:regression} introduces the regression model for quantile functions, along with the regression coefficient estimators and confidence intervals. Confidence regions are also provided for residual quantile functions and the mean response of a new observation. The proofs of the results stated in Section~\ref{sec:regression} are relegated to Section~\ref{sec:proofs}. In Section~\ref{sec:application}, the new linear regression model is applied to a quantile function dataset obtained from CT scans of the lungs of $44$ asthmatic patients treated with Benralizumab. This application demonstrates the effectiveness and practicality of the approach in a clinical setting. A discussion of the findings ensues in Section~\ref{sec:discussion}.

\section{Quantile function data}\label{sec:medical.problem}

\subsection{The raw data}

Patients in the study underwent a 48-week treatment regimen with Benralizumab, administered subcutaneously at a dosage of 30 mg per injection. The treatment schedule consisted of an initial phase with injections every four weeks for the first three doses, followed by subsequent injections every eight weeks for the remaining five doses. Baseline (pre-treatment) and end-of-treatment (post-treatment) data were collected as part of the study protocol.

Each CT scan in the dataset comprises approximately 600 slices, each with dimensions of $512 \times 512$, resulting in a three-dimensional array of size $600 \times 512 \times 512$. Each array thus encodes about 157 million voxel values, stored in Digital Imaging and Communication in Medicine (DICOM) files. During preprocessing, voxel values are converted into Hounsfield Unit (HU) values, which quantify the X-ray attenuation characteristics of tissues. The HU scale is defined such that 0 represents the attenuation of pure water at standard temperature and pressure (STP), $-1000$ corresponds to pure air, and higher values (e.g., 1000) are assigned to dense materials like compact bone and metal. Lower HU values indicate higher air content, making them critical for evaluating lung function. For reference, most tissue densities fall between $20$ and $80$ HU, with blood typically ranging from $50$ to $80$ HU.

In practice, CT images are traditionally stored using 12 bits, and a mapping is used to represent radiodensities between $-1023$ HU (assimilated as air) and $3072$ HU. The HU values are processed and reconstructed by a computer into grayscale images, enabling detailed quantification of lung structure and function. Lung segmentation is performed using a threshold-based segmentation algorithm \citep{heuberg_lung_2005}, isolating regions with HU values ranging from $-1023$ to $-200$. This segmentation highlights the air-trapping characteristics of asthmatic lungs, which are key to assessing treatment response. For details on the methodology and associated code, see \cite{Beclin2024}.

After segmentation, the dataset consists of $N_i^x$ and $N_i^y$ voxels for each patient $i$, corresponding to pre- and post-treatment scans, respectively. Typically, $N_i^x$ and $N_i^y$ range between $3$ and $15$ million voxels and are not necessarily equal. These data are compactly stored in two rectangular tables, available on the GitHub repository of \cite{BeclinMicheauxMolinariOuimet_github_2024}. The first column of each table contains $824$ unique HU values (ranging from $-1023$ to $-200$), while the remaining $44$ columns represent the voxel counts for each one of the $n=44$ patients.

\subsection{From raw data to quantile functions}\label{sec:raw.to.quantile}

To quantify treatment response, the HU value distributions for each patient before and after treatment are analyzed through their corresponding quantile functions. By focusing on quantile functions, the analysis discards spatial information about voxel locations, emphasizing changes in the distribution of HU values instead. This approach allows for a more robust assessment of the overall shift in lung function metrics, facilitating the identification of clinical responders to Benralizumab. The raw data are transformed into quantile functions as illustrated in Figure~\ref{fig:raw.to.quantile}.

\begin{figure}[!ht]
\includegraphics[width=1\textwidth]{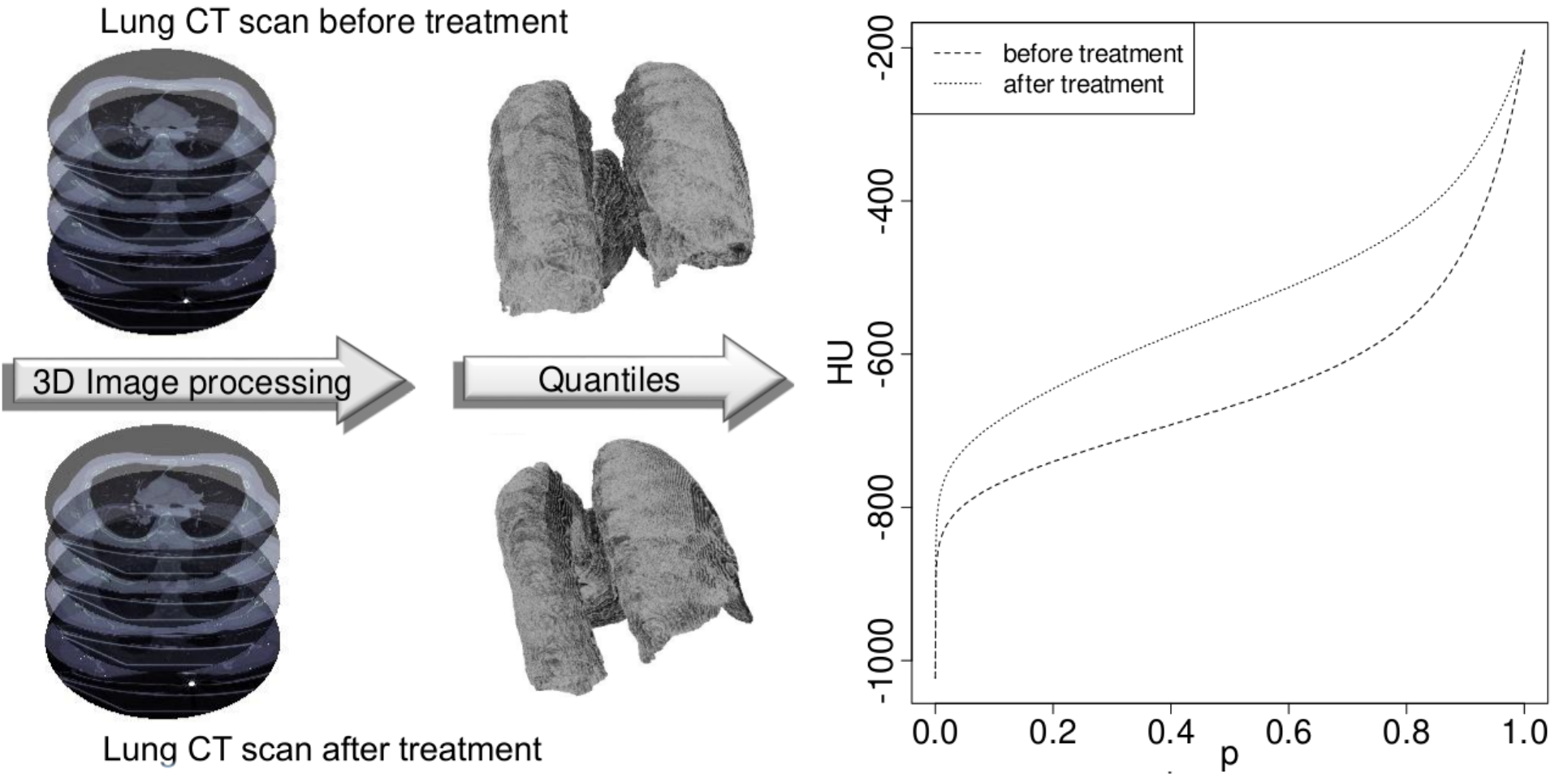}
\caption{Visualization of the process for extracting quantile function data from CT scan images: 3D CT scans, acquired before and after treatment, undergo segmentation to isolate the lungs. Once segmented, the distribution of HU values within the lung voxels is used to construct the corresponding empirical quantile function for each image.}
\label{fig:raw.to.quantile}
\end{figure}

For the $i$th patient ($i \in {1, \ldots, n}$, with $n=44$), let $x_{i,1} \leq \dots \leq x_{i,N_i^x}$ denote the observed pre-treatment HU values arranged in nondecreasing order. Each value belongs to the discrete set $\{-1023,\ldots,-200\}$. The empirical quantile function for these HU values is defined as
\[
\widehat{q}_{x,i}(p)
= \inf\bigg\{t\in \R : p \leq \frac{1}{N_i^x} \sum_{j=1}^{N_i^x} \ind_{\{x_{i,j}\leq t\}}\bigg\}
= \sum_{j=1}^{N_i^x} x_{i,j} \, \ind_{\{\frac{j-1}{N_i^x} < p \leq \frac{j}{N_i^x}\}}, \quad p\in [0,1],
\]
encoding all information about the observed HU values and their frequencies \citep{Parzen1983}.

The empirical quantile function $\widehat{q}_{x,i}$ lies in the space of quantile functions $q_F$, which characterizes a probability distribution $F$ through the formula
\begin{equation}\label{eq:space.quantile}
q_F(p) \leqdef \inf\{x\in \R : p \leq F(x)\}, \quad p\in [0,1].
\end{equation}
This function can be viewed as a generalized inverse of the cumulative distribution function (cdf) $F$. However, the space of quantile functions is too large to derive interpretable linear regression models for pre- and post-treatment quantile functions, especially models that allow explicit computation of regression coefficients and confidence intervals.

To address this limitation, the analysis focuses on a flexible parametric family of quantile functions. Specifically, consider the space of nondecreasing polynomials of degree $d$ with the standard normal quantile function $\Phi^{-1}(p)$ as their argument. The standard normal quantile function is defined as
\[
\Phi^{-1}(p) = \inf\{x \in \R : p \leq \Phi(x)\}, \quad p \in [0,1],
\]
where $\Phi(x) = \smash{\int_{-\infty}^x e^{-z^2/2} / \sqrt{2\pi} \, \rd z}$ is the cdf of the standard normal distribution. In this parametric family, each quantile function $q$ is represented as
\begin{equation}\label{eq:parametric.quantile.space}
q(p) = \sum_{i=0}^d a_i (\Phi^{-1}(p))^i, \quad \bb{a} = (a_0, \ldots, a_d) \in \Rnearrow,
\end{equation}
where the vector of coefficients $\bb{a}$ belongs to the set
\begin{equation}\label{eq:R.inc}
\Rnearrow \coloneqq \left\{\bb{a} \in \R^{d+1} : \frac{\rd}{\rd x} \sum_{i=0}^d a_i x^i \geq 0 \quad \forall x \in \R \right\}.
\end{equation}
The restriction $\bb{a} \in \Rnearrow$ ensures that the polynomial remains nondecreasing, preserving the monotonicity characteristic of quantile functions.

\begin{remark}
The motivation for the above parametric space is that any quantile function $q_F\in \mathcal{L}^2([0,1])$ can be approximated with arbitrary precision by a parametric quantile function of the form \eqref{eq:parametric.quantile.space}, provided the polynomial degree $d$ is sufficiently large. To see this, note that the corresponding nondecreasing map $z \mapsto q_F(\Phi(z))$, which lies in the Wiener space $(\mathcal{L}^2(\mathbb{R}), \|\cdot\|_{\phi})$ where $\phi(z) = e^{-z^2/2} / \sqrt{2\pi}$ and $\|h\|_{\phi} = \int_{-\infty}^{\infty} (h(z))^2 \phi(z) \rd z$, can be approximated arbitrarily closely by nondecreasing polynomials. A sketch of the argument is as follows: Choose $M > 0$ large enough so that $\smash{\int_{|z|>M} \phi(z) \rd z}$ is negligible. On $[-M,M]$, approximate $z \mapsto q_F(\Phi(z))$ by a nondecreasing continuous function, making linear interpolations around any jump discontinuities to ensure minimal Gaussian-integrated error. Next, approximate this continuous nondecreasing function uniformly on $[-M,M]$ by Bernstein polynomials, known to preserve monotonicity \citep{Lorentz1986}. Combining these steps establishes the desired approximation result.
\end{remark}

The quantile function $q_{x,i}$, representing the $i$th patient's pre-treatment HU values, is derived as the $\mathcal{L}^2([0,1])$ projection of $\widehat{q}_{x,i}$ onto the parametric space. This projection, for each $i\in \{1,\ldots,n\}$ and any integer degree $d$, is the solution to the optimization problem below:
\[
\begin{aligned}
\min_{q} \int_0^1\left(q(p)-\widehat{q}_{x,i}(p)\right)^2 \rd p
&= \min_{\bb{a}_{i,d}\in \Rnearrow} \int_0^1 \bigg(\sum_{j=0}^d a_{i,d,j} (\Phi^{-1})^j(p) - \widehat{q}_{x,i}(p)\bigg)^2 \rd p \\[-1mm]
&= \min_{\bb{a}_{i,d}\in \Rnearrow} \|P_{x,i,d} - \widehat{q}_{x,i} \circ \Phi\|_{\phi},
\end{aligned}
\]
where $P_{x,i,d}(z) = \sum_{j=0}^d a_{i,d,j} z^j$. The solution is
\[
\begin{aligned}
q_{x,i,d}^{\star} = \sum_{j=0}^d a_{i,d,j}^{\star} (\Phi^{-1})^j(\cdot), \qquad \bb{a}_{i,d}^{\star} = \argmin_{\bb{a}_{i,d}\in \Rnearrow} \|P_{x,i,d} - \widehat{q}_{x,i} \circ \Phi\|_{\phi}.
\end{aligned}
\]
This optimization procedure can be implemented numerically as detailed in Proposition~\ref{prop:q.opt} of Section~\ref{sec:explicit.minimizer}, where the explicit calculation of $\|P_{x,i,d} - \widehat{q}_{x,i} \circ \Phi\|_{\phi}$ is provided. The problem admits an exact solution using a semidefinite programming approach, as described in \cite{siem2008discrete}.

To select the optimal degree $d$ for each patient, the Bayesian Information Criterion (BIC) is used:
\[
d_{x,i}^{\star} \leqdef \underset{d\in \{1,\ldots,D\}}{\argmin}\left\{ \left\|P_{x,i,d}^{\star} - \widehat{q}_{x,i} \circ \Phi\right\|_{\phi} + d\log(N_i^x)\right\},
\]
where $D$ is a fixed positive integer. The resulting pre-treatment quantile function for the $i$th patient is then
\[
q_{x,i} = \sum_{j=0}^{d_{x,i}^{\star}} a_{i,d_{x,i}^{\star},j}^{\star} (\Phi^{-1})^j(\cdot).
\]

The same procedure is applied to the post-treatment HU values $y_{i,1} \leq \dots \leq y_{i,N_i^y}$ to obtain the post-treatment quantile function $q_{y,i}$. Thus, the dataset consists of $n=44$ pairs of parametric quantile functions, $\{(q_{x,i}, q_{y,i})\}_{i=1}^n$, representing pre- and post-treatment HU distributions for the study population.

In Sections~\ref{sec:regression} and~\ref{sec:application}, the linear regression model will be developed and applied under the constraint $d = D = 1$, simplifying the derivation of explicit estimators, confidence intervals, and the interpretability of the results. The case $D > 1$ is left for future research.

\section{A simple linear regression model for quantile functions}\label{sec:regression}

The proposed simple linear regression model to accommodate random quantile function data is
\begin{equation}\label{eq:not.our.model}
Q_{Y,i} = \beta_0 + \beta_1 q_{x,i} + E_i, \qquad i\in \{1,\ldots,n\},
\end{equation}
where the error terms $E_1,\ldots,E_n$ are assumed to form an independent and identically distributed (iid) sequence of random quantile functions, whose common distribution will be specified later. Unlike classical linear regression, where the response variable is a scalar, the response $Q_{Y,i}$ in this model is a random quantile function. This distinction is emphasized through the use of capital letters. The explanatory quantile function $q_{x,i}$, being observed, is represented with lowercase letters.

While this model was originally motivated by the specific problem of assessing asthma treatment response as described in Sections~\ref{sec:intro}~and~\ref{sec:medical.problem}, it is designed to be more broadly applicable to statistical problems involving quantile function datasets.

For added flexibility, consider the parametric regression model
\begin{equation}\label{eq:our.model}
Q_{Y,i} = \beta_0 + \beta_1\mu_{q_{x,i}} + \beta_2 q_{x,i}^c + E_i, \qquad i\in \{1,\ldots,n\},
\end{equation}
where the centered explanatory quantile functions $q_{x,i}^c$ are defined as
\[
q_{x,i}^c \leqdef q_{x,i}-\mu_{q_{x,i}}, \qquad \mu_{q_{x,i}} \leqdef \int_0^1 q_{x,i}(p) \rd p\in\mathbb{R}.
\]

The mathematical foundation of the model is detailed in the next section, where the concept of random quantile function is formally introduced, and a rigorous definition of the error terms $E_i$ is provided.

\begin{remark}
Compared to \eqref{eq:not.our.model}, model \eqref{eq:our.model} introduces two degrees of freedom in terms of shift and tilt. This design aligns with the functional data regression model by \cite{Verde2010}, which also examines shift and tilt effects. However, the current approach differs in two key aspects. Specifically, a parametric form is assumed for the quantile functions $q_{x,i}$ and $Q_{Y,i}$, enabling the derivation of interpretable regression coefficients, and distributional assumptions are imposed on the errors $E_i$, facilitating statistical inference.
\end{remark}

\subsection{A mathematical intermezzo}\label{sec:math}

For any $p\geq 1$, let $\mathcal{L}^p([0,1])$ be the space (of equivalence classes of almost everywhere equal) Borel measurable functions having a finite $p$-norm, defined by $\|f\|_p = \smash{\int_0^1 |f(x)|^p \rd x}$. This is well known to be a real Banach space under the usual operations of scalar multiplication and addition of functions.

Now, consider the closed metric subspace of $\mathcal{L}^p([0,1])$, denoted by $\mathcal{Q}^p$, that contains the quantile functions as in \eqref{eq:space.quantile}. Since quantile functions map one-to-one with Borel probability measures on $\R$, there is a bijective isometry between $(\mathcal{Q}^p, \|\cdot\|_p)$ and the space of Borel probability measures on $\R$ endowed with the $p$-Wasserstein metric.

As mentioned in Section~\ref{sec:raw.to.quantile}, the metric space $\mathcal{Q}^p$ is too large for modelling purposes. Therefore, the scope is confined to the closed metric subspace encompassing the nondecreasing polynomials of degree $d$ of the standard normal quantile function $\Phi^{-1}$, viz.,
\[
\mathcal{Q}_d^p = \left\{q\in \mathcal{Q}^p : \exists \bb{a}\in \Rnearrow ~\text{such that } q(u) = \sum_{i=0}^d a_i (\Phi^{-1})^i(u) ~~\forall u\in [0,1]\right\}.
\]
For $p=2$ and any $q(\cdot) = \sum_{i=0}^d a_i (\Phi^{-1})^i(\cdot) \in \mathcal{Q}_d^2$, the change of variable $u = \Phi(x)$ yields
\[
\|q\|_2 = \int_0^1 |q(u)|^2 \rd u = \sum_{i,j=0}^d a_i a_j \int_0^1 (\Phi^{-1}(u))^{i+j} \rd u = \sum_{i,j=0}^d a_i a_j \int_{\R} x^{i+j} \phi(x) \rd x.
\]
Since odd Gaussian moments are equal to zero, it follows that
\[
\|q\|_2 = \sum_{i=0}^d \sum_{k= \lceil i/2 \rceil}^{\lfloor (d+i)/2 \rfloor} a_i a_{2k-i} \int_{\R} x^{2k} \phi(x) \rd x = \sum_{i=0}^d \sum_{k= \lceil i/2 \rceil}^{\lfloor (d+i)/2 \rfloor} a_i a_{2k-i} \frac{(2k)!}{2^k k!}.
\]
Using the last expression to define the norm $\|\bb{a}\|_{\raisebox{0.2ex}{\scalebox{0.5}{$\nearrow$}}}$ of any vector $\bb{a}\in \Rnearrow$ as defined in \eqref{eq:R.inc}, it induces a bijective isometry $\Delta: \smash{(\Rnearrow, \|\cdot\|_{\raisebox{0.2ex}{\scalebox{0.5}{$\nearrow$}}})\to (\mathcal{Q}_d^2, \|\cdot\|_2)}$.

Since all norms are equivalent in finite dimension, note that the Borel $\sigma$-algebra on $\smash{\Rnearrow}$ coincides with the restriction of the Borel $\sigma$-algebra on $\R^{d+1}$. Explicitly, one has
\[
\mathcal{B}(\smash{\Rnearrow}) = \{B \in \mathcal{B}(\R^{d+1}) : B \subseteq \smash{\Rnearrow}\}.
\]
Thus, a random element in $\smash{(\smash{\Rnearrow},\mathcal{B}(\smash{\Rnearrow}))}$ is simply a random vector in $\smash{(\R^{d+1},\mathcal{B}(\R^{d+1}))}$ with support restricted to $\smash{\Rnearrow}$.

\begin{definition}\label{def:random.quantile}
A $\smash{\mathcal{Q}_d^2}$-valued random quantile function $Q$, or equivalently a random element in $\mathcal{Q}_d^2$, is a measurable map from any given probability space to $(\mathcal{Q}_d^2,\mathcal{B}(\mathcal{Q}_d^2))$.
\end{definition}

\begin{remark}\label{rem:construction.E.i.general}
A simple way to construct a $\smash{\mathcal{Q}_d^2}$-valued random quantile function $Q$ is to compose a random vector $\bb{A} = (A_0,\ldots,A_d): (\Omega,\mathcal{F})\to (\smash{\Rnearrow},\mathcal{B}(\smash{\Rnearrow}))$ with the bijective isometry $\Delta$, i.e., $Q = \Delta(\bb{A})$. Given an iid sequence of such random vectors, say $\bb{A}_1,\ldots,\bb{A}_n$, the error term $E_i$ in model~\eqref{eq:our.model} is defined, for every $i\in \{1,\ldots,n\}$, by
\[
E_i = \Delta(\bb{A}_i) = \sum_{j=0}^{d} A_{i,j} (\Phi^{-1})^{j},
\]
where $A_{i,j}$ denotes the $j$th component of $\bb{A}_i$.
\end{remark}

\begin{definition}\label{def:quantile.expectation}
For any $\smash{\mathcal{Q}_d^2}$-valued random quantile function $Q$ as in Definition~\ref{def:random.quantile}, the expectation of $Q$ is defined by
\[
\EE(Q) = \sum_{j=0}^d \EE(A_j) (\Phi^{-1})^j(\cdot).
\]
Note that $\EE(Q)$ itself is a (deterministic) quantile function because $\Rnearrow$ is closed and convex.
\end{definition}

In the special case $d=1$, the quantile errors defined in  Remark~\ref{rem:construction.E.i.general} have the simple form
\[
E_i = \Delta(\bb{A}_i) = A_{i,0} + A_{i,1} \Phi^{-1},
\]
for some iid random vectors $(A_{1,0},A_{1,1}),\ldots,(A_{n,0},A_{n,1})$ taking values in $\Rnearrowtwo \equiv \R\times (0,\infty)$.

For example, if $A_{i,0}$ and $A_{i,1}$ are chosen to be independent and satisfy
\[
\begin{aligned}
&A_{i,0} \sim \mathcal{N}(\mu,\sigma^2), \quad f_{A_{i,0}}(x) = \frac{1}{\sqrt{2\pi\sigma^2}} \exp\bigg(-\frac{(x-\mu)^2}{2\sigma^2}\bigg), \quad x\in \R, \\
&A_{i,1} \sim \mathcal{E}\mathrm{xp}(\beta,\delta), \quad f_{A_{i,1}}(y) = \frac{1}{\beta} \exp\bigg(-\frac{(y-\delta)}{\beta}\bigg) \ind_{[\delta,\infty)}(y), \quad y\in \R,
\end{aligned}
\]
for some parameters $(\mu,\sigma^2,\beta,\delta)\in \R\times (0,\infty)^3$, then the law of each error $E_i$ is defined as a quantile normal-exponential distribution, denoted by $E_i\sim \mathcal{QNE}(\mu,\sigma^2,\beta,\delta)$.

Moreover, for all $(a_0,a_1)\in \R\times (0,\infty)$, note that
\[
\Delta(a_0,a_1) = a_0 + a_1 \Phi^{-1} \equiv q_{\mathcal{N}(a_0,a_1^2)}
\]
is the quantile function of the $\mathcal{N}(a_0,a_1^2)$ distribution. Hence, $E_1,\ldots,E_n\stackrel{\mathrm{iid}}{\sim} \mathcal{QNE}(\mu,\sigma^2,\beta,\delta)$ is the same as stating that each $E_i$ is equal to the random quantile function of the normal distribution $\mathcal{N}(A_{i,0},A_{i,1}^2)$, where $\smash{(A_{1,0},A_{1,1}),\ldots,(A_{n,0},A_{n,1})\stackrel{\mathrm{iid}}{\sim} \mathcal{N}(\mu,\sigma^2) \otimes \mathcal{E}\mathrm{xp}(\beta,\delta)}$, and $\otimes$ stands for the product of two measures. See Figure~\ref{fig:QNE} for an illustration.

\begin{figure}[!t]
\includegraphics[width=0.70\textwidth]{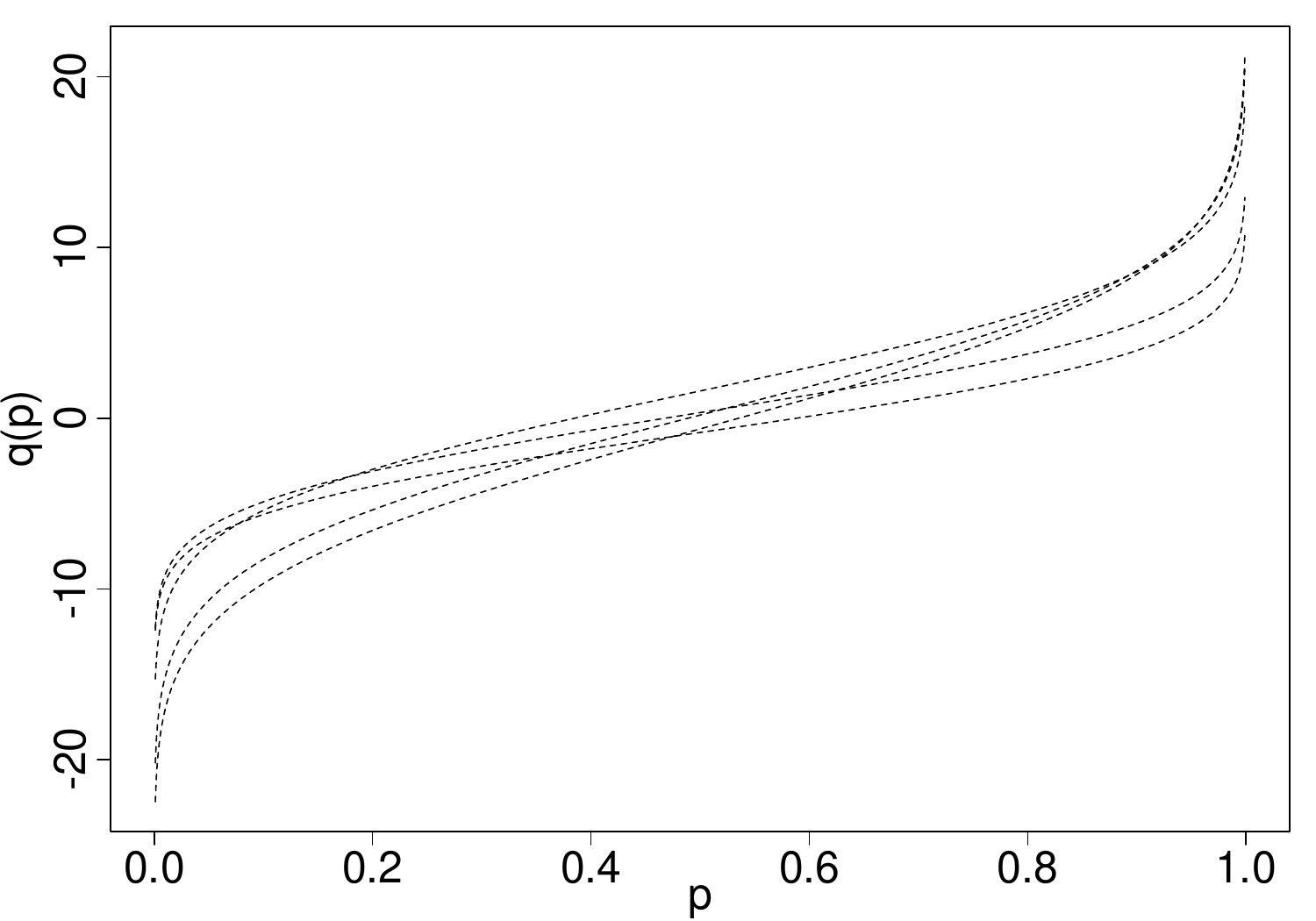}
\caption{A random sample of $n=5$ realizations of the $\mathcal{QNE}(0, 1, 2, 3)$ distribution.}
\label{fig:QNE}
\end{figure}

\subsection{Statistical inference for the simple linear regression model}\label{sec:slr}

As a proof-of-concept for the applicability of the new methodology on the real dataset described in Section~\ref{sec:medical.problem}, this section focuses on the case $d=1$. This choice simplifies the derivation of explicit regression coefficients and confidence intervals. Working with $d=1$ also enhances the interpretability of the results, allowing the findings to be more readily understood by those familiar with traditional simple linear regression techniques. The possibility of extending the model to cases where $d>1$, as well as incorporating multiple explanatory quantile functions (i.e., multiple predictors), remains an opportunity for future research.

The simple linear regression model of interest is
\begin{equation}\label{eq:model}
Q_{Y,i} = \beta_0 + \beta_1 \mu_{q_{x,i}} + \beta_2 q_{x,i}^c + E_i, \quad i\in \{1,\ldots,n\},
\end{equation}
where $E_1,\ldots,E_n\stackrel{\mathrm{iid}}{\sim} \mathcal{QNE}(0,\sigma^2,\beta,0)$ as defined in Section~\ref{sec:math}. This specific distribution of the errors was chosen for its suitability for the lung dataset analyzed in Section~\ref{sec:application}. Additionally, it results in explicit estimators for all parameters in the model (Section~\ref{sec:estimation}), explicit confidence intervals (Section~\ref{sec:CI}), and an explicit density function for the residual quantile functions (Section~\ref{sec:residuals}), facilitating the implementation in computing software at every step.

Since the quantile functions (including the errors $E_i$) are all $\mathcal{Q}_1^2$-valued when $d=1$, then for each $i\in \{1,\ldots,n\}$, one has
\[
Q_{Y,i} = \mu_{Q_{Y,i}} + \sigma_{Q_{Y,i}} \Phi^{-1}, \qquad q_{x,i} = \mu_{q_{x,i}} + \sigma_{q_{x,i}} \Phi^{-1}, \qquad E_i = \mu_{E_i} + \sigma_{E_i} \Phi^{-1}.
\]
Therefore, the model equivalently says that, for each $i\in \{1,\ldots,n\}$,
\begin{equation}\label{eq:model.reformulation}
\begin{aligned}
Q_{Y,i}
&= (\beta_0 + \beta_1 \mu_{q_{x,i}} + \mu_{E_i}) + (\beta_2 \sigma_{q_{x,i}} + \sigma_{E_i}) \Phi^{-1} \\
&\equiv \mu_{Q_{Y_,i}} + \sigma_{Q_{Y,i}} \Phi^{-1},
\end{aligned}
\end{equation}
where $(\mu_{E_1},\sigma_{E_1}), \ldots, (\mu_{E_n},\sigma_{E_n}) \stackrel{\mathrm{iid}}{\sim} \mathcal{N}(0,\sigma^2) \otimes \mathcal{E}\mathrm{xp}(\beta,0)$.

Here, $(\mu_{Q_{Y,i}},\sigma_{Q_{Y,i}})$ is a random vector taking values in $\Rnearrowtwo\equiv \R\times (0,\infty)$, so that $Q_{Y,i}$ is a random quantile function for the $\smash{\mathcal{N}(\mu_{Q_{Y,i}},\sigma_{Q_{Y,i}}^2)}$ distribution. To simplify the exposition of the theory, as often done in classical linear regression, the vector $(\mu_{q_{x,i}},\sigma_{q_{x,i}})$ is assumed to be non-random (deterministic), and $q_{x,i}$ is the corresponding deterministic quantile function for the $\smash{\mathcal{N}(\mu_{q_{x,i}},\sigma_{q_{x,i}}^2)}$ distribution. As before, the centered explanatory quantile functions $q_{x,i}^c$ are defined as
\[
q_{x,i}^c = q_{x,i}-\mu_{q_{x,i}}, \qquad \mu_{q_{x,i}} = \smash{\int_0^1 q_{x,i}(p) \rd p}.
\]
In particular, note that $\mu_{q_{x,i}^c} = 0$ and $\sigma_{q_{x,i}^c} = \sigma_{q_{x,i}}$, so that $q_{x,i}^c$ is the quantile function of the $\smash{\mathcal{N}(0,\sigma_{q_{x,i}}^2)}$ distribution.

The proposition below states the specific distribution of $(\mu_{Q_{Y,i}},\sigma_{Q_{Y,i}})$, which is an immediate consequence of \eqref{eq:model.reformulation}.

\begin{proposition}\label{prop:law.mu.sigma}
The random variables $\mu_{Q_{Y,i}}$ and $\sigma_{Q_{Y,i}}$ are independent, and satisfy
\[
\mu_{Q_{Y,i}} \sim \mathcal{N}(\beta_0 +\beta_1 \mu_{q_{x,i}}, \sigma^{2}),
\qquad
\sigma_{Q_{Y,i}} \sim \mathcal{E}\mathrm{xp}(\beta, \beta_2\sigma_{q_{x,i}}).
\]
Equivalently, one has $Q_{Y,i}\sim \mathcal{QNE}(\beta_0+\beta_1 \mu_{q_{x,i}}, \sigma^{2}, \beta, \beta_2\sigma_{q_{x,i}})$ as per Section~\ref{sec:math}.
\end{proposition}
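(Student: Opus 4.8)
The plan is to read off both the independence claim and the two marginal laws directly from the reformulation \eqref{eq:model.reformulation}, so the argument is essentially a bookkeeping exercise once that identity is in hand. Recall that \eqref{eq:model.reformulation} expresses the coordinates of $(\mu_{Q_{Y,i}}, \sigma_{Q_{Y,i}})$ as
\[
\mu_{Q_{Y,i}} = (\beta_0 + \beta_1 \mu_{q_{x,i}}) + \mu_{E_i}, \qquad \sigma_{Q_{Y,i}} = (\beta_2 \sigma_{q_{x,i}}) + \sigma_{E_i},
\]
where, by the construction of the errors in Section~\ref{sec:math}, $(\mu_{E_i}, \sigma_{E_i}) \sim \mathcal{N}(0,\sigma^2) \otimes \mathcal{E}\mathrm{xp}(\beta,0)$, and the quantities $\beta_0, \beta_1, \beta_2, \mu_{q_{x,i}}, \sigma_{q_{x,i}}$ are all deterministic.

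First I would argue independence: $\mu_{Q_{Y,i}}$ is a deterministic affine function of $\mu_{E_i}$ alone and $\sigma_{Q_{Y,i}}$ is a deterministic affine function of $\sigma_{E_i}$ alone; since $\mu_{E_i}$ and $\sigma_{E_i}$ are independent (their joint law being a product measure) and these maps are measurable, it follows that $\mu_{Q_{Y,i}}$ and $\sigma_{Q_{Y,i}}$ are independent.

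Next I would identify the two marginals by elementary change of variables in the densities. Shifting a $\mathcal{N}(0,\sigma^2)$ variable by the constant $\beta_0 + \beta_1 \mu_{q_{x,i}}$ produces a $\mathcal{N}(\beta_0 + \beta_1 \mu_{q_{x,i}}, \sigma^2)$ variable (one line with the Gaussian density, or via characteristic functions). Likewise, starting from the two-parameter exponential density $f_{\sigma_{E_i}}(y) = \beta^{-1} e^{-y/\beta}\, \ind_{[0,\infty)}(y)$ and shifting by $\beta_2 \sigma_{q_{x,i}}$ yields the density $z \mapsto \beta^{-1} e^{-(z - \beta_2\sigma_{q_{x,i}})/\beta}\, \ind_{[\beta_2\sigma_{q_{x,i}},\infty)}(z)$, which is by definition that of $\mathcal{E}\mathrm{xp}(\beta, \beta_2\sigma_{q_{x,i}})$. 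Combining this with the independence just established, and recalling that, per Section~\ref{sec:math}, writing $Q_{Y,i} = \mu_{Q_{Y,i}} + \sigma_{Q_{Y,i}}\Phi^{-1}$ with independent $\mathcal{N}$ and $\mathcal{E}\mathrm{xp}$ components is exactly the definition of $\mathcal{QNE}(\beta_0 + \beta_1\mu_{q_{x,i}}, \sigma^2, \beta, \beta_2\sigma_{q_{x,i}})$, yields the equivalent formulation.

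There is no serious obstacle here; the only point worth a remark is the implicit well-posedness requirement that $(\mu_{Q_{Y,i}},\sigma_{Q_{Y,i}})$ take values in $\Rnearrowtwo = \R\times(0,\infty)$, i.e.\ that $\sigma_{Q_{Y,i}} > 0$ almost surely, so that $Q_{Y,i}$ is genuinely a random quantile function. Since $\sigma_{E_i} > 0$ a.s., this holds as soon as $\beta_2\sigma_{q_{x,i}} \geq 0$, which is the relevant regime for the application; I would flag this in passing so that the $\mathcal{QNE}$ label appearing in the statement is unambiguous.
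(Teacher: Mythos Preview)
Your proposal is correct and follows exactly the approach indicated in the paper, which simply states that the proposition is an immediate consequence of \eqref{eq:model.reformulation} without spelling out further details. You have merely made explicit the independence and shift arguments that the paper leaves implicit.
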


\begin{remark}
Regardless of their pre-treatment mean values $\mu_{q_{x,i}}$, all post-treatment quantile functions $Q_{Y,i}$ experience a common global shift of $\beta_0\in \R$. The coefficient $\beta_1\in \R$ allows for a patient-specific shift of $\beta_1 \mu_{q_{x,i}}$, which can account for a patient's medical history and health status. As shown above, $\mu_{Q_{Y,i}} = \beta_0 + \beta_1 \mu_{q_{x,i}} + \mu_{E_i}$, so the values of $\beta_0$ and $\beta_1$ will determine if the curve $Q_{Y,i}$ is shifted vertically (upward or downward) with respect to $q_{x,i}$, which corresponds equivalently to a horizontal shift (right or left) of the associated post-treatment normal density. This is illustrated on the left-hand side of Figure~\ref{fig:model.interpretation}.
\end{remark}

\begin{remark}
The coefficient $\beta_2\in (0,\infty)$ is analogous to the slope parameter in classical simple linear regression. Multiplying the centered explanatory quantile function $q_{x,i}^c$ by a coefficient $0<\beta_2<1$ (resp., $\beta_2>1$) induces a shrinkage (resp., stretching) of the curve $q_{x,i}^c$ followed by a clockwise (resp., counterclockwise) rotation around the point $\smash{((q_{x,i})^{-1}(\mu_{q_{x,i}}),0)}$. Equivalently, since $\sigma_{Q_{Y,i}} = \beta_2 \sigma_{q_{x,i}} + \sigma_{E_i}$, the normal density associated with the post-treatment quantile function $Q_{Y,i}$ will be more concentrated around its mean if $\beta_2 < 1$ and more spread out if $\beta_2 > 1$. This is illustrated on the right-hand side of Figure~\ref{fig:model.interpretation}.
\end{remark}

\begin{figure}[!ht]
\centering
\begin{subfigure}{0.49\textwidth}
\includegraphics[width=\textwidth]{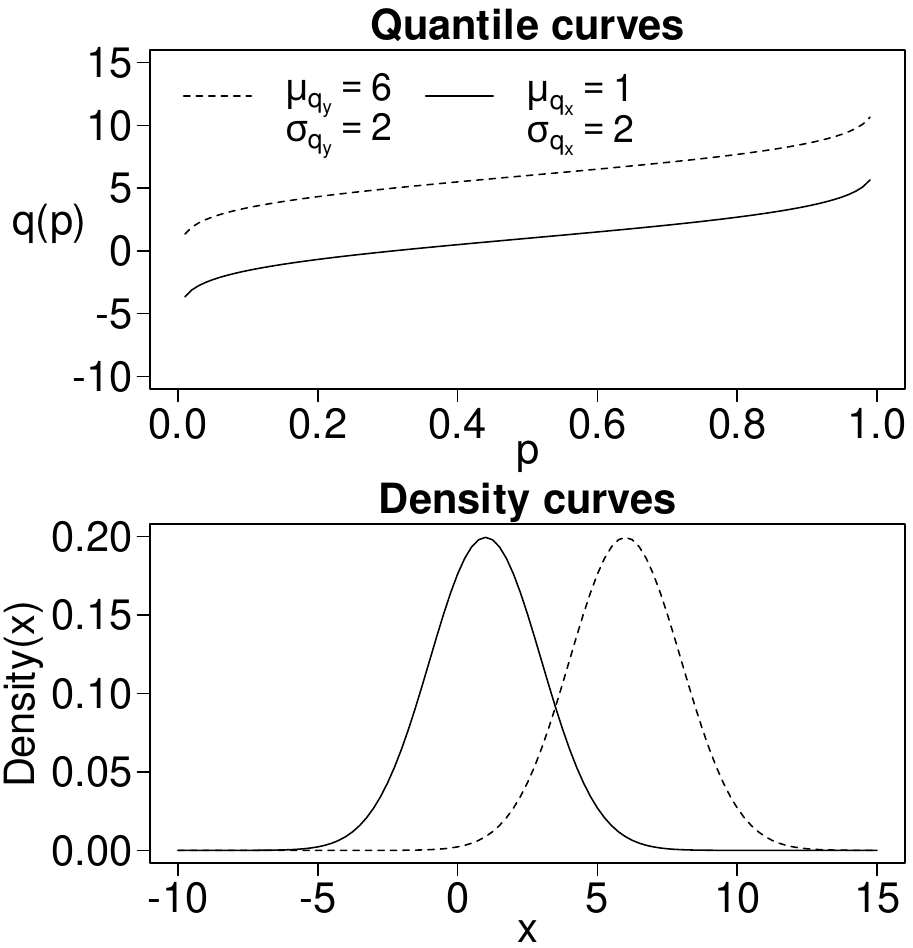}\vspace{-1mm}
\caption{$\beta_0 = 5, \beta_1=1, \beta_2=1$}\label{figure1}
\end{subfigure}
\begin{subfigure}{0.49\textwidth}
\includegraphics[width=\textwidth]{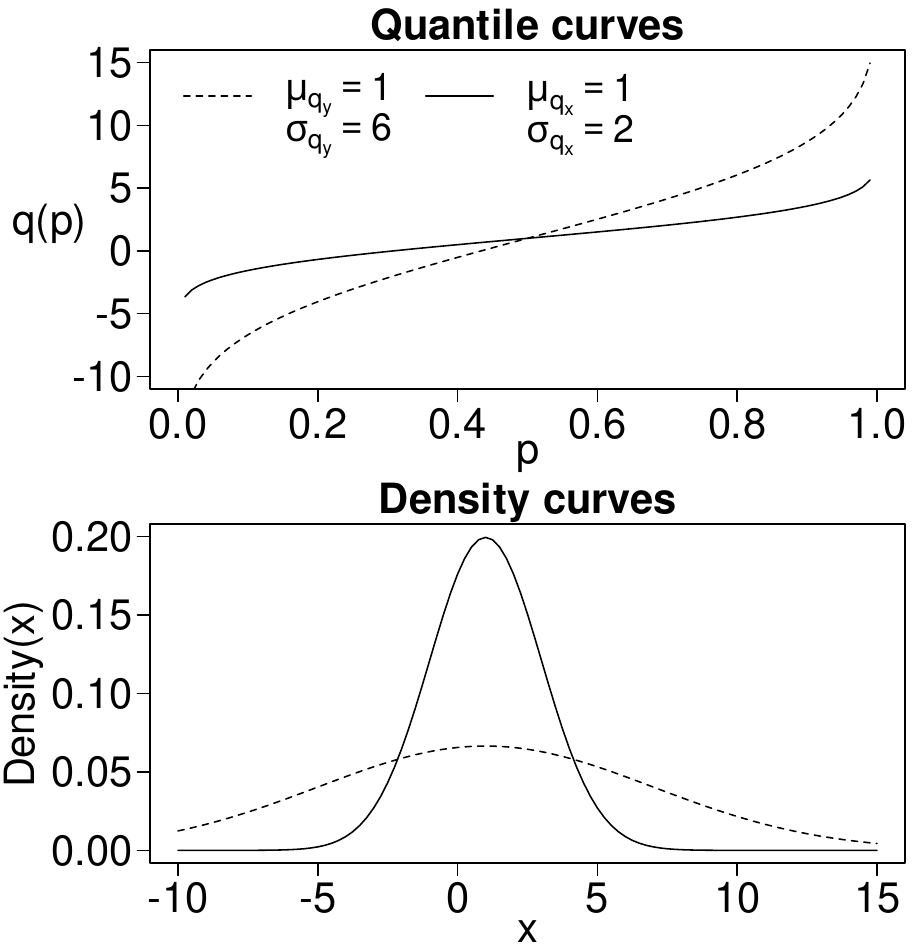}\vspace{-1mm}
\caption{$\beta_0 = 0, \beta_1 = 1, \beta_2 = 3$}\label{figure2}
\end{subfigure}
\caption{Illustration of the impact of the coefficients $\beta_0$, $\beta_1$ and $\beta_2$ on the post-treatment quantile curve (top row) and the associated normal density function (bottom row). For simplicity, the effect of the noise term is ignored. (a) The two subfigures on the left show the effect of $\beta_0$ and $\beta_1$ on the shift $\mu_{q_y} - \mu_{q_x} = \beta_0 + (\beta_1 - 1)\mu_{q_x}$ when the scale remains unchanged ($\beta_2 = 1$). (b) The two subfigures on the right show the effect of $\beta_2$ on the scale $\sigma_{q_y} = \beta_2 \sigma_{q_x}$  when the shift remains unchanged ($\beta_0 = 0$, $\beta_1 = 1$).}
\label{fig:model.interpretation}
\end{figure}

\subsubsection{Estimation of the parameters}\label{sec:estimation}

The next task is to derive explicit expressions for the maximum likelihood (ML) estimators of the model's unknown parameters.

\begin{proposition}[ML estimators and their distributions]\label{prop:MLE}
Assume that the model~\eqref{eq:model} holds true. Define the sample means
\[
\bar{\mu}_{Q_Y} = \frac{1}{n} \sum_{i=1}^n \mu_{Q_{Y,i}}, \quad
\bar{\sigma}_{Q_Y} = \frac{1}{n} \sum_{i=1}^n \sigma_{Q_{Y,i}}, \quad
\bar{\mu}_{q_x} = \frac{1}{n} \sum_{i=1}^n \mu_{q_{x,i}}, \quad
\bar{\sigma}_{q_x} = \frac{1}{n} \sum_{i=1}^n \sigma_{q_{x,i}},
\]
and the sample variance $w = n^{-1} \sum_{i=1}^n \mu_{q_{x,i}}^2 - \bar{\mu}_{q_{x}}^2$. Then, the ML estimators of $\beta_0$, $\beta_1$, $\sigma^2$, $\beta_2$, and $\beta$, are, respectively,
\begin{align*}
\widehat{\beta}_0
&\equiv \widehat{\beta}_{0,\mathrm{ML}}
= \bar{\mu}_{Q_Y} - \widehat{\beta}_1 \bar{\mu}_{q_x}
\sim \mathcal{N}\bigg(\beta_0, \frac{\sigma^2}{n}\bigg(1+\frac{\bar{\mu}_{q_x}^2}{w}\bigg)\bigg) \\[-1mm]
\widehat{\beta}_1
&\equiv \widehat{\beta}_{1,\mathrm{ML}}
= w^{-1} \bigg(n^{-1}\sum_{i=1}^n \mu_{Q_{Y,i}}\mu_{q_{x,i}} - \bar{\mu}_{Q_Y}\bar{\mu}_{q_x}\bigg)
\sim \mathcal{N}\left(\beta_1,\frac{\sigma^2}{nw}\right), \\[-2mm]
\widehat{\sigma^2}_{\mathrm{ML}}
&= \frac{1}{n} \sum_{i=1}^n (\mu_{Q_{Y,i}}-\widehat{\beta}_0-\widehat{\beta}_1\mu_{q_{x,i}})^2,
\qquad \frac{n}{\sigma^2} \widehat{\sigma^2}_{\mathrm{ML}} \sim \chi^2(n-2), \\[-1mm]
\widehat{\beta}_{2,\mathrm{ML}}
&= \min_{1 \leq i \leq n}\left(\frac{\sigma_{Q_{Y,i}}}{\sigma_{q_{x,i}}}\right)
\sim \mathcal{E}\mathrm{xp}\left(\frac{\beta}{n \bar{\sigma}_{q_x}}, \beta_2 \right), \\
\widehat{\beta}_{\mathrm{ML}}
&= \bar{\sigma}_{Q_Y} - \widehat{\beta}_{2,\mathrm{ML}} \bar{\sigma}_{q_x}
\sim \mathcal{G}\mathrm{amma}\left(n-1,\frac{\beta}{n}\right),
\end{align*}
where $\mathcal{E}\mathrm{xp}$ and $\mathcal{G}\mathrm{amma}$ have scale-shift and shape-scale parameterizations, respectively.
\end{proposition}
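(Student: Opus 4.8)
The plan is to reduce everything to the mean–scale reparametrisation \eqref{eq:model.reformulation}. Since $d=1$, each quantile function is its own affine function of $\Phi^{-1}$, so the full data are carried by the two blocks $(\mu_{Q_{Y,1}},\dots,\mu_{Q_{Y,n}})$ and $(\sigma_{Q_{Y,1}},\dots,\sigma_{Q_{Y,n}})$, and by Proposition~\ref{prop:law.mu.sigma} these are mutually independent with $\mu_{Q_{Y,i}}\sim\mathcal{N}(\beta_0+\beta_1\mu_{q_{x,i}},\sigma^2)$ and $\sigma_{Q_{Y,i}}\sim\mathcal{E}\mathrm{xp}(\beta,\beta_2\sigma_{q_{x,i}})$. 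Hence the log-likelihood of \eqref{eq:model} splits as $\ell=\ell_1(\beta_0,\beta_1,\sigma^2)+\ell_2(\beta_2,\beta)$, where $\ell_1$ involves only the $\mu$-block and $\ell_2$ only the $\sigma$-block; the two maximisations decouple, and the resulting estimators of $(\beta_0,\beta_1,\sigma^2)$ are automatically independent of those of $(\beta_2,\beta)$.

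The $\mu$-block is classical: $\ell_1$ is exactly the Gaussian simple-linear-regression log-likelihood with ``response'' $\mu_{Q_{Y,i}}$ and ``covariate'' $\mu_{q_{x,i}}$. Solving the normal equations gives the stated $\widehat\beta_1=w^{-1}\{n^{-1}\sum_i\mu_{Q_{Y,i}}\mu_{q_{x,i}}-\bar\mu_{Q_Y}\bar\mu_{q_x}\}$, $\widehat\beta_0=\bar\mu_{Q_Y}-\widehat\beta_1\bar\mu_{q_x}$, and $\widehat{\sigma^2}_{\mathrm{ML}}=n^{-1}\sum_i(\mu_{Q_{Y,i}}-\widehat\beta_0-\widehat\beta_1\mu_{q_{x,i}})^2$. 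Writing $\widehat\beta_1=\sum_i c_i\mu_{Q_{Y,i}}$ with $c_i=(\mu_{q_{x,i}}-\bar\mu_{q_x})/(nw)$, and $\bar\mu_{Q_Y}=\sum_i n^{-1}\mu_{Q_{Y,i}}$, the announced normal laws of $\widehat\beta_0$ and $\widehat\beta_1$ (using $\sum_i c_i=0$, $\sum_i c_i\mu_{q_{x,i}}=1$, $\sum_i c_i^2=1/(nw)$) and the vanishing covariance $\Cov(\bar\mu_{Q_Y},\widehat\beta_1)=0$ follow from direct moment computations for linear functionals of an independent Gaussian vector; the law $n\sigma^{-2}\widehat{\sigma^2}_{\mathrm{ML}}\sim\chi^2(n-2)$ and its independence from $(\widehat\beta_0,\widehat\beta_1)$ are the standard Cochran/projection fact (the residual vector lies in the orthogonal complement of a $2$-dimensional column space). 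There is nothing new here beyond transcribing textbook Gaussian regression onto the transformed data.

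The $\sigma$-block carries the real content. One has $\ell_2(\beta_2,\beta)=-n\log\beta-\beta^{-1}\sum_{i=1}^n(\sigma_{Q_{Y,i}}-\beta_2\sigma_{q_{x,i}})$, to be maximised over $\beta>0$ and over the feasible set $\{\,0<\beta_2\le\min_{1\le i\le n}\sigma_{Q_{Y,i}}/\sigma_{q_{x,i}}\,\}$ forced by the support conditions $\sigma_{Q_{Y,i}}\ge\beta_2\sigma_{q_{x,i}}$. Because $\partial_{\beta_2}\ell_2=\beta^{-1}\sum_i\sigma_{q_{x,i}}>0$, the likelihood is strictly increasing in $\beta_2$, so it is maximised at the right endpoint, $\widehat\beta_{2,\mathrm{ML}}=\min_{1\le i\le n}\sigma_{Q_{Y,i}}/\sigma_{q_{x,i}}$; this is a boundary maximiser, the one place where the usual ``set the score to zero'' reasoning does not apply. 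Plugging it back and optimising the profiled $\ell_2$ in $\beta$ is the ordinary exponential-scale MLE, giving $\widehat\beta_{\mathrm{ML}}=n^{-1}\sum_i(\sigma_{Q_{Y,i}}-\widehat\beta_{2,\mathrm{ML}}\sigma_{q_{x,i}})=\bar\sigma_{Q_Y}-\widehat\beta_{2,\mathrm{ML}}\bar\sigma_{q_x}$.

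For the distributions, set $U_i\leqdef\sigma_{E_i}/\sigma_{q_{x,i}}$, so that $\sigma_{Q_{Y,i}}/\sigma_{q_{x,i}}=\beta_2+U_i$ with the $U_i$ independent, $U_i\sim\mathcal{E}\mathrm{xp}$ with rate $\lambda_i=\sigma_{q_{x,i}}/\beta$. Then $\widehat\beta_{2,\mathrm{ML}}-\beta_2=\min_i U_i$ is exponential with rate $\sum_i\lambda_i=n\bar\sigma_{q_x}/\beta$, i.e.\ $\widehat\beta_{2,\mathrm{ML}}\sim\mathcal{E}\mathrm{xp}(\beta/(n\bar\sigma_{q_x}),\beta_2)$, while $\widehat\beta_{\mathrm{ML}}=\tfrac\beta n\sum_i\lambda_i(U_i-\min_j U_j)$. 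To identify the law of the latter I would condition on the almost surely unique index $K=\argmin_i U_i$: for independent exponentials $K$ is independent of $\min_j U_j$, and by the memoryless property, conditionally on $\{K=k\}$ the variables $\{U_i-\min_j U_j\}_{i\ne k}$ are independent with $U_i-\min_j U_j\sim\mathcal{E}\mathrm{xp}(\lambda_i)$, so $\{\lambda_i(U_i-\min_j U_j)\}_{i\ne k}$ are $n-1$ independent standard exponentials whose sum is $\mathcal{G}\mathrm{amma}(n-1,1)$ regardless of $k$; unconditionally this yields $\widehat\beta_{\mathrm{ML}}\sim\mathcal{G}\mathrm{amma}(n-1,\beta/n)$, and the same argument shows $\widehat\beta_{\mathrm{ML}}\perp\widehat\beta_{2,\mathrm{ML}}$. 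The main obstacle is exactly this step: coping with the dependence introduced by subtracting the sample minimum, for which the Rényi-type conditioning on the argmin (equivalently, the exponential-spacings representation) is the clean route; everything else is bookkeeping and the standard Gaussian theory.
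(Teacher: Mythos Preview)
Your argument is correct and, for the derivation of the estimators and for the laws of $\widehat\beta_0$, $\widehat\beta_1$, $\widehat{\sigma^2}_{\mathrm{ML}}$, and $\widehat\beta_{2,\mathrm{ML}}$, it coincides with the paper's proof: both write down the product likelihood, observe the factorisation into a Gaussian $\mu$-block and an exponential $\sigma$-block, invoke textbook Gaussian regression for the former, note the monotonicity in $\beta_2$ for the boundary maximiser, and compute the survival function of the minimum of independent exponentials. The genuine difference is in the law of $\widehat\beta_{\mathrm{ML}}$. The paper does not argue this directly here; it forward-references Proposition~\ref{prop:pivot.distributions}, whose proof rests on Lemma~\ref{lem:Fred.1} (an explicit simplex integration giving the joint density of $\min_i\theta_iX_i$ and $n^{-1}\sum_iX_i$) and Corollary~\ref{cor:density.afer.linear.transformation}. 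Your R\'enyi-type route---condition on the a.s.\ unique $K=\argmin_iU_i$, use memorylessness to see that $(U_i-\min_jU_j)_{i\neq K}$ are independent with $U_i-\min_jU_j\sim\mathcal{E}\mathrm{xp}(\lambda_i)$, rescale by $\lambda_i$ to get $n-1$ standard exponentials, and sum---is more elementary, self-contained, and yields the independence $\widehat\beta_{\mathrm{ML}}\perp\widehat\beta_{2,\mathrm{ML}}$ as a free by-product. The paper's heavier machinery, on the other hand, delivers the full joint density of $((\widehat\beta_2-\beta_2)/\beta,\widehat\beta/\beta)$, which it needs anyway for the pivot distributions and confidence regions in Propositions~\ref{prop:pivot.distributions} and~\ref{prop:CR.new.mean.response}; your memoryless argument is in fact the content of the paper's later Lemma~\ref{lem:memoryless}, deployed there only for the residual analysis.
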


\begin{remark}\label{eq:bias}
First, since the expectation of a chi-square random variable corresponds to its degree-of-freedom parameter, Proposition~\ref{prop:MLE} implies that
\[
\EE(\widehat{\sigma^2}_{\mathrm{ML}}) = \frac{n-2}{n} \sigma^2,
\]
and thus $\widehat{\sigma^2}_{\mathrm{ML}}$ is biased. Second, the expectation of a shifted exponential distribution, $\mathcal{E}\mathrm{xp}(\lambda,\delta)$, is equal to $\lambda + \delta$. Hence, by Proposition~\ref{prop:MLE}, one has
\[
\EE(\widehat{\beta}_{2,\mathrm{ML}}) = \beta_2 + \frac{\beta}{n \bar{\sigma}_{q_x}},
\]
which shows that $\widehat{\beta}_{2,\mathrm{ML}}$ is biased. Third, since $(\sigma_{Q_{Y,1}} - \beta_2\sigma_{q_{x,1}}),\ldots,(\sigma_{Q_{Y,n}} - \beta_2\sigma_{q_{x,n}}) \overset{\mathrm{iid}}{\sim} \mathcal{E}\mathrm{xp}(\beta)$ in the scale parametrization, one also has
\[
\bar{\sigma}_{Q_Y} - \beta_2 \bar{\sigma}_{q_x} = \frac{1}{n} \sum_{i=1}^n (\sigma_{Q_{Y,i}} - \beta_2\sigma_{q_{x,i}}) \sim \mathcal{G}\mathrm{amma}\bigg(n, \frac{\beta}{n}\bigg),
\]
so that
\[
\EE(\widehat{\beta}_{\mathrm{ML}}) = \EE(\bar{\sigma}_{Q_Y} - \beta_2 \bar{\sigma}_{q_x}) - \EE((\widehat{\beta}_{2,\mathrm{ML}} - \beta_2) \bar{\sigma}_{q_x}) = \beta - \frac{\beta}{n} = \frac{n-1}{n} \beta,
\]
which shows that $\widehat{\beta}_{\mathrm{ML}}$ is biased.
\end{remark}

The biases of $\widehat{\sigma^2}_{\mathrm{ML}}$, $\widehat{\beta}_{2,\mathrm{ML}}$, and $\widehat{\beta}_{\mathrm{ML}}$, which were highlighted in Remark~\ref{eq:bias}, motivate the following definition.
\begin{definition}\label{def:unbiased.estimators}
Consider the estimators of $\sigma^2$, $\beta_2$, and $\beta$, defined as
\[
\widehat{\sigma^2}
\leqdef \frac{n}{n-2} \widehat{\sigma^2}_{\mathrm{ML}}, \qquad
\widehat{\beta}_2
\leqdef \frac{n}{n-1} \widehat{\beta}_{2,\mathrm{ML}} - \frac{\bar{\sigma}_{Q_Y}}{(n-1) \bar{\sigma}_{q_x}}, \qquad
\widehat{\beta}
\leqdef \frac{n}{n-1} \widehat{\beta}_{\mathrm{ML}}.
\]
\end{definition}

\begin{proposition}\label{prop:unbiased.estimators}
The estimators $\widehat{\sigma^2}$, $\widehat{\beta}_2$, and $\widehat{\beta}$, are unbiased.
\end{proposition}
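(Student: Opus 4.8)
The plan is to verify each of the three claims by a direct expectation computation, drawing entirely on the distributional statements established in Proposition~\ref{prop:MLE} (and recalled in Remark~\ref{eq:bias}) together with linearity of expectation; no new probabilistic input is required.

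First I would handle $\widehat{\sigma^2}$ and $\widehat{\beta}$, which are the easy cases. Proposition~\ref{prop:MLE} gives $n\sigma^{-2}\widehat{\sigma^2}_{\mathrm{ML}}\sim \chi^2(n-2)$, and since the mean of a chi-square equals its degrees of freedom, $\EE(\widehat{\sigma^2}_{\mathrm{ML}}) = \tfrac{n-2}{n}\sigma^2$; multiplying by $\tfrac{n}{n-2}$ as in Definition~\ref{def:unbiased.estimators} gives $\EE(\widehat{\sigma^2}) = \sigma^2$. Likewise, $\widehat{\beta}_{\mathrm{ML}}\sim \mathcal{G}\mathrm{amma}(n-1,\beta/n)$ in the shape–scale parameterization, so $\EE(\widehat{\beta}_{\mathrm{ML}}) = (n-1)\beta/n$, and scaling by $\tfrac{n}{n-1}$ yields $\EE(\widehat{\beta}) = \beta$.

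For $\widehat{\beta}_2$ I would combine two ingredients. From Proposition~\ref{prop:MLE}, $\widehat{\beta}_{2,\mathrm{ML}}\sim \mathcal{E}\mathrm{xp}(\beta/(n\bar{\sigma}_{q_x}),\beta_2)$ in the scale–shift parameterization, whose mean is the sum of its two parameters, so $\EE(\widehat{\beta}_{2,\mathrm{ML}}) = \beta_2 + \beta/(n\bar{\sigma}_{q_x})$. Separately, Proposition~\ref{prop:law.mu.sigma} gives $\sigma_{Q_{Y,i}}\sim \mathcal{E}\mathrm{xp}(\beta,\beta_2\sigma_{q_{x,i}})$, hence $\EE(\sigma_{Q_{Y,i}}) = \beta + \beta_2\sigma_{q_{x,i}}$ and therefore $\EE(\bar{\sigma}_{Q_Y}) = \beta + \beta_2\bar{\sigma}_{q_x}$. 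Plugging these into $\widehat{\beta}_2 = \tfrac{n}{n-1}\widehat{\beta}_{2,\mathrm{ML}} - \bar{\sigma}_{Q_Y}/((n-1)\bar{\sigma}_{q_x})$ and taking expectations, the two copies of $\beta/((n-1)\bar{\sigma}_{q_x})$ cancel and the $\beta_2$ terms combine as $\tfrac{n}{n-1}\beta_2 - \tfrac{1}{n-1}\beta_2 = \beta_2$, as desired.

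There is no genuine obstacle here: the bias-corrected estimators in Definition~\ref{def:unbiased.estimators} were constructed precisely so that these cancellations occur, so the argument is a short verification in each case. The only point meriting a moment of care is that the correction term in $\widehat{\beta}_2$ contains the random quantity $\bar{\sigma}_{Q_Y}$ rather than a known constant, so one must take its expectation through Proposition~\ref{prop:law.mu.sigma} before concluding.
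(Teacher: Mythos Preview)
Your proposal is correct and follows essentially the same approach as the paper: both compute $\EE(\widehat{\sigma^2}_{\mathrm{ML}})$, $\EE(\widehat{\beta}_{\mathrm{ML}})$, $\EE(\widehat{\beta}_{2,\mathrm{ML}})$, and $\EE(\bar{\sigma}_{Q_Y})$ from the distributional results of Propositions~\ref{prop:law.mu.sigma} and~\ref{prop:MLE} (as collected in Remark~\ref{eq:bias}), then substitute into the linear combinations of Definition~\ref{def:unbiased.estimators} and verify the cancellations. Your write-up is in fact slightly more explicit than the paper's about the source of $\EE(\bar{\sigma}_{Q_Y})$, but the argument is the same.
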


The next result complements Proposition~\ref{prop:MLE} and Definition~\ref{def:unbiased.estimators} by presenting the marginal distributions of the pivot quantities $\smash{(\widehat{\beta}_2 - \beta_2) / \widehat{\beta}}$ and $\smash{\widehat{\beta}/\beta}$. It will be leveraged to obtain confidence intervals on $\beta_2$ and $\beta$ in Corollary~\ref{cor:CI} below.

\begin{proposition}[Pivot distributions]\label{prop:pivot.distributions}
Using the above notations, one has
\[
\begin{aligned}
\frac{\widehat{\beta}}{\beta}
&\sim \mathcal{G}\mathrm{amma}\bigg(n-1, \frac{1}{n-1}\bigg), \\[-2mm]
\frac{\widehat{\beta}_2 - \beta_2}{\widehat{\beta}} + \frac{1}{n \bar{\sigma}_{q_x}}
&\sim \mathcal{P}\mathrm{areto}\text{-}\mathcal{T}\mathrm{ype}\text{-}\mathrm{II}\bigg(n-1, \frac{n-1}{n \bar{\sigma}_{q_x}}\bigg).
\end{aligned}
\]
\end{proposition}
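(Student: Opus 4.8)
The plan is to reduce both pivots to two independent building blocks formed from the exponential part of the errors: the minimum $M$ of a family of heterogeneous exponentials (which is $\widehat\beta_{2,\mathrm{ML}}-\beta_2$) and the sum $G$ of the fitted one-sided residuals (which is $n\widehat\beta_{\mathrm{ML}}$). Once the joint law of $(M,G)$ is pinned down, the first assertion is a one-line rescaling and the second is a ratio-of-gammas computation.

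\textbf{Step 1: residual decomposition.} By Proposition~\ref{prop:law.mu.sigma} and the construction of Section~\ref{sec:math}, write $\sigma_{Q_{Y,i}}=\beta_2\sigma_{q_{x,i}}+U_i$ with $U_1,\dots,U_n\overset{\mathrm{iid}}{\sim}\mathcal{E}\mathrm{xp}(\beta)$ in the scale parametrisation, and set $W_i\leqdef U_i/\sigma_{q_{x,i}}$, so the $W_i$ are independent with $W_i\sim\mathcal{E}\mathrm{xp}(\beta/\sigma_{q_{x,i}})$. Let $M\leqdef\min_{1\le i\le n}W_i$ and $J\leqdef\argmin_{1\le i\le n}W_i$. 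Then $\widehat\beta_{2,\mathrm{ML}}=\min_i\sigma_{Q_{Y,i}}/\sigma_{q_{x,i}}=\beta_2+M$ and the fitted residuals are $\widehat U_i\leqdef\sigma_{Q_{Y,i}}-\widehat\beta_{2,\mathrm{ML}}\sigma_{q_{x,i}}=\sigma_{q_{x,i}}(W_i-M)$, with $\widehat U_J=0$. Writing $W_i=M+(W_i-M)$ gives $\sum_{i=1}^n U_i=n\bar\sigma_{q_x}M+G$ where $G\leqdef\sum_{i\ne J}\sigma_{q_{x,i}}(W_i-M)=\sum_{i=1}^n\widehat U_i=n(\bar\sigma_{Q_Y}-\widehat\beta_{2,\mathrm{ML}}\bar\sigma_{q_x})=n\widehat\beta_{\mathrm{ML}}$.

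\textbf{Step 2: joint law of $(M,G)$ — the crux.} I will show $M\sim\mathcal{E}\mathrm{xp}(\beta/(n\bar\sigma_{q_x}))$, $G\sim\mathcal{G}\mathrm{amma}(n-1,\beta)$, and $M\perp G$. Since $\mathsf{P}(M>t)=\prod_i\mathsf{P}(W_i>t)=\exp(-t\sum_i\sigma_{q_{x,i}}/\beta)$, the law of $M$ follows. For the rest, condition on $\{J=j\}$: by the memoryless property of the exponential distribution (a heterogeneous-rate version of the classical independence of exponential spacings), on this event the excesses $(W_i-M)_{i\ne j}$ are independent with $W_i-M\sim\mathcal{E}\mathrm{xp}(\beta/\sigma_{q_{x,i}})$, and the vector of excesses is independent of $M$. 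Hence $\sigma_{q_{x,i}}(W_i-M)\sim\mathcal{E}\mathrm{xp}(\beta)$ independently over $i\ne j$, so $G\mid\{J=j\}\sim\mathcal{G}\mathrm{amma}(n-1,\beta)$ for every $j$; since $G$ is a deterministic function of $(J,(W_i-M)_{i\ne J})$, which is independent of $M$, the claimed joint law follows. This conditioning/memorylessness argument — made rigorous either by a change of variables on the joint density of $(W_1,\dots,W_n)$, summed over the identity of the minimiser, or by a direct survival-function calculation as for $M$ above — is the main obstacle; the remaining steps are algebra. (If the proof of Proposition~\ref{prop:MLE} already records the joint independence of $\widehat\beta_{2,\mathrm{ML}}$ and $\widehat\beta_{\mathrm{ML}}$ alongside their marginals, this step may simply be cited.)

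\textbf{Step 3: pivots in terms of $M$ and $G$.} From $\widehat\beta_{\mathrm{ML}}=G/n$ and Definition~\ref{def:unbiased.estimators}, $\widehat\beta=\tfrac{n}{n-1}\widehat\beta_{\mathrm{ML}}=\tfrac{G}{n-1}$, so $\widehat\beta/\beta=G/((n-1)\beta)\sim\mathcal{G}\mathrm{amma}(n-1,\tfrac{1}{n-1})$ by Step 2, which is the first assertion. For the second, substitute $\widehat\beta_{2,\mathrm{ML}}=\beta_2+M$ and $\bar\sigma_{Q_Y}=\beta_2\bar\sigma_{q_x}+M\bar\sigma_{q_x}+G/n$ (from $\sum_i U_i=n\bar\sigma_{q_x}M+G$) into $\widehat\beta_2=\tfrac{n}{n-1}\widehat\beta_{2,\mathrm{ML}}-\tfrac{\bar\sigma_{Q_Y}}{(n-1)\bar\sigma_{q_x}}$; the $\beta_2$-terms cancel and one gets $\widehat\beta_2-\beta_2=M-\tfrac{G}{n(n-1)\bar\sigma_{q_x}}$. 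Dividing by $\widehat\beta=G/(n-1)$ yields $\tfrac{\widehat\beta_2-\beta_2}{\widehat\beta}=\tfrac{(n-1)M}{G}-\tfrac{1}{n\bar\sigma_{q_x}}$, i.e. $\tfrac{\widehat\beta_2-\beta_2}{\widehat\beta}+\tfrac{1}{n\bar\sigma_{q_x}}=\tfrac{(n-1)M}{G}$.

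\textbf{Step 4: ratio of gammas is Pareto-Type-II.} Write $M=\tfrac{\beta}{n\bar\sigma_{q_x}}E$ and $G=\beta\Gamma$ with $E\sim\mathcal{G}\mathrm{amma}(1,1)$, $\Gamma\sim\mathcal{G}\mathrm{amma}(n-1,1)$ independent (Step 2), so $\tfrac{(n-1)M}{G}=\tfrac{n-1}{n\bar\sigma_{q_x}}\cdot\tfrac{E}{\Gamma}$. A one-line integral gives the density of $E/\Gamma$ as $r\mapsto(n-1)(1+r)^{-n}$ on $(0,\infty)$, i.e. survival $(1+r)^{-(n-1)}$, which is Pareto-Type-II with shape $n-1$ and scale $1$; multiplying by the constant $\tfrac{n-1}{n\bar\sigma_{q_x}}$ multiplies the scale by the same factor. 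Therefore $\tfrac{\widehat\beta_2-\beta_2}{\widehat\beta}+\tfrac{1}{n\bar\sigma_{q_x}}\sim\mathcal{P}\mathrm{areto}\text{-}\mathcal{T}\mathrm{ype}\text{-}\mathrm{II}\bigl(n-1,\tfrac{n-1}{n\bar\sigma_{q_x}}\bigr)$, which completes the proof.
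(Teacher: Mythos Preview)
Your proof is correct and takes a genuinely different route from the paper's. The paper proceeds computationally: it first derives, via a simplex integral (Lemma~\ref{lem:Fred.1}), the explicit joint density of $\bigl(\min_i \theta_i X_i,\tfrac{1}{n}\sum_i X_i\bigr)$ for i.i.d.\ standard exponentials, then pushes it forward through a $2\times 2$ linear map (Corollary~\ref{cor:density.afer.linear.transformation}) to get the joint density of $\bigl((\widehat\beta_2-\beta_2)/\beta,\widehat\beta/\beta\bigr)$, and finally marginalises and computes the ratio density by integration. You instead work structurally: by conditioning on the index $J$ of the minimiser and invoking the memoryless property of heterogeneous exponentials, you show directly that $M=\widehat\beta_{2,\mathrm{ML}}-\beta_2$ and $G=n\widehat\beta_{\mathrm{ML}}$ are \emph{independent} with $M$ exponential and $G$ gamma; the pivots then reduce to a scaling of $G$ and to the ratio $(n-1)M/G$, whose Pareto-Type-II law follows from the standard Gamma/Gamma calculation. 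Your argument is shorter and more probabilistic, and it makes the independence of $\widehat\beta_{2,\mathrm{ML}}$ and $\widehat\beta_{\mathrm{ML}}$ transparent (the paper never states this explicitly for Proposition~\ref{prop:pivot.distributions}, though it uses the same memorylessness idea later in Lemma~\ref{lem:memoryless} for the residuals). The paper's approach, on the other hand, yields the full joint density as a by-product, which it reuses in the proof of Proposition~\ref{prop:CR.new.mean.response}. One small point worth spelling out in your Step~2: the unconditional independence $M\perp G$ requires not only the conditional independence on $\{J=j\}$ but also that the conditional law of $M$ given $J=j$ does not depend on $j$; you gesture at this, and it is true, but a reader might appreciate the one extra line.
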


\begin{remark}\label{rem:hat.beta.2.positive}
Given that the variance of $\smash{\widehat{\beta}_2}$ is asymptotically negligible, Chebyshev's inequality implies that the probability of the unbiased estimator $\smash{\widehat{\beta}_2}$ being negative is also asymptotically negligible. In fact, if $\beta,\beta_2 > 0$, then for any $\e > 0$,
\[
\PP(\widehat{\beta}_2 < \beta_2 - \e) \leq \frac{\beta^2}{n (n-1) \e^2 \bar{\sigma}_{q_x}^2}.
\]
This claim is proved in Section~\ref{sec:proofs}.
\end{remark}

\subsubsection{Confidence intervals and regions}\label{sec:CI}

Given Proposition~\ref{prop:MLE} and Definition~\ref{def:unbiased.estimators}, natural estimators for the variance of $\widehat{\beta}_0$ and $\widehat{\beta}_1$ are
\[
\widehat{\sigma_{\widehat{\beta}_0}^2} \leqdef \frac{\widehat{\sigma^2}}{n} \bigg(1+\frac{\bar{\mu}_{q_x}^2}{w}\bigg), \qquad \widehat{\sigma_{\widehat{\beta}_1}^2} \leqdef \frac{\widehat{\sigma^2}}{nw},
\]
where recall $w = n^{-1} \sum_{i=1}^n \mu_{q_{x,i}}^2 - \bar{\mu}_{q_{x}}^2$. Then, it is easily seen that
\begin{equation}\label{eq:Student}
\frac{\widehat{\beta}_1 - \beta_1}{\widehat{\sigma_{\widehat{\beta}_1}}} \sim \mathcal{T}(n-2), \qquad
\frac{\widehat{\beta}_0 - \beta_0}{\widehat{\sigma_{\widehat{\beta}_0}}} \sim \mathcal{T}(n-2),
\end{equation}
where $\mathcal{T}(n-2)$ denotes Student's $t$ distribution with $n-2$ degrees of freedom.

Combining \eqref{eq:Student} with the results of Propositions~\ref{prop:MLE}~and~\ref{prop:pivot.distributions} allows for the derivation of two-sided confidence intervals for all five parameters $\beta_0$, $\beta_1$, $\sigma^2$, $\beta_2$, and $\beta$, in the model~\eqref{eq:model}, as detailed in the corollary below.

\begin{corollary}\label{cor:CI}
Let $t_{n-2,\alpha}$, $\smash{\chi_{n-2,\alpha}^2}$, $G_{\alpha}$ and $P_{\alpha}$, denote the quantiles at a given significance level $\alpha\in (0,1)$ for the $\mathcal{T}(n-2)$, $\chi_{n-2}^2$, $\mathcal{G}\mathrm{amma}(n-1,1/(n-1))$ and $\mathcal{P}\mathrm{areto}\text{-}\mathcal{T}\mathrm{ype}\text{-}\mathrm{II}(n-1, (1-1/n)/\bar{\sigma}_{q_x})$ distributions, respectively. The following two-sided confidence intervals for $\beta_0$, $\beta_1$, $\sigma^2$, $\beta_2$, and $\beta$, are valid with confidence level $1-\alpha$:
\[
\begin{aligned}
\mathrm{C\, I}_{1-\alpha}(\beta_0)
&= \left[\widehat{\beta}_0 - t_{n-2,1-\alpha/2} \widehat{\sigma_{\widehat{\beta}_0}}, \widehat{\beta}_0 +t_{n-2,1-\alpha/2} \widehat{\sigma_{\widehat{\beta}_0}}\right], \\
\mathrm{C\, I}_{1-\alpha}(\beta_1)
&= \left[\widehat{\beta}_1 - t_{n-2,1-\alpha/2} \widehat{\sigma_{\widehat{\beta}_1}}, \widehat{\beta}_1 +t_{n-2,1-\alpha/2} \widehat{\sigma_{\widehat{\beta}_1}}\right], \\
\mathrm{C\, I}_{1-\alpha}(\sigma^2)
&= \left[(n-2) \widehat{\sigma^2} / \chi_{n-2,1-\alpha/2}^2, (n-2) \widehat{\sigma^2} / \chi_{n-2,\alpha/2}^2\right], \\
\mathrm{C\, I}_{1-\alpha}(\beta_2)
&= \left[\widehat{\beta}_2 - \widehat{\beta} (P_{1-\alpha/2} - 1/(n \bar{\sigma}_{q_x})), \widehat{\beta}_2 - \widehat{\beta} (P_{\alpha/2} - 1/(n \bar{\sigma}_{q_x}))\right], \\
\mathrm{C\, I}_{1-\alpha}(\beta)
&= \left[\widehat{\beta} / G_{1-\alpha/2}, \widehat{\beta} / G_{\alpha/2}\right].
\end{aligned}
\]
\end{corollary}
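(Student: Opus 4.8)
The five confidence intervals are all obtained by a single device: for each parameter, exhibit a pivotal quantity whose exact law is provided by Propositions~\ref{prop:MLE}~and~\ref{prop:pivot.distributions} or by \eqref{eq:Student}, write the probability that this pivot lies between its $\alpha/2$ and $1-\alpha/2$ quantiles as $1-\alpha$, and then solve the resulting double inequality for the parameter. The only thing to watch is the direction of the inequalities each time one divides or multiplies by a quantity of known sign. The plan is to run this recipe through all five cases.

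\emph{Intervals for $\beta_0$ and $\beta_1$, and for $\sigma^2$.} By \eqref{eq:Student}, $(\widehat{\beta}_0-\beta_0)/\widehat{\sigma_{\widehat{\beta}_0}}$ and $(\widehat{\beta}_1-\beta_1)/\widehat{\sigma_{\widehat{\beta}_1}}$ are each $\mathcal{T}(n-2)$-distributed; this is the standard Gaussian-linear-model fact that each $\widehat{\beta}_j-\beta_j$ is centered Gaussian with the variance displayed in Proposition~\ref{prop:MLE}, that $(n-2)\widehat{\sigma^2}/\sigma^2 = (n/\sigma^2)\,\widehat{\sigma^2}_{\mathrm{ML}} \sim \chi^2(n-2)$, and that the two are independent because $(\widehat{\beta}_0,\widehat{\beta}_1)$ depends on the Gaussian vector $(\mu_{Q_{Y,i}})_i$ only through its projection onto the design space while $\widehat{\sigma^2}_{\mathrm{ML}}$ depends only on the orthogonal residual. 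Symmetry of Student's $t$ about $0$ then gives $\PP(-t_{n-2,1-\alpha/2} \le (\widehat{\beta}_j-\beta_j)/\widehat{\sigma_{\widehat{\beta}_j}} \le t_{n-2,1-\alpha/2}) = 1-\alpha$, and isolating $\beta_j$ yields the first two intervals. For $\sigma^2$, Proposition~\ref{prop:MLE} with Definition~\ref{def:unbiased.estimators} gives the pivot $(n-2)\widehat{\sigma^2}/\sigma^2 \sim \chi^2(n-2)$, so $\PP(\chi^2_{n-2,\alpha/2} \le (n-2)\widehat{\sigma^2}/\sigma^2 \le \chi^2_{n-2,1-\alpha/2}) = 1-\alpha$; since $\sigma^2$ and the chi-square quantiles are positive, taking reciprocals reverses the order of the bounds and produces $[(n-2)\widehat{\sigma^2}/\chi^2_{n-2,1-\alpha/2},\,(n-2)\widehat{\sigma^2}/\chi^2_{n-2,\alpha/2}]$.

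\emph{Intervals for $\beta$ and $\beta_2$.} For $\beta$, Proposition~\ref{prop:pivot.distributions} supplies the pivot $\widehat{\beta}/\beta \sim \mathcal{G}\mathrm{amma}(n-1,1/(n-1))$; as $\widehat{\beta}$, $\beta$, $G_{\alpha/2}$, $G_{1-\alpha/2}$ are all positive, inverting $\PP(G_{\alpha/2} \le \widehat{\beta}/\beta \le G_{1-\alpha/2}) = 1-\alpha$ yields $[\widehat{\beta}/G_{1-\alpha/2},\,\widehat{\beta}/G_{\alpha/2}]$. For $\beta_2$, first observe that the Pareto-Type-II scale $(n-1)/(n\bar{\sigma}_{q_x})$ appearing in Proposition~\ref{prop:pivot.distributions} equals $(1-1/n)/\bar{\sigma}_{q_x}$, which is exactly the scale used to define $P_\alpha$ in the statement; hence $T \leqdef (\widehat{\beta}_2-\beta_2)/\widehat{\beta} + 1/(n\bar{\sigma}_{q_x})$ has the parameter-free law whose $\alpha/2$ and $1-\alpha/2$ quantiles are $P_{\alpha/2}$ and $P_{1-\alpha/2}$, so that $\PP(P_{\alpha/2} \le T \le P_{1-\alpha/2}) = 1-\alpha$. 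Subtracting $1/(n\bar{\sigma}_{q_x})$ and then multiplying through by $\widehat{\beta}$ — legitimate because $\widehat{\beta}$ is $\mathcal{G}\mathrm{amma}$-distributed, hence strictly positive almost surely, so the inequalities and the coverage probability are preserved — gives $\widehat{\beta}(P_{\alpha/2}-1/(n\bar{\sigma}_{q_x})) \le \widehat{\beta}_2-\beta_2 \le \widehat{\beta}(P_{1-\alpha/2}-1/(n\bar{\sigma}_{q_x}))$, and solving for $\beta_2$ produces the stated interval.

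All of this is elementary once the distributional results are in hand; the one step deserving genuine care is the $\beta_2$ interval, where the pivot involves the ratio $(\widehat{\beta}_2-\beta_2)/\widehat{\beta}$, so the inversion multiplies a probability inequality by the \emph{random} quantity $\widehat{\beta}$, and one must invoke $\widehat{\beta}>0$ almost surely — together with the bookkeeping check that the two Pareto-Type-II scale parameters coincide — to conclude. A secondary point, if full self-containment is wanted, is the justification of \eqref{eq:Student}, i.e.\ the independence of $(\widehat{\beta}_0,\widehat{\beta}_1)$ and $\widehat{\sigma^2}_{\mathrm{ML}}$, which is the usual orthogonal-decomposition argument for Gaussian linear models applied to the ``data'' $(\mu_{q_{x,i}},\mu_{Q_{Y,i}})$.
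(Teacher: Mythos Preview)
Your proof is correct and follows exactly the approach the paper intends: the corollary is presented there as an immediate consequence of the pivot distributions in \eqref{eq:Student}, Proposition~\ref{prop:MLE}, and Proposition~\ref{prop:pivot.distributions}, and you carry out precisely that pivot-inversion for each parameter, including the necessary sign checks (positivity of $\widehat{\beta}$, reversal under reciprocals) and the bookkeeping that $(n-1)/(n\bar{\sigma}_{q_x}) = (1-1/n)/\bar{\sigma}_{q_x}$. Your write-up is in fact more explicit than the paper's, which simply declares the intervals without walking through the inversions.
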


Now, assume a new pre-treatment quantile function observation $q_{x,n+1}$ is added to the sample and one wants to estimate the corresponding post-treatment mean response $\EE(Q_{Y,n+1})$. Since the pair $(q_{x,n+1},Q_{Y,n+1})$ is assumed to follow model~\eqref{eq:model}, one can write
\[
Q_{Y,n+1} = \beta_0 + \beta_1 \mu_{q_{x,n+1}} + \beta_2 q_{x,n+1}^c + E_{n+1},
\]
where $E_{n+1}\sim \mathcal{QNE}(0,\sigma^2,\beta,0)$. Using Definition~\ref{def:quantile.expectation} to take the expectation of the quantile functions on both sides of the above equation yields
\[
\begin{aligned}
\EE(Q_{Y,n+1})
&= \beta_0 + \beta_1 \mu_{q_{x,n+1}} + \beta_2 (0 + \sigma_{q_{x,n+1}} \Phi^{-1}) + (0 + \beta \Phi^{-1}) \\
&= (\beta_0 + \beta_1 \mu_{q_{x,n+1}}) + (\beta_2 \sigma_{q_{x,n+1}} + \beta) \Phi^{-1} \\
&\equiv \mu_{\EE(Q_{Y,n+1})} + \sigma_{\EE(Q_{Y,n+1})} \Phi^{-1}.
\end{aligned}
\]
Then the new mean response $\EE(Q_{Y,n+1})$ is estimated by
\[
\widehat{Q}_{Y,n+1} \equiv \widehat{\EE}(Q_{Y,n+1}) \equiv \mu_{\widehat{Q}_{Y,n+1}} + \sigma_{\widehat{Q}_{Y,n+1}} \Phi^{-1},
\]
where
\[
\mu_{\widehat{Q}_{Y,n+1}} \leqdef \widehat{\beta}_0 + \widehat{\beta}_1 \mu_{q_{x,n+1}}, \qquad
\sigma_{\widehat{Q}_{Y,n+1}} \leqdef \widehat{\beta}_2 \sigma_{q_{x,n+1}} + \widehat{\beta}.
\]

The next proposition presents an approximate $(1-\alpha)$-level confidence region for the associated vector $(\mu_{\EE(Q_{Y,n+1})},\sigma_{\EE(Q_{Y,n+1})})$, which characterizes the new mean response $\EE(Q_{Y,n+1})$. Because of the bijective isometry $\Delta$ constructed in Section~\ref{sec:math}, this automatically yields an approximate $(1-\alpha)$-level confidence region for $\EE(Q_{Y,n+1})$ in $\mathcal{Q}_1^2$.

\begin{proposition}\label{prop:CR.new.mean.response}
An approximate density function for $\smash{(\mu_{\widehat{Q}_{Y,n+1}},\sigma_{\widehat{Q}_{Y,n+1}})}$ is defined, for all $(s,t)\in \R \times \smash{(\widehat{\beta}_2 \sigma_{q_{x,n+1}},\infty)}$, by
\[
\begin{aligned}
&\widehat{f}_{\widehat{Q}_{Y,n+1}}(s,t)
\equiv \widehat{f}_{\mu_{\widehat{Q}_{Y,n+1}},\sigma_{\widehat{Q}_{Y,n+1}}}(s,t) \\
&\leqdef \frac{1}{\sqrt{2\pi \frac{\widehat{\sigma^2}}{n} \left(1 + \frac{(\mu_{q_{x,n+1}} - \bar{\mu}_{q_x})^2}{w}\right)}} \exp\left(- \frac{(s - (\widehat{\beta}_0 + \mu_{q_{x,n+1}} \widehat{\beta}_1))^2}{2 \frac{\widehat{\sigma^2}}{n} \left(1 + \frac{(\mu_{q_{x,n+1}} - \bar{\mu}_{q_x})^2}{w}\right)}\right) \\
&~\times \frac{\frac{n \bar{\sigma}_{q_x}}{\widehat{\beta} \sigma_{q_{x,n+1}}} \exp\left(\frac{- n \bar{\sigma}_{q_x}}{\widehat{\beta} \sigma_{q_{x,n+1}}} (t - \widehat{\beta}_2 \sigma_{q_{x,n+1}})\right)}{\left(\frac{n}{n-1} \left(1 - \frac{\bar{\sigma}_{q_x}}{\sigma_{q_{x,n+1}}}\right)\right)^{n-1}} \, \overline{\gamma}\left(n-1, n \frac{\left(1 - \frac{\bar{\sigma}_{q_x}}{\sigma_{q_{x,n+1}}}\right)}{\left(1 - \frac{\sigma_{q_{x,n+1}}}{n \bar{\sigma}_{q_x}}\right)} \frac{(t - \widehat{\beta}_2 \sigma_{q_{x,n+1}})}{\widehat{\beta}}\right),
\end{aligned}
\]
(here, $\widehat{\beta}_0$, $\widehat{\beta}_1$, $\widehat{\sigma^2}$, $\widehat{\beta}_2$, and $\widehat{\beta}$, are the observed values of the statistics) with $\overline{\gamma}$ denoting the regularized lower incomplete gamma function
\[
\overline{\gamma}(a,b) = \frac{1}{\Gamma(a)} \int_0^b t^{a-1} e^{-t} \rd t, \quad (a,b)\in (0,\infty) \times \R.
\]
Under the restriction $\smash{\sigma_{q_{x,n+1}} < n \bar{\sigma}_{q_x}}$, and for any significance level $\alpha\in (0,1)$, an approximate $(1-\alpha)$-level confidence region for $(\mu_{\EE(Q_{Y,n+1})},\sigma_{\EE(Q_{Y,n+1})})$ is
\[
\mathrm{CR}_{1-\alpha}(\mu_{\EE(Q_{Y,n+1})},\sigma_{\EE(Q_{Y,n+1})}) = \{(s,t)\in \R \times (\widehat{\beta}_2 \sigma_{q_{x,n+1}},\infty) : \widehat{f}_{\widehat{Q}_{Y,n+1}}(s,t) \geq L_{\alpha}\},
\]
where the threshold $L_{\alpha}\in (0,\infty)$ solves the equation
\[
1 - \alpha = \int_{(s,t)\in \R \times (\widehat{\beta}_2 \sigma_{q_{x,n+1}},\infty) : \widehat{f}_{\widehat{Q}_{Y,n+1}}(s,t) \geq L_{\alpha}} \widehat{f}_{\widehat{Q}_{Y,n+1}}(s,t) \rd s \rd t.
\]
\end{proposition}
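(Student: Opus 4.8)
The plan is to obtain the exact joint law of the estimator $(\mu_{\widehat Q_{Y,n+1}},\sigma_{\widehat Q_{Y,n+1}})$ as a function of the true parameters $(\beta_0,\beta_1,\beta_2,\beta,\sigma^2)$ and then read off $\widehat f$ by substituting the estimators for those parameters. The structural point that makes this tractable is that $\mu_{\widehat Q_{Y,n+1}}=\widehat\beta_0+\widehat\beta_1\mu_{q_{x,n+1}}$ (and likewise $\widehat{\sigma^2}$) is a function of the data only through the random means $\mu_{Q_{Y,1}},\dots,\mu_{Q_{Y,n}}$, whereas $\sigma_{\widehat Q_{Y,n+1}}=\widehat\beta_2\sigma_{q_{x,n+1}}+\widehat\beta$ is a function only of the random scales $\sigma_{Q_{Y,1}},\dots,\sigma_{Q_{Y,n}}$; by Proposition~\ref{prop:law.mu.sigma} and the iid assumption on the errors these two families are independent, so the two coordinates of the estimator are independent and it suffices to treat each marginal separately.

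For the first coordinate I would write $\mu_{\widehat Q_{Y,n+1}}=\bar\mu_{Q_Y}+\widehat\beta_1(\mu_{q_{x,n+1}}-\bar\mu_{q_x})$. Since the $\mu_{Q_{Y,i}}$ are independent $\mathcal{N}(\beta_0+\beta_1\mu_{q_{x,i}},\sigma^2)$ and $\bar\mu_{Q_Y}$ and $\widehat\beta_1$ are uncorrelated linear functionals of them (the usual intercept/slope orthogonality), $\mu_{\widehat Q_{Y,n+1}}$ is Gaussian with mean $\beta_0+\beta_1\mu_{q_{x,n+1}}$ and variance $\Var(\bar\mu_{Q_Y})+(\mu_{q_{x,n+1}}-\bar\mu_{q_x})^2\Var(\widehat\beta_1)=\frac{\sigma^2}{n}\bigl(1+\frac{(\mu_{q_{x,n+1}}-\bar\mu_{q_x})^2}{w}\bigr)$, using $\Var(\widehat\beta_1)=\sigma^2/(nw)$ from Proposition~\ref{prop:MLE}. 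Replacing $\sigma^2$ by $\widehat{\sigma^2}$ and the mean by its realized value $\widehat\beta_0+\mu_{q_{x,n+1}}\widehat\beta_1$ produces the first factor of $\widehat f$.

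The second coordinate is the main computation. Although $\sigma_{\widehat Q_{Y,n+1}}=\widehat\beta_2\sigma_{q_{x,n+1}}+\widehat\beta$, the unbiased estimators $\widehat\beta_2$ and $\widehat\beta$ are dependent, so I would rewrite the quantity through the \emph{independent} statistics $\widehat\beta_{2,\mathrm{ML}}$ and $\widehat\beta_{\mathrm{ML}}$: inserting Definition~\ref{def:unbiased.estimators} and the identity $\bar\sigma_{Q_Y}=\widehat\beta_{\mathrm{ML}}+\widehat\beta_{2,\mathrm{ML}}\bar\sigma_{q_x}$ and simplifying yields $\sigma_{\widehat Q_{Y,n+1}}=\widehat\beta_{2,\mathrm{ML}}\,\sigma_{q_{x,n+1}}+\widehat\beta_{\mathrm{ML}}\,\frac{n\bar\sigma_{q_x}-\sigma_{q_{x,n+1}}}{(n-1)\bar\sigma_{q_x}}$. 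The independence of $\widehat\beta_{2,\mathrm{ML}}=\beta_2+M$, $M\sim\mathcal{E}\mathrm{xp}(\beta/(n\bar\sigma_{q_x}),0)$, and $\widehat\beta_{\mathrm{ML}}\sim\mathcal{G}\mathrm{amma}(n-1,\beta/n)$ — which follows from the memoryless property of the exponential and is what produces the Pareto ratio in Proposition~\ref{prop:pivot.distributions} — then exhibits $\sigma_{\widehat Q_{Y,n+1}}-\beta_2\sigma_{q_{x,n+1}}$ as the sum of an independent scaled shifted exponential with rate $\lambda_1=\frac{n\bar\sigma_{q_x}}{\beta\sigma_{q_{x,n+1}}}$ and a scaled gamma with shape $n-1$ and rate $\lambda_2=\frac{n(n-1)\bar\sigma_{q_x}}{\beta(n\bar\sigma_{q_x}-\sigma_{q_{x,n+1}})}$. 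Computing the convolution $\int_0^z\lambda_1e^{-\lambda_1 u}\frac{\lambda_2^{n-1}}{(n-2)!}(z-u)^{n-2}e^{-\lambda_2(z-u)}\,\rd u$ and a routine change of variables bring it to $\lambda_1e^{-\lambda_1 z}\bigl(\frac{\lambda_2}{\lambda_2-\lambda_1}\bigr)^{n-1}\overline{\gamma}\bigl(n-1,(\lambda_2-\lambda_1)z\bigr)$; verifying the identities $\frac{\lambda_2}{\lambda_2-\lambda_1}=\frac{n-1}{n}\cdot\frac{\sigma_{q_{x,n+1}}}{\sigma_{q_{x,n+1}}-\bar\sigma_{q_x}}$ and $\lambda_2-\lambda_1=\frac{n}{\beta}\cdot\frac{1-\bar\sigma_{q_x}/\sigma_{q_{x,n+1}}}{1-\sigma_{q_{x,n+1}}/(n\bar\sigma_{q_x})}$ then matches the second factor of $\widehat f$ after substituting $\beta\to\widehat\beta$ and $\beta_2\sigma_{q_{x,n+1}}\to\widehat\beta_2\sigma_{q_{x,n+1}}$. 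The hypothesis $\sigma_{q_{x,n+1}}<n\bar\sigma_{q_x}$ is exactly what keeps $\lambda_2>0$, so that the support is the claimed half-line; the formula holds for both signs of $\lambda_2-\lambda_1$ since $\overline{\gamma}(n-1,\cdot)$ is defined on all of $\R$, and the borderline $\sigma_{q_{x,n+1}}=\bar\sigma_{q_x}$ is a removable singularity giving the $\mathcal{G}\mathrm{amma}(n,\beta/n)$ density. Since $\widehat f$ is, for each fixed value of the statistics, a product of two honest densities, it integrates to one; this is the first assertion.

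For the confidence region, $\{(s,t):\widehat f(s,t)\ge L_\alpha\}$ with $L_\alpha$ solving $\int_{\{\widehat f\ge L_\alpha\}}\widehat f=1-\alpha$ is the level-$(1-\alpha)$ highest-density region of $\widehat f$, hence well-defined. To argue it covers $(\mu_{\EE(Q_{Y,n+1})},\sigma_{\EE(Q_{Y,n+1})})=(\beta_0+\beta_1\mu_{q_{x,n+1}},\,\beta_2\sigma_{q_{x,n+1}}+\beta)$ with probability $\approx 1-\alpha$, I would pass to the affine coordinates $u=(s-\widehat\mu)/\widehat\sigma_\mu$, $w=(t-\widehat\beta_2\sigma_{q_{x,n+1}})/\widehat\beta$, where $\widehat\mu$, $\widehat\sigma_\mu^2$ are the plug-in mean and variance of the first factor: under this map $\{\widehat f\ge L_\alpha\}$ becomes a \emph{deterministic} region and $\widehat f$ becomes a fixed product density (standard normal in $u$ times the convolution density with $\beta=1$ in $w$). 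The coverage probability is then the mass this fixed region receives under the joint law of the two \emph{independent} statistics $(\mu_{\widehat Q_{Y,n+1}}-(\beta_0+\beta_1\mu_{q_{x,n+1}}))/\widehat\sigma_\mu$, which is $\mathcal{T}(n-2)$ by the same reasoning as \eqref{eq:Student}, and $(\beta_2\sigma_{q_{x,n+1}}+\beta-\widehat\beta_2\sigma_{q_{x,n+1}})/\widehat\beta$, which is a fixed function of the $\mathcal{G}\mathrm{amma}$ and $\mathcal{P}\mathrm{areto}$ pivots of Proposition~\ref{prop:pivot.distributions}. Since the region was tuned to mass $1-\alpha$ under the \emph{nominal} fixed product density, the coverage error equals the gap between that nominal density and the true joint pivot law, localized to $\mathcal{T}(n-2)$ versus $\mathcal{N}(0,1)$ in the first coordinate and to the Pareto/Gamma pivot versus the convolution density in the second. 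Controlling this gap, which vanishes as $n\to\infty$ but not for fixed $n$, is the main obstacle and the precise sense in which the region is only \emph{approximate}.
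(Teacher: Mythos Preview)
Your argument is correct. For the first coordinate you do essentially what the paper does (you use the orthogonality of $\bar\mu_{Q_Y}$ and $\widehat\beta_1$, the paper instead writes out $\Var(\widehat\beta_0)$, $\Cov(\widehat\beta_0,\widehat\beta_1)$, $\Var(\widehat\beta_1)$ and combines them, which is the same computation).

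For the second coordinate your route is genuinely different. The paper does \emph{not} pass through the ML estimators: it takes the joint density of $((\widehat\beta_2-\beta_2)/\beta,\widehat\beta/\beta)$ already obtained in the proof of Proposition~\ref{prop:pivot.distributions} (via Lemma~\ref{lem:Fred.1} and Corollary~\ref{cor:density.afer.linear.transformation}) and integrates along the line $\sigma_{q_{x,n+1}}u+v=t$ to extract the marginal of the linear combination. You instead rewrite $\sigma_{\widehat Q_{Y,n+1}}$ in terms of $\widehat\beta_{2,\mathrm{ML}}$ and $\widehat\beta_{\mathrm{ML}}$, invoke their independence, and compute a textbook exponential--gamma convolution. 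The independence claim is right (it is exactly the statement that the paper's joint density factorizes on a product support once one undoes the linear map back to $(\widehat\beta_{2,\mathrm{ML}}-\beta_2,\widehat\beta_{\mathrm{ML}})$, and it follows from the memoryless argument behind Lemma~\ref{lem:memoryless}); your parenthetical that this is ``what produces the Pareto ratio'' is morally correct though not literally how the paper argues it. Your route buys a cleaner calculation with two named distributions; the paper's route buys reuse of an already-derived joint density without having to state the independence separately. Finally, you sketch a pivoting heuristic for the approximate coverage of the HDR region, which the paper's proof does not address at all (it stops at the definition of $\widehat f$), so on that point you go further.
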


The approximate density function $\smash{\widehat{f}_{\widehat{Q}_{Y,n+1}}}$ defining the confidence region for the vector $\smash{(\mu_{\EE(Q_{Y,n+1})},\sigma_{\EE(Q_{Y,n+1})})}$ in Proposition~\ref{prop:CR.new.mean.response} is illustrated in Figure~\ref{fig:CR.new.mean.response} below. It was verified numerically that $L_{0.01} \approx 0.000033$, $L_{0.05} \approx 0.000164$, and $L_{0.10} \approx 0.000328$.

\begin{figure}[!ht]
\centering
\includegraphics[width=0.80\textwidth]{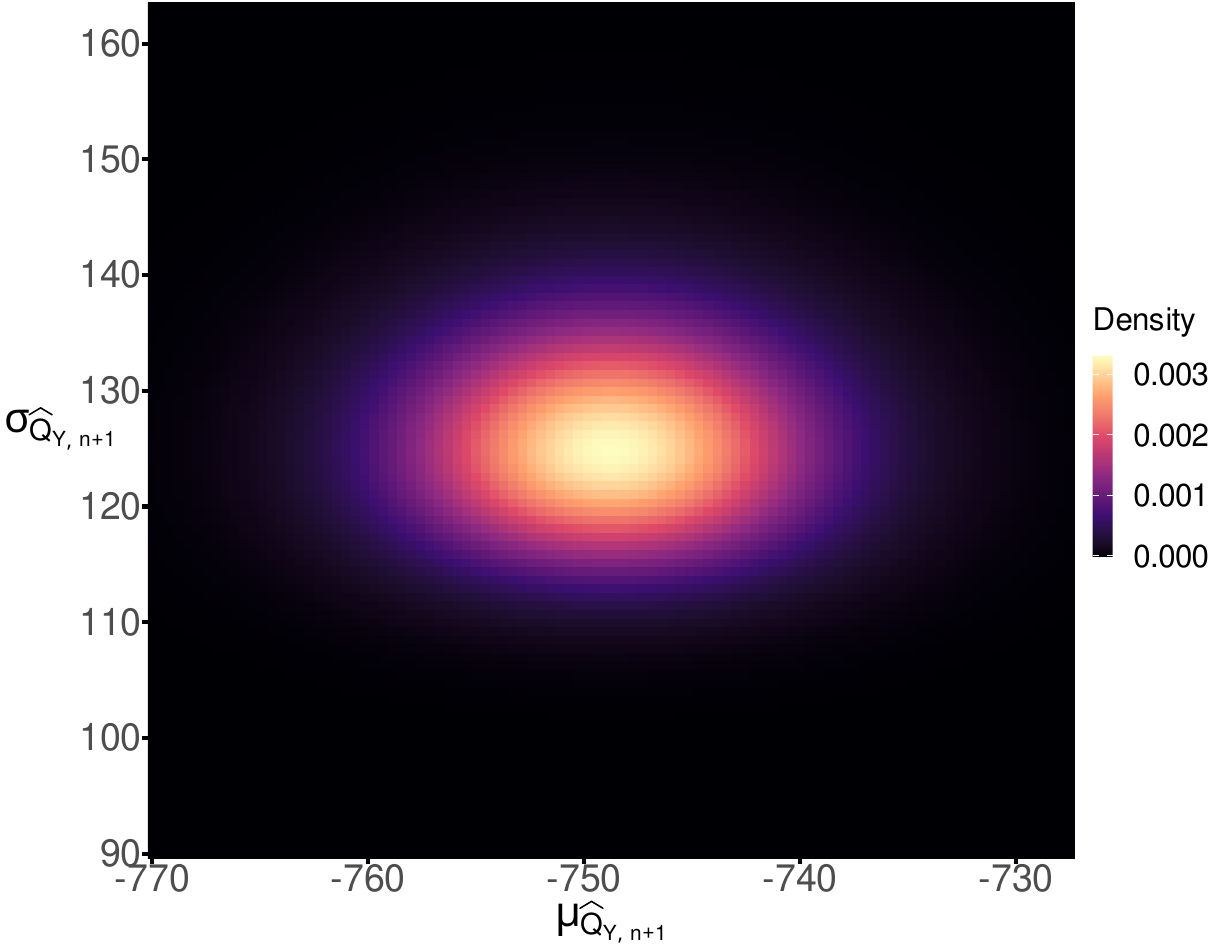}
\caption{Density plot for the approximate density function $\smash{\widehat{f}_{\widehat{Q}_{Y,n+1}}}$, using the observations of the lung dataset with arbitrarily chosen mean $\mu_{q_{x,n+1}} = -750$ and scale $\sigma_{q_{x,n+1}} = 120$. An interactive 3d widget is available at \href{https://biostatisticien.eu/Qlm/fig-3.3.html}{biostatisticien.eu/Qlm/fig-3.3.html}.}
\label{fig:CR.new.mean.response}
\end{figure}

\subsection{Residual quantile functions}\label{sec:residuals}

Using Definition~\ref{def:quantile.expectation} to take the expectation of the quantile functions on both sides of \eqref{eq:model.reformulation} yields
\[
\begin{aligned}
\EE(Q_{Y,i})
= (\beta_0 + \beta_1 \mu_{q_{x,i}} + 0) + (\beta_2 \sigma_{q_{x,i}} + \beta) \Phi^{-1}
\equiv \mu_{\EE(Q_{Y,i})} + \sigma_{\EE(Q_{Y,i})} \Phi^{-1}.
\end{aligned}
\]
The model-adjusted (i.e., fitted) value for this last expectation is thus
\[
\widehat{Q}_{Y,i} \equiv \widehat{\EE}(Q_{Y,i}) \equiv \mu_{\widehat{Q}_{Y,i}} + \sigma_{\widehat{Q}_{Y,i}} \Phi^{-1},
\]
\vspace{-2mm}
where
\[
\mu_{\widehat{Q}_{Y,i}} \leqdef \widehat{\beta}_0 + \widehat{\beta}_1 \mu_{q_{x,i}}, \qquad
\sigma_{\widehat{Q}_{Y,i}} \leqdef \widehat{\beta}_2 \sigma_{q_{x,i}} + \widehat{\beta}.
\]

It follows that the residual quantile functions $\widehat{E}_1,\ldots,\widehat{E}_n$ of the simple linear regression model~\eqref{eq:model} on quantile functions are defined, for every $i\in \{1,\ldots,n\}$, by
\[
\begin{aligned}
\widehat{E}_i
&\leqdef Q_{Y,i} - \widehat{Q}_{Y,i} + \widehat{\beta} \Phi^{-1} \equiv \mu_{\widehat{E}_i} + \sigma_{\widehat{E}_i} \Phi^{-1},
\end{aligned}
\]
where
\begin{equation}\label{eq:mu.sigma.residuals}
\begin{aligned}
\mu_{\widehat{E}_i}
&= \mu_{Q_{Y,i}} - \mu_{\widehat{Q}_{Y,i}} = (\beta_0 + \beta_1 \mu_{q_{x,i}} +\mu_{E_i}) - (\widehat{\beta}_0 + \widehat{\beta}_1 \mu_{q_{x,i}}), \\ \sigma_{\widehat{E}_i}
&= \sigma_{Q_{Y,i}} - \sigma_{\widehat{Q}_{Y,i}} + \widehat{\beta} = (\beta_2 \sigma_{q_{x,i}} + \sigma_{E_i}) - \widehat{\beta}_2 \sigma_{q_{x,i}}.
\end{aligned}
\end{equation}
Since $\widehat{\beta}_0$, $\widehat{\beta}_1$, and $\widehat{\beta}_2$, are unbiased estimators, it follows that $\smash{(\mu_{\widehat{E}_i},\sigma_{\widehat{E}_i})}$ is an unbiased estimator of $(\mu_{E_i},\sigma_{E_i})$. The distribution of $\smash{(\mu_{\widehat{E}_i},\sigma_{\widehat{E}_i})}$ is presented in the next proposition.

\begin{proposition}\label{prop:residuals}
For every $i\in \{1,\ldots,n\}$, the random variables $\mu_{\widehat{E}_i}$ and $\sigma_{\widehat{E}_i}$ are independent. Furthermore,
\vspace{-1mm}
\[
\mu_{\widehat{E}_i}\sim \mathcal{N}\left(0, \sigma^2 \xi_n\right), \quad \xi_n = \left(1 - \frac{1}{n} - \frac{(\mu_{q_{x,i}} - \bar{\mu}_{q_x})^2}{n w}\right),
\]
where recall $w = n^{-1} \sum_{i=1}^n \mu_{q_{x,i}}^2 - \bar{\mu}_{q_{x}}^2$, and the density function of $\sigma_{\widehat{E}_i}$ is, for all $t\in (0,\infty)$,
\[
f_{\sigma_{\widehat{E}_i}}(t) = \frac{\sigma_{q_{x,i}}}{n \bar{\sigma}_{q_x}} f_{\mathcal{G}\mathrm{amma}(n-1,\nu_{i,2})}(t) + \left(1 - \frac{\sigma_{q_{x,i}}}{n \bar{\sigma}_{q_x}}\right) \frac{e^{-\nu_{i,1}^{-1}t}}{\nu_{i,1}} \left(\frac{\theta}{\nu_{i,2}}\right)^{n-2} F_{\mathcal{G}\mathrm{amma}(n-2, \theta)}(t),
\]
where
\vspace{-2mm}
\[
\nu_{i,1} \leqdef \beta + \nu_{i,2}, \qquad \nu_{i,2} \leqdef \frac{\beta \sigma_{q_{x,i}}}{n(n-1) \bar{\sigma}_{q_x}}, \qquad \theta \leqdef \left(\frac{1}{\nu_{i,2}} - \frac{1}{\nu_{i,1}}\right)^{-1}.
\]
An approximate density function for $(\mu_{\widehat{E}_i},\sigma_{\widehat{E}_i})$ is defined, for all $(s,t)\in \R \times (0,\infty)$, by
\[
\begin{aligned}
&\widehat{f}_{\widehat{E}_i}(s,t)
\equiv \widehat{f}_{\mu_{\widehat{E}_i},\sigma_{\widehat{E}_i}}(s,t) \leqdef \frac{1}{\sqrt{2\pi \widehat{\sigma^2} \xi_n}} \exp\left(- \frac{s^2}{2 \widehat{\sigma^2} \xi_n}\right) \\[-2mm]
&\quad\times \left[\frac{\sigma_{q_{x,i}}}{n \bar{\sigma}_{q_x}} f_{\mathcal{G}\mathrm{amma}(n-1,\widehat{\nu}_{i,2})}(t) + \left(1 - \frac{\sigma_{q_{x,i}}}{n \bar{\sigma}_{q_x}}\right) \frac{e^{-\widehat{\nu}_{i,1}^{-1}t}}{\widehat{\nu}_{i,1}} \left(\frac{\widehat{\theta}}{\widehat{\nu}_{i,2}}\right)^{n-2} F_{\mathcal{G}\mathrm{amma}(n-2, \widehat{\theta})}(t)\right],
\end{aligned}
\]
(here, $\widehat{\sigma^2}$ and $\widehat{\beta}$ are the observed values of the statistics) where
\[
\widehat{\nu}_{i,1} \leqdef \widehat{\beta} + \widehat{\nu}_{i,2}, \qquad \widehat{\nu}_{i,2} \leqdef \frac{\widehat{\beta} \sigma_{q_{x,i}}}{n(n-1) \bar{\sigma}_{q_x}}, \qquad \widehat{\theta} \leqdef \left(\frac{1}{\widehat{\nu}_{i,2}} - \frac{1}{\widehat{\nu}_{i,1}}\right)^{-1}.
\]
See Figure~\ref{fig:CR.residuals} for an illustration of the density plot of $\smash{\widehat{f}_{\widehat{E}_{10}}}$.
\end{proposition}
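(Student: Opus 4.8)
The statement bundles four claims---the independence of $\mu_{\widehat E_i}$ and $\sigma_{\widehat E_i}$, the exact law of $\mu_{\widehat E_i}$, the exact law of $\sigma_{\widehat E_i}$, and the plug-in joint density---and I would treat them in that order, the third being by far the substantial one. For the independence, observe via \eqref{eq:model.reformulation} and Proposition~\ref{prop:law.mu.sigma} that $(\mu_{Q_{Y,1}},\dots,\mu_{Q_{Y,n}})$ is a deterministic translate of $(\mu_{E_1},\dots,\mu_{E_n})$ while $(\sigma_{Q_{Y,1}},\dots,\sigma_{Q_{Y,n}})$ is a deterministic translate of $(\sigma_{E_1},\dots,\sigma_{E_n})$; these two blocks are independent because the pairs $(\mu_{E_j},\sigma_{E_j})$ are iid with independent coordinates, and \eqref{eq:mu.sigma.residuals} exhibits $\mu_{\widehat E_i}$ as a function of the first block and $\sigma_{\widehat E_i}$ as a function of the second, whence $\mu_{\widehat E_i}\perp\sigma_{\widehat E_i}$. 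For the law of $\mu_{\widehat E_i}=\mu_{Q_{Y,i}}-\widehat\beta_0-\widehat\beta_1\mu_{q_{x,i}}$, note this is precisely the $i$th ordinary least-squares residual in the regression of $\mu_{Q_{Y,j}}$ on $\mu_{q_{x,j}}$ with iid $\mathcal{N}(0,\sigma^2)$ errors $\mu_{E_j}$: it is therefore Gaussian, centered (the estimators being unbiased), with the textbook variance $\sigma^2(1-h_{ii})$, where the leverage $h_{ii}=n^{-1}+(\mu_{q_{x,i}}-\bar\mu_{q_x})^2/\sum_j(\mu_{q_{x,j}}-\bar\mu_{q_x})^2$; recognizing $\sum_j(\mu_{q_{x,j}}-\bar\mu_{q_x})^2=nw$ produces $\xi_n$.

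\textbf{Reduction for $\sigma_{\widehat E_i}$.} Set $V_j\leqdef\sigma_{E_j}/\sigma_{q_{x,j}}$ (independent, with $V_j\sim\mathcal{E}\mathrm{xp}$ of rate $\sigma_{q_{x,j}}/\beta$), $M\leqdef\min_j V_j$, and $K\leqdef\argmin_j V_j$. From $\sigma_{Q_{Y,j}}=\beta_2\sigma_{q_{x,j}}+\sigma_{E_j}$ one gets $\widehat\beta_{2,\mathrm{ML}}=\beta_2+M$ (Proposition~\ref{prop:MLE}), and substituting this together with $\bar\sigma_{Q_Y}=\beta_2\bar\sigma_{q_x}+n^{-1}\sum_j\sigma_{E_j}$ into Definition~\ref{def:unbiased.estimators} simplifies to $\widehat\beta_2-\beta_2=(n-1)^{-1}\bigl(nM-(n\bar\sigma_{q_x})^{-1}\sum_j\sigma_{E_j}\bigr)$; plugging this into $\sigma_{\widehat E_i}=\sigma_{E_i}-(\widehat\beta_2-\beta_2)\sigma_{q_{x,i}}$ from \eqref{eq:mu.sigma.residuals} expresses $\sigma_{\widehat E_i}$ through $M$ and the $\sigma_{E_j}$'s alone. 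I would then condition on $K$: the competing-exponentials formula gives $\PP(K=i)=\sigma_{q_{x,i}}/(n\bar\sigma_{q_x})$, exactly the first mixture weight; and by the memoryless property, conditionally on $\{K=k\}$ the excesses $W_j\leqdef V_j-M$ for $j\neq k$ are independent, $W_j\sim\mathcal{E}\mathrm{xp}$ of rate $\sigma_{q_{x,j}}/\beta$, independent of $M$, while $V_k=M$.

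\textbf{Completing the law of $\sigma_{\widehat E_i}$ and the joint density.} Substituting $\sigma_{E_j}=\sigma_{q_{x,j}}(M+W_j)$ for $j\neq k$ and $\sigma_{E_k}=\sigma_{q_{x,k}}M$, the coefficient of $M$ in $\sigma_{\widehat E_i}$ collapses to $\sigma_{q_{x,i}}\bigl(1-\tfrac{n}{n-1}+\tfrac{1}{n-1}\bigr)=0$, in both cases $K=i$ and $K\neq i$; this cancellation is the crux of the matter, and it leaves $\sigma_{\widehat E_i}$ as a linear combination of the iid variables $\sigma_{q_{x,j}}W_j\sim\mathcal{E}\mathrm{xp}$ of scale $\beta$. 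When $K=i$, this equals $\tfrac{\nu_{i,2}}{\beta}\sum_{j\neq i}\sigma_{q_{x,j}}W_j\sim\mathcal{G}\mathrm{amma}(n-1,\nu_{i,2})$; when $K=k\neq i$, it equals $\tfrac{\nu_{i,1}}{\beta}\sigma_{q_{x,i}}W_i+\tfrac{\nu_{i,2}}{\beta}\sum_{j\neq i,k}\sigma_{q_{x,j}}W_j$, an independent sum of a $\mathcal{E}\mathrm{xp}(\nu_{i,1})$ variable and a $\mathcal{G}\mathrm{amma}(n-2,\nu_{i,2})$ variable whose law does not depend on $k$. Convolving those two densities---an elementary integral reducing, via $\theta=(1/\nu_{i,2}-1/\nu_{i,1})^{-1}$ and the obvious substitution, to a regularized lower incomplete gamma---yields $\tfrac{e^{-\nu_{i,1}^{-1}t}}{\nu_{i,1}}(\theta/\nu_{i,2})^{n-2}\overline\gamma(n-2,t/\theta)$, and $\overline\gamma(n-2,t/\theta)=F_{\mathcal{G}\mathrm{amma}(n-2,\theta)}(t)$. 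Weighting the two cases by $\PP(K=i)$ and $1-\PP(K=i)$ gives the displayed mixture; positivity of the $W_j$'s in either case confirms the support $(0,\infty)$. Finally, by the independence proved above the exact joint density of $(\mu_{\widehat E_i},\sigma_{\widehat E_i})$ is the product of the two marginals, and replacing $\sigma^2$ and $\beta$ by $\widehat{\sigma^2}$ and $\widehat\beta$ (the latter entering through $\widehat\nu_{i,1},\widehat\nu_{i,2},\widehat\theta$) yields the stated plug-in approximation $\widehat f_{\widehat E_i}$.

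\textbf{The main difficulty.} Everything of substance lives in the law of $\sigma_{\widehat E_i}$: one must recognize that $\widehat\beta_2$ is driven by a minimum of heterogeneous exponentials, organize the proof by conditioning on which patient attains that minimum (so that the competing-risks probabilities surface as the mixture weights), carry out the memoryless decomposition without slip, verify the vanishing of the $M$-coefficient, and perform the exponential--gamma convolution with its incomplete-gamma bookkeeping. The remaining three claims are classical linear-model residual theory plus a routine plug-in.
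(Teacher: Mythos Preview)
Your proposal is correct and follows essentially the same route as the paper: condition on which index attains $\min_j(\sigma_{E_j}/\sigma_{q_{x,j}})$, invoke the memoryless property (the paper isolates this as a separate lemma) to replace the excesses by fresh exponentials, exploit the cancellation $b_i+\sum_j c_{i,j}=0$ to eliminate the minimum, and then convolve an exponential with a gamma to obtain the second mixture component. Your leverage-based derivation of $\Var(\mu_{\widehat E_i})=\sigma^2(1-h_{ii})$ is slightly more economical than the paper's direct square-expansion, but otherwise the two arguments coincide.
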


\begin{figure}[!ht]
\centering
\includegraphics[width=0.80\textwidth]{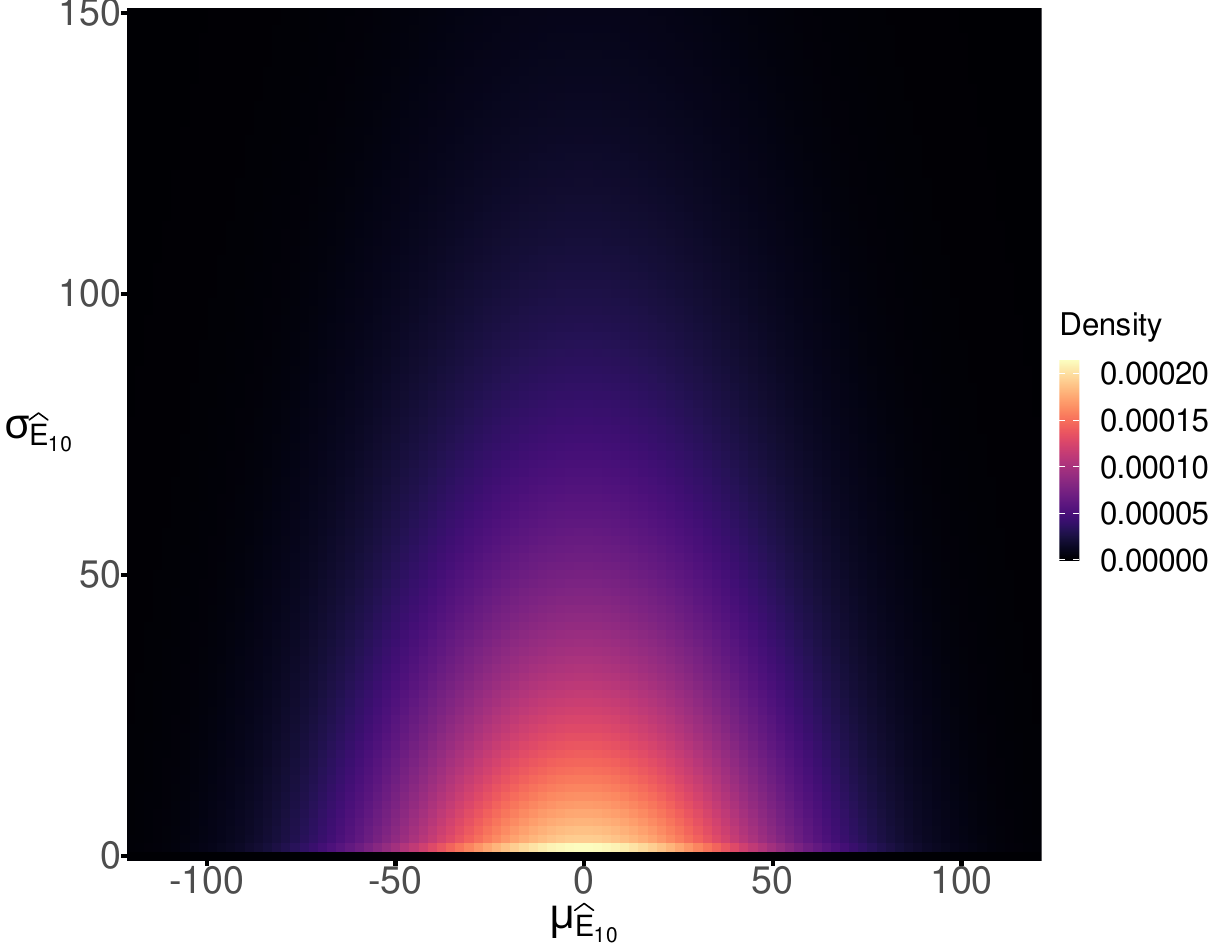}
\caption{Density plot for the approximate density function $\widehat{f}_{\widehat{E}_{10}}$, using the observations of the lung dataset. An interactive 3d widget is available at \href{https://biostatisticien.eu/Qlm/fig-3.4.html}{biostatisticien.eu/Qlm/fig-3.4.html}.}
\label{fig:CR.residuals}
\end{figure}

\section{Application to assessing treatment response in asthmatic patients}\label{sec:application}

The code and dataset used to produce the results in this section are available online in the GitHub repository of \citet{BeclinMicheauxMolinariOuimet_github_2024}, ensuring full reproducibility of the method.

This study aims to analyze segmented expiration-phase lung CT scans from 44 asthmatic patients treated with Benralizumab. The general hypothesis is that Benralizumab treatment is capable of reducing air-trapping and partially reversing bronchial remodeling as detected on CT scans.

The initial step in the analysis involves creating the $n=44$ pairs of (pre, post)-treatment Gaussian quantile functions $\{(q_{x,i}, q_{y,i})\}_{i=1}^{44}$ as outlined in Section~\ref{sec:raw.to.quantile}, where
\[
q_{x,i} = \mu_{q_{x,i}} + \sigma_{q_{x,i}}
\Phi^{-1}, \qquad q_{y,i} = \mu_{q_{y,i}} + \sigma_{q_{y,i}}
\Phi^{-1}.
\]
Recall that for each patient $i\in \{1,\ldots,44\}$, the (pre, post)-treatment CT scans are segmented, the empirical quantile functions $(\widehat{q}_{x,i},\widehat{q}_{y,i})$ are calculated using the HU values of the segmented images, and then each quantile function is orthogonally projected onto the space $\mathcal{Q}_1^2$ of Gaussian quantile functions to obtain $(q_{x,i}, q_{y,i})$. The results are displayed in Table~\ref{tab:Gaussian.quantile.functions} below for convenience.

\begin{table}[ht]
\caption{Parameters of the Gaussian quantile functions for each patient before and after treatment.}
\label{tab:Gaussian.quantile.functions}
\centering
\setlength{\tabcolsep}{2pt} 
\newcommand{\colskipextra}{0.25cm} 
\renewcommand{\arraystretch}{1} 
\begin{tabular}{|c|D{.}{.}{4.4}|D{.}{.}{3.4}|D{.}{.}{4.4}|D{.}{.}{3.4}||c|D{.}{.}{4.4}|D{.}{.}{3.4}|D{.}{.}{4.4}|D{.}{.}{3.4}|}
\hline
\rule{0pt}{2.3ex} 
\!\raisebox{-0.5mm}{Patient} & \multicolumn{2}{@{\hskip \colskipextra}c|}{\!\!\!Before treatment} & \multicolumn{2}{@{\hskip \colskipextra}c||}{\!\!\!\!After treatment} & \raisebox{-0.5mm}{Patient} & \multicolumn{2}{@{\hskip \colskipextra}c|}{\!\!\!Before treatment} & \multicolumn{2}{@{\hskip \colskipextra}c|}{\!\!\!\!After treatment} \\[0.2mm]
\hline
\raisebox{-0.5mm}{$i$} & \multicolumn{1}{c}{$\mu_{q_{x,i}}$} \vline & \multicolumn{1}{c}{$\sigma_{q_{x,i}}$} \vline & \multicolumn{1}{c}{$\mu_{q_{y,i}}$} \vline & \multicolumn{1}{c||}{$\sigma_{q_{y,i}}$} &
\raisebox{-0.5mm}{$i$} & \multicolumn{1}{c}{$\mu_{q_{x,i}}$} \vline & \multicolumn{1}{c}{$\sigma_{q_{x,i}}$} \vline & \multicolumn{1}{c}{$\mu_{q_{y,i}}$} \vline & \multicolumn{1}{c|}{$\sigma_{q_{y,i}}$} \\[0.4mm]
\hline
\rule{0pt}{2ex} 
\!\!1 & -691.4931 & 140.8254 & -739.0630 & 141.1091 & 23 & -797.1403 & 114.8312 & -801.1290 & 114.5843 \\
2 & -678.9295 & 132.6967 & -685.7835 & 169.9798 & 24 & -769.5303 & 117.6721 & -742.4835 & 125.0238 \\
3 & -708.7794 & 132.5881 & -698.7831 & 140.4667 & 25 & -729.5424 & 126.2841 & -761.4966 & 123.0142 \\
4 & -711.9193 & 110.9741 & -742.1448 & 109.8616 & 26 & -769.6909 & 156.7329 & -810.1636 & 132.2282 \\
5 & -622.1860 & 126.5870 & -603.8744 & 132.1156 & 27 & -675.9667 & 120.6983 & -661.6918 & 121.5206 \\
6 & -740.1834 & 125.1759 & -675.4633 & 139.0663 & 28 & -725.9413 & 127.7271 & -746.8861 & 124.6374 \\
7 & -734.4035 & 115.5035 & -750.5532 & 157.8889 & 29 & -691.2101 & 129.3730 & -687.5860 & 123.6381 \\
8 & -773.4940 & 133.1161 & -742.1335 & 167.1660 & 30 & -635.2004 & 160.1849 & -592.4990 & 124.1066 \\
9 & -640.7417 & 121.3690 & -534.6137 & 125.0477 & 31 & -860.3746 & 90.8649 & -845.2094 & 120.2888 \\
10 & -767.5906 & 121.9286 & -742.6850 & 122.7488 & 32 & -730.0569 & 156.3903 & -702.8203 & 168.2426 \\
11 & -671.7848 & 142.4396 & -814.0957 & 104.1052 & 33 & -724.0669 & 142.6923 & -735.2466 & 147.5985 \\
12 & -698.9207 & 129.0829 & -706.4403 & 113.1154 & 34 & -751.6976 & 154.1182 & -736.8656 & 150.1756 \\
13 & -840.6441 & 115.6835 & -831.9079 & 122.9021 & 35 & -644.5101 & 146.0220 & -666.7298 & 149.3380 \\
14 & -780.6131 & 121.8097 & -732.3012 & 144.7658 & 36 & -676.1773 & 120.8362 & -657.8814 & 132.1409 \\
15 & -577.8501 & 136.5012 & -591.8307 & 199.5379 & 37 & -672.8262 & 144.2112 & -704.3792 & 131.1094 \\
16 & -676.0956 & 115.6735 & -682.8546 & 115.6614 & 38 & -755.1644 & 134.9401 & -746.4215 & 139.3344 \\
17 & -681.6340 & 162.6760 & -704.4922 & 159.7900 & 39 & -750.2510 & 108.4672 & -743.3522 & 107.7744 \\
18 & -854.7864 & 118.6292 & -869.9233 & 113.3971 & 40 & -812.7209 & 112.6021 & -788.3978 & 123.8007 \\
19 & -705.9140 & 126.1935 & -736.1574 & 115.9091 & 41 & -680.6981 & 146.7967 & -714.1882 & 136.9068 \\
20 & -774.8398 & 116.1520 & -785.5845 & 111.3199 & 42 & -708.0478 & 130.8564 & -715.3282 & 129.8226 \\
21 & -787.0903 & 107.2140 & -736.5424 & 141.5978 & 43 & -791.2589 & 99.2491 & -750.1276 & 107.5248 \\
22 & -706.0180 & 156.3191 & -795.2887 & 101.1520 & 44 & -721.4120 & 126.1214 & -758.5488 & 114.5121 \\
\hline
\end{tabular}
\end{table}

After transforming the data, the next step involves applying the newly developed simple linear regression method from Section~\ref{sec:slr} to the $n=44$ pairs of Gaussian quantile functions. This method is implemented in the \texttt{R} function \texttt{qlm()}, which offers features analogous to those of the classical \texttt{lm()} function, as demonstrated below.

\begin{quote}
\begin{verbatim}
> res.qlm <- qlm(qy ~ qx, data = list(qx = qx, qy = qy))
> coef(res.qlm)
    beta0hat     beta1hat     beta2hat    sigma2hat      betahat
 -85.9158466    0.8837675    0.6383911 1616.8402929   49.3636853
> confint(res.qlm)
                 2.5 %       97.5 %
beta0hat  -232.1414243   60.3097310
beta1hat     0.6827804    1.0847547
beta2hat     0.6135934    0.6468666
sigma2hat 1099.2369447 2611.9533531
betahat     37.3889592   68.2096823
> summary(res.qlm)

Call:
qlm(formula = qy ~ qx, data = list(qx = qx, qy = qy))


Coefficients:
          Estimate Std. Error t value Pr(>|t|)
beta0hat -85.91585   72.45771  -1.186    0.242
beta1hat   0.88377    0.09959   8.874  3.5e-11 ***
           Estimate Std. Error stat value P-val (H0: par >= 1)
beta2hat  6.384e-01  8.796e-03     -0.007               <2e-16 ***
sigma2hat 1.617e+03  3.528e+02  67907.292                    1
betahat   4.936e+01  7.528e+00     49.364                    1
---
Signif. codes:  0 ‘***’ 0.001 ‘**’ 0.01 ‘*’ 0.05 ‘.’ 0.1 ‘ ’ 1
\end{verbatim}

\begin{verbatim}
> res.qlm$sigmaqxbar
129.0184
\end{verbatim}

\begin{verbatim}
> res.qlm$w
3704.718
\end{verbatim}

\begin{verbatim}
> res.qlm$muqxbar
-724.9863
\end{verbatim}
\end{quote}

Before interpreting the results, the residual quantile functions
\[
\widehat{e}_i = q_{y,i} - \widehat{q}_{y,i} + \widehat{\beta} \Phi^{-1} \equiv \mu_{\widehat{e}_i} + \sigma_{\widehat{e}_i} \Phi^{-1}, \quad i\in \{1,\ldots,44\},
\]
are computed using the \texttt{R} command \texttt{residuals(res.qlm)}. The pairs $(\mu_{\widehat{e}_i},\sigma_{\widehat{e}_i})_{i=1}^{44}$ are listed in Table~\ref{tab:residuals} and plotted in Figure~\ref{fig:residuals.scatterplot}.

\begin{table}[!t]
\caption{Values of the pairs $(\mu_{\widehat{e}_i}, \sigma_{\widehat{e}_i})$ for the residual quantile functions $\widehat{e}_1, \ldots, \widehat{e}_{44}$.}
\label{tab:residuals}
\centering
\setlength{\tabcolsep}{3pt} 
\renewcommand{\arraystretch}{1} 
\begin{tabular}{|c|D{.}{.}{4.2}|D{.}{.}{2.2}|c|D{.}{.}{3.2}|D{.}{.}{3.2}|c|D{.}{.}{3.2}|D{.}{.}{2.2}|c|D{.}{.}{3.2}|D{.}{.}{2.2}|}
\hline
\!\raisebox{-0.5mm}{Patient} & \multicolumn{1}{c}{$\mu_{\widehat{e}_i}$} \vline & \multicolumn{1}{c}{$\sigma_{\widehat{e}_i}$} \vline &
\raisebox{-0.5mm}{Patient} & \multicolumn{1}{c}{$\mu_{\widehat{e}_i}$} \vline & \multicolumn{1}{c}{$\sigma_{\widehat{e}_i}$} \vline &
\raisebox{-0.5mm}{Patient} & \multicolumn{1}{c}{$\mu_{\widehat{e}_i}$} \vline & \multicolumn{1}{c}{$\sigma_{\widehat{e}_i}$} \vline &
\raisebox{-0.5mm}{Patient} & \multicolumn{1}{c}{$\mu_{\widehat{e}_i}$} \vline & \multicolumn{1}{c}{$\sigma_{\widehat{e}_i}$} \vline \\
\hline
\rule{0pt}{2ex} 
\!1 & -42.03 & 51.21 & 12 & -2.84 & 30.71 & 23 & -10.73 & 41.28 & 34 & 13.38 & 51.79 \\
2 & 0.15 & 85.27 & 13 & -3.06 & 49.05 & 24 & 23.52 & 49.90 & 35 & -11.22 & 56.12 \\
3 & 13.53 & 55.82 & 14 & 43.50 & 67.00 & 25 & -30.83 & 42.40 & 36 & 25.62 & 55.00 \\
4 & -27.06 & 39.02 & 15 & 4.77 & 112.40 & 26 & -44.02 & 32.17 & 37 & -23.84 & 39.05 \\
5 & 31.91 & 51.30 & 16 & 0.57 & 41.82 & 27 & 21.62 & 44.47 & 38 & 6.88 & 53.19 \\
6 & 64.60 & 59.16 & 17 & -16.17 & 55.94 & 28 & -19.41 & 43.10 & 39 & 5.61 & 38.53 \\
7 & -15.60 & 84.15 & 18 & -28.57 & 37.67 & 29 & 9.20 & 41.05 & 40 & 15.77 & 51.92 \\
8 & 27.37 & 82.19 & 19 & -26.38 & 35.35 & 30 & 54.79 & 21.85 & 41 & -26.69 & 43.19 \\
9 & 117.57 & 47.57 & 20 & -14.89 & 37.17 & 31 & 1.08 & 62.28 & 42 & -3.66 & 46.28 \\
10 & 21.60 & 44.91 & 21 & 44.98 & 73.15 & 32 & 28.30 & 68.40 & 43 & 35.08 & 44.17 \\
11 & -134.48 & 13.17 & 22 & -85.42 & 1.36 & 33 & -9.42 & 56.50 & 44 & -35.07 & 34.00 \\
\hline
\end{tabular}
\end{table}

\begin{figure}[!b]
\centering
\includegraphics[width=0.85\textwidth]{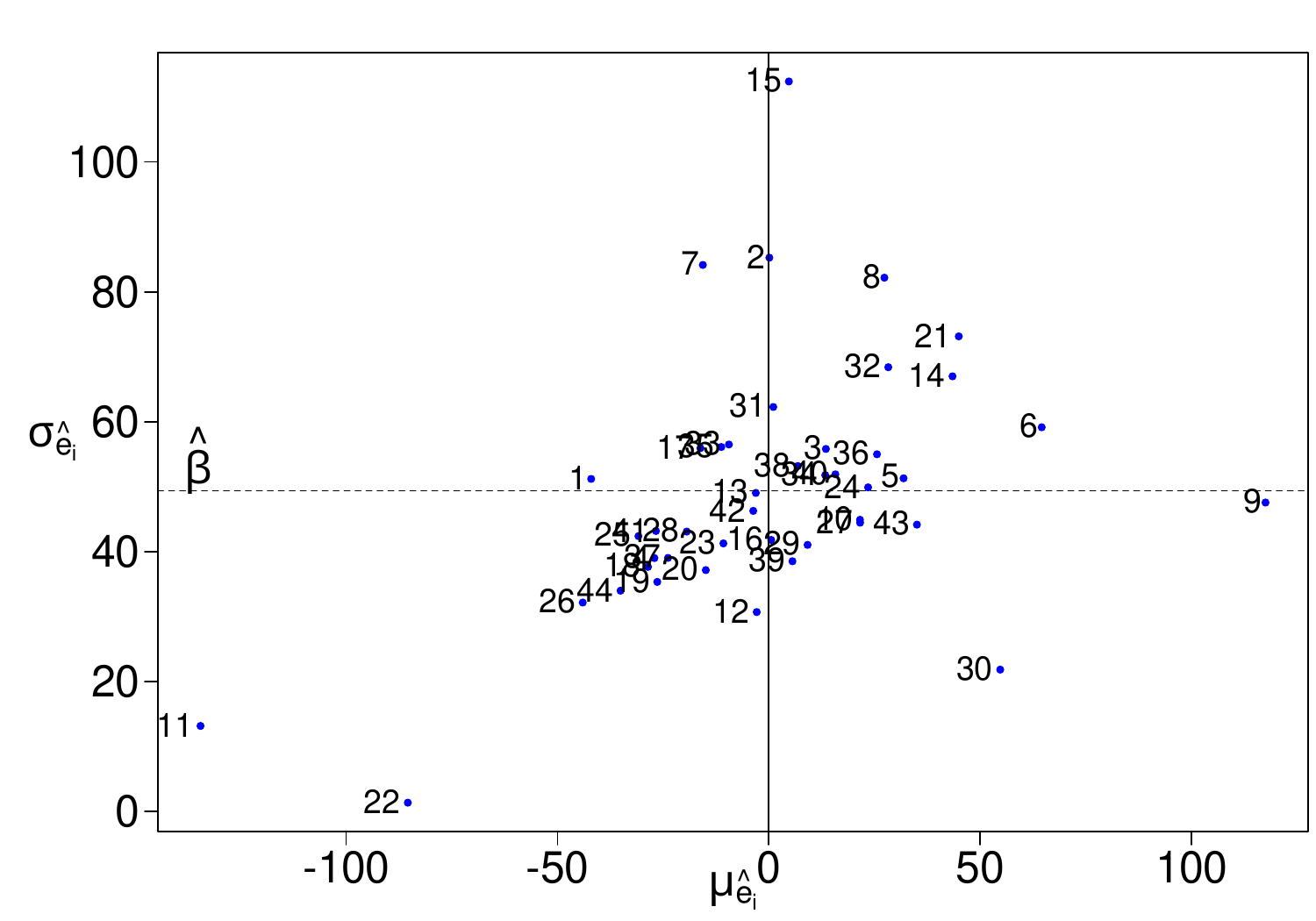}
\caption{Scatterplot of the residual pairs $(\mu_{\widehat{e}_i},\sigma_{\widehat{e}_i})$, ~$i\in \{1,\ldots,44\}$.}
\label{fig:residuals.scatterplot}
\end{figure}

To assess whether $\mu_{E_i}$ and $\sigma_{E_i}$ are independent, as assumed in \eqref{eq:model.reformulation}, the Hellinger test of correlation proposed by \citet{Geenens2022} was employed:
\begin{quote}
\begin{verbatim}
HellCor::HellCor(muresid, sigmaresid, pval.comp = TRUE)$p.value
[1] 0.0508
\end{verbatim}
\end{quote}
This yielded a $p$-value of 0.0508. Consequently, there is insufficient evidence to reject the assumption of independence between $\mu_{E_i}$ and $\sigma_{E_i}$.

Histograms (not shown) of the residual means and residual standard deviations were generated to assess the distributional assumptions. These histograms indicate that the residual means fit the normal distribution reasonably well, while the residual standard deviations fit the exponential distribution to a lesser extent. A slight concentration of mass in the center of the histogram of the residual standard deviations, at the expense of the left tail, was observed. Despite this misalignment with the exponential distribution, assuming an exponential distribution enables tractable likelihood-based inference with explicit formulas. The capacity to derive closed-form confidence intervals and $p$-values outweighs the observed discrepancy in distributional fit.

To identify potential outliers, the $p$-values of the $44$ residual pairs $(\mu_{\widehat{e}_i},\sigma_{\widehat{e}_i})$ were computed using the formula
\[
p\text{-value}(\widehat{e}_i) = \int_{(s,t)\in \R \times (0,\infty) : \widehat{f}_{\widehat{E}_i}(s,t) \leq \widehat{f}_{\widehat{E}_i}(\mu_{\widehat{e}_i}, \sigma_{\widehat{e}_i})} \widehat{f}_{\widehat{E}_i}(s,t) \rd s \rd t,
\]
where $\widehat{f}_{\widehat{E}_i}$ denotes the approximate density of the residual quantile functions from Proposition~\ref{prop:residuals}. The resulting $p$-values are provided in Table~\ref{tab:residuals.p.values}, and their distribution is illustrated in the barplot in Figure~\ref{fig:residuals.barplot}. Two observations, corresponding to patients $\#9$ and $\#11$, yielded $p$-values close to or below $1\%$, indicating the need for further examination, as discussed in Section~\ref{sec:outlier.analysis}.

\begin{table}[!b]
\caption{$p$-values of the residual pairs $(\mu_{\widehat{e}_i},\sigma_{\widehat{e}_i})$, ~$i\in \{1,\ldots,44\}$.}
\label{tab:residuals.p.values}
\centering
\setlength{\tabcolsep}{3pt} 
\renewcommand{\arraystretch}{1} 
\begin{tabular}{|c|c||c|c||c|c||c|c|}
\hline
\rule{0pt}{2.3ex} 
\!Patient & $p$-value & Patient & $p$-value & Patient & $p$-value & Patient & $p$-value \\[0.2mm]
\hline
\rule{0pt}{2ex} 
\!\!1 & 0.3605 & 12 & 0.7363 & 23 & 0.6235 & 34 & 0.5269 \\
2 & 0.3256 & 13 & 0.5714 & 24 & 0.4962 & 35 & 0.4995 \\
3 & 0.4957 & 14 & 0.2679 & 25 & 0.5067 & 36 & 0.4473 \\
4 & 0.5610 & 15 & 0.2058 & 26 & 0.4649 & 37 & 0.5801 \\
5 & 0.4272 & 16 & 0.6352 & 27 & 0.5492 & 38 & 0.5325 \\
6 & 0.1684 & 17 & 0.4849 & 28 & 0.5729 & 39 & 0.6606 \\
7 & 0.3119 & 18 & 0.5489 & 29 & 0.6298 & 40 & 0.5167 \\
8 & 0.2819 & 19 & 0.5962 & 30 & 0.4077 & 41 & 0.5280 \\
\cellcolor[gray]{0.9}9 & \cellcolor[gray]{0.9} \hspace{-1.5mm} 0.0075 & 20 & 0.6453 & 31 & 0.4698 & 42 & 0.5939 \\
10 & 0.5459 & 21 & 0.2328 & 32 & 0.3494 & 43 & 0.4579 \\
\cellcolor[gray]{0.9}11 & \cellcolor[gray]{0.9} \hspace{-1.5mm} 0.0068 & 22 & 0.5069 & 33 & 0.5016 & 44 & 0.5383 \\
\hline
\end{tabular}
\end{table}

\begin{figure}[!t]
\centering
\includegraphics[width=1\textwidth]{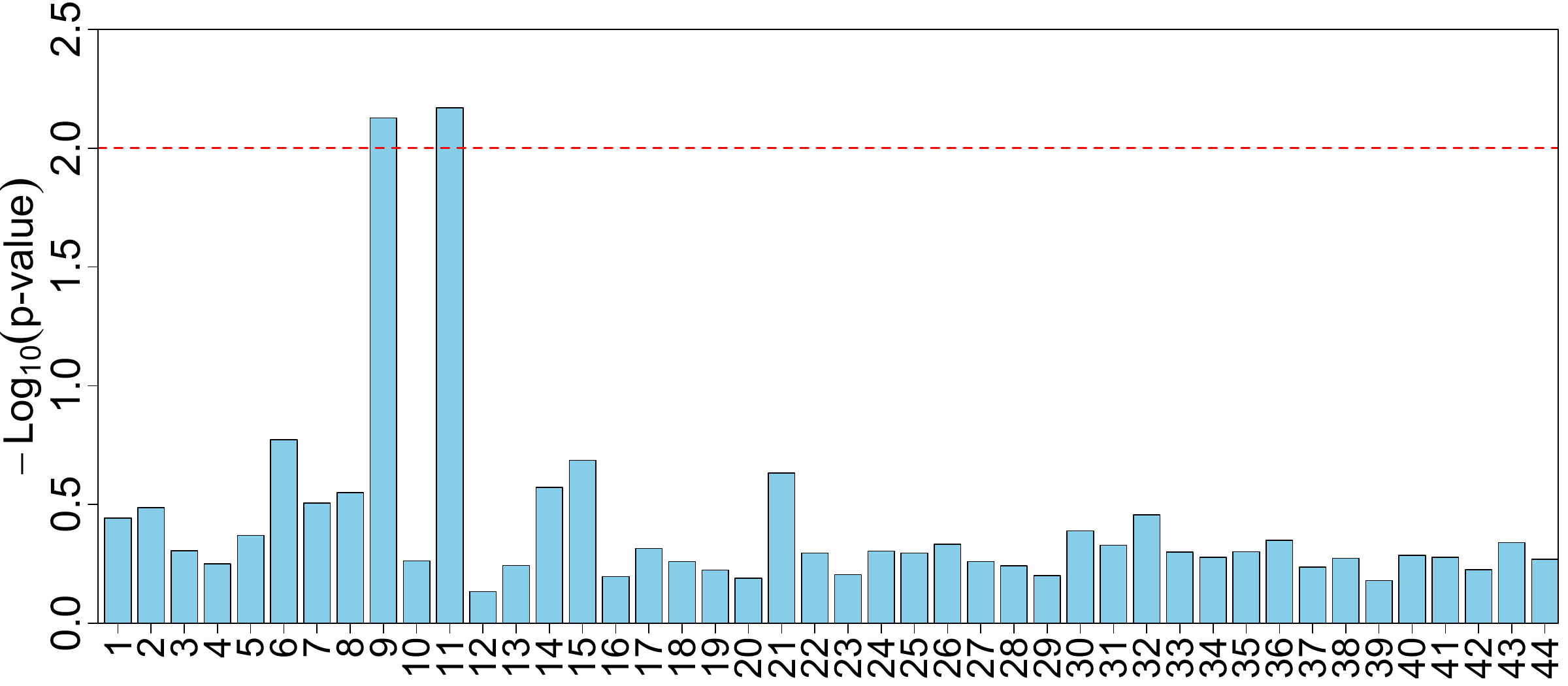}
\caption{Barplot for the $p$-values of the residual pairs $(\mu_{\widehat{e}_i},\sigma_{\widehat{e}_i})$, ~$i\in \{1,\ldots,44\}$. The dashed line denotes a $p$-value of $0.01$, or equivalently, a $-\log_{10}(\text{$p$-value})$ of $2$.}
\label{fig:residuals.barplot}
\end{figure}

\subsection{Interpretation of the results}\label{sec:interpretation.results}

The model parameter estimates and $p$-values (at a significance level of $\alpha = 5\%$) associated with the null hypotheses $\beta_0 = 0$ and $\beta_1 = 0$ are:
\[
\begin{aligned}
\widehat{\beta}_0 = -85.916, \quad \textrm{$p$-value} = 0.242, \quad \textrm{CI}_{0.95}(\beta_0) = [-232.14, 60.31], \\[-2mm]
\widehat{\beta}_1 = 0.884, \quad \textrm{$p$-value} = 3.5 \times 10^{-11}, \quad \textrm{CI}_{0.95}(\beta_1) = [0.683, 1.085].
\end{aligned}
\]
This indicates that the null hypothesis $H_0: \beta_0 = 0$ cannot be rejected at the 5\% significance level, whereas the null hypothesis $H_0: \beta_1 = 0$ is rejected.
Morevover, approximately 36\% of patients exhibited positive effects, as reflected by a positive value of the model estimate $\smash{\mu_{\widehat{Q}_{Y,i}} - \mu_{q_{x,i}} = \widehat{\beta}_0 + (\widehat{\beta}_1 - 1) \mu_{q_{x,i}}}$. Among patients with a more severe initial condition ($\mu_{q_{x,i}} \leq -746.4~\textrm{HU}$), the positive response rate increased to 75\%. Here, the threshold value of $-746.4$ HU is the solution to the affine equation $\smash{\widehat{\beta}_0 + (\widehat{\beta}_1 - 1) \mu_{q_x}} = 0$, highlighting that the greatest improvement in lung function is observed in individuals starting from a more severe baseline condition.

The estimate $\smash{\widehat{\beta}_2}=0.638$ (with $p$-value $< 2 \times 10^{-16}$ and $\textrm{CI}_{0.95}(\beta_2)=[0.614, 0.647]$) strongly supports $\beta_2 < 1$. This implies a reduction in the variability of $Q_{Y,1},\ldots,Q_{Y,n}$ around their mean compared to $q_{x,1},\ldots,q_{x,n}$, consistent with improved air-trapping patterns.

In Figure~\ref{quantile}, the difference between the average quantiles before and after treatment, $\bar{q}_{Y} - \bar{q}_{x}$, is shown for the $n_w=16$ patients with the worst initial conditions, defined by $\mu_{q_{x,i}} < -746.4$. The averages are computed as:
\[
\bar{q}_{x}(t) = \frac{1}{n_w} \sum_{i=1}^{n_w} q_{x,i}(t) \ind_{\{\mu_{q_{x,i}} < -746.4\}}, \qquad \bar{q}_Y(t) = \frac{1}{n_w} \sum_{i=1}^{n_w} \widehat{Q}_{Y,i}(t) \ind_{\{\mu_{q_{x,i}} < -746.4\}}.
\]

\begin{figure}[!b]
\centering
\includegraphics[width=0.70\textwidth]{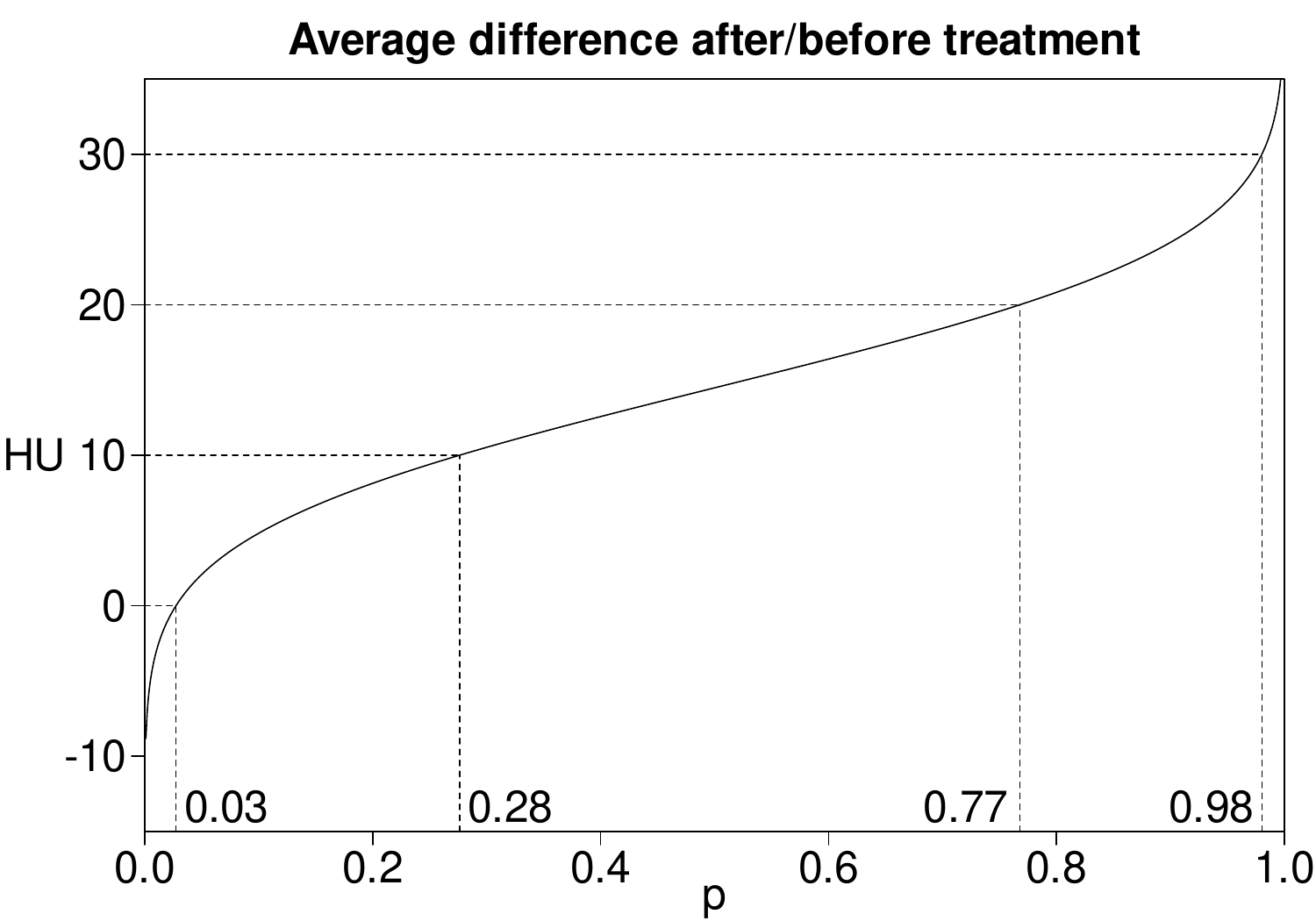}
\caption{Average difference in quantiles, $\bar{q}_{Y} - \bar{q}_{x}$, for the $n_w=16$ patients with the worst initial conditions ($\mu_{q_{x,i}} < -746.4$). The figure illustrates a significant shift in the distribution of lung voxel values, with approximately 25\% of voxels exhibiting an average displacement between $0$ and $10$ HU, 49\% between $10$ and $20$ HU, and 21\% between $20$ and $30$ HU.}\label{quantile}
\end{figure}

Patient \#13, with an initial baseline condition $\mu_{q_{x,13}} = -840.6 < -746.4$, is presented as an illustrative example to highlight the observed trends. Figure~\ref{fig4.4} shows a clear rightward shift in the post-treatment density curve, reflecting reduced air trapping and clinical improvement. Figure~\ref{fig4.5} highlights the corresponding confidence region for the mean response quantile parameters $(\mu_{\EE(Q_{Y,13})}, \sigma_{\EE(Q_{Y,13})})$, showing the rightward shift in the post-treatment mean $\mu_{q_{y,13}}$ and the increased variability in HU values through $\sigma_{q_{y,13}}$.

\begin{figure}[H]
\centering
\includegraphics[width=0.80\textwidth]{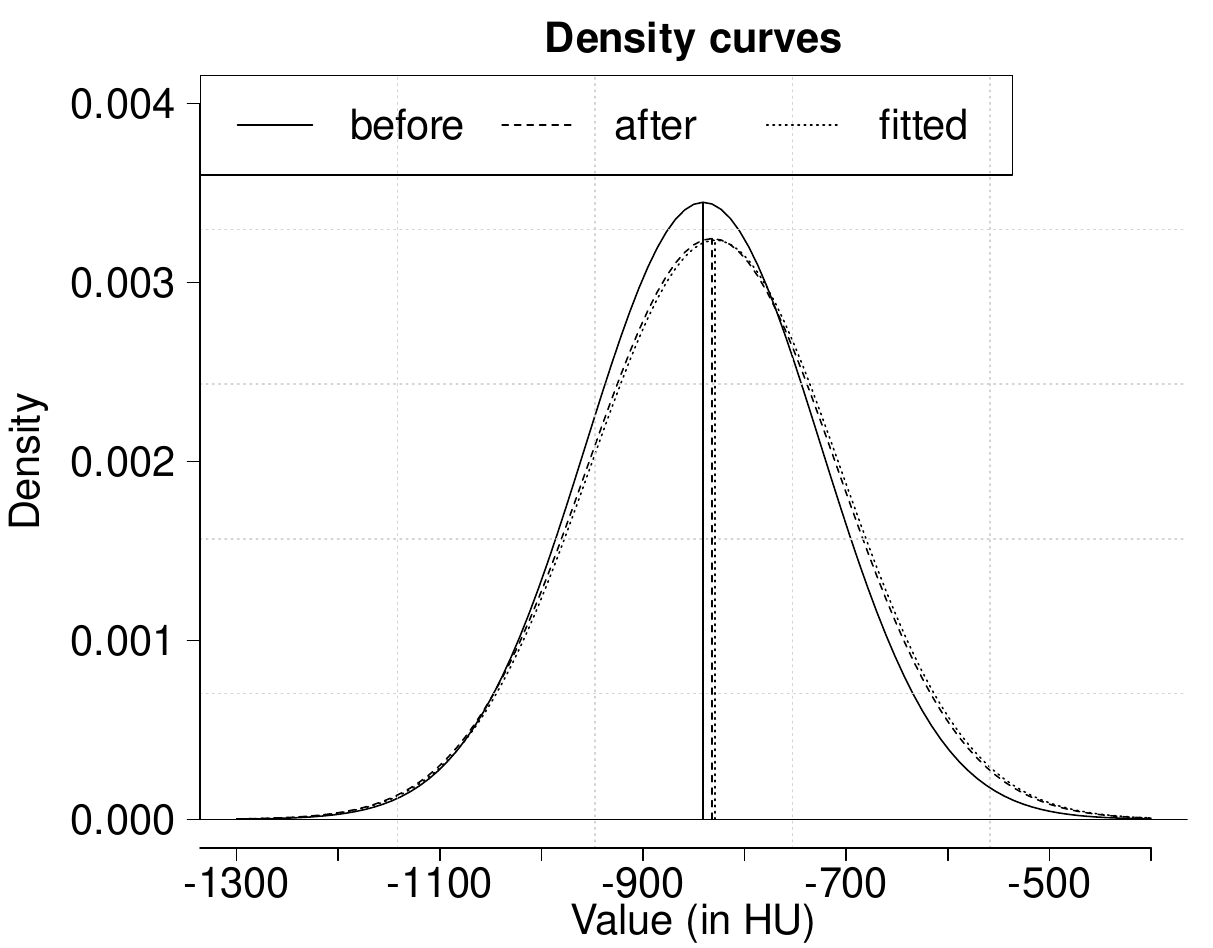}
\caption{Density curves associated with the fitted, pre- and post-treament quantile curves for patient \#13, who started in the initial condition $\mu_{q_{x,13}} = -840.6 < -746.4$. A clear rightward shift is present.}
\label{fig4.4}
\end{figure}

\vspace{-5mm}
\begin{figure}[H]
\centering
\includegraphics[width=0.80\textwidth]{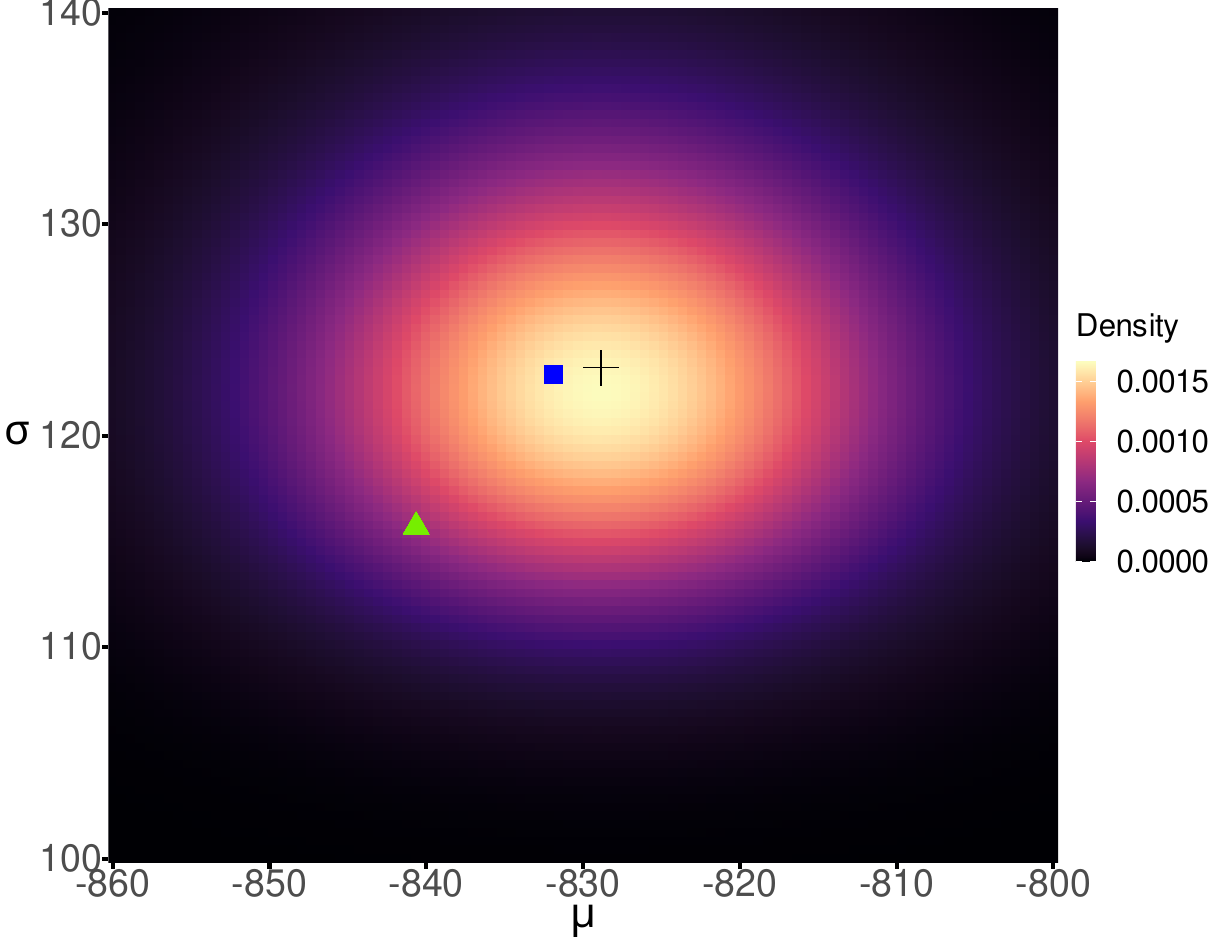}
\caption{Approximate confidence region for the vector $(\mu_{\EE(Q_{Y,13})}, \sigma_{\EE(Q_{Y,13})})$ defining the mean response quantile function $\smash{\EE(Q_{Y,13}) = \mu_{\EE(Q_{Y,13})} + \sigma_{\EE(Q_{Y,13})} \Phi^{-1}}$. The triangle (resp., square) represents the pre- (resp., post-) treatment quantile curve $\smash{\mu_{q_{x,13}}+\sigma_{q_{x,13}}\Phi^{-1}}$ (resp., $\smash{\mu_{q_{y,13}}+\sigma_{q_{y,13}}\Phi^{-1}}$). The cross represents the fitted quantile curve $\smash{\mu_{\widehat{Q}_{Y,13}} + \sigma_{\widehat{Q}_{Y,13}} \Phi^{-1}}$.}
\label{fig4.5}
\end{figure}

\subsection{Analysis of the two outliers}\label{sec:outlier.analysis}

The two outliers, patients \#9 and \#11, warrant closer examination. As seen in Table~\ref{tab:Gaussian.quantile.functions}, these individuals display the most pronounced treatment responses, marked by the largest values of $|\mu_{q_y}-\mu_{q_x}|$, albeit in opposite directions. While patient \#9 
showed notable improvement, patient \#11 
experienced a decline in condition.

Discussions with the attending medical doctors provided the following insights:
\begin{itemize}
\item For patient \#9, a review of medical records identified a young individual with both a high initial eosinophil count (a type of white blood cell often elevated in allergic reactions and asthma) and a high baseline forced vital capacity (a measure of the maximum amount of air a person can exhale forcefully after taking a deep breath). These characteristics made the patient particularly favorable for a strong positive response to the treatment.
\item For patient \#11, the post-treatment CT scan revealed poor image quality due to insufficient expiration during acquisition, rendering the second scan unrepresentative of the patient's true condition.
\end{itemize}

\section{Discussion}\label{sec:discussion}

The parametric linear regression model for quantile functions introduced herein is simple to understand and provides clear and actionable statistical inference tools such as explicit estimators, $p$-values, and confidence regions for all parameters. It also allows practionners to work within a familiar framework, namely linear regression. The simplicity and interpretability of the model are significant advantages, which broaden its appeal beyond medical applications.

However, the convenience provided by this parametric model comes with certain limitations. Firstly, focusing exclusively on quantile functions results in the loss of spatial information. For instance, in the lung application (Section~\ref{sec:application}), it is assumed that the effectiveness of the treatment is reflected in the distribution of voxel intensities, not their position. This assumption deserves further investigation. To refine and test it, the model could be applied to CT scans segmented by lung lobes. For each lobe, the regression model could be applied independently, and the resulting estimators compared. This approach could capture the heterogeneity across different lung regions and provide a more granular understanding of treatment response. Secondly, although the model provides easy-to-interpret estimates, the distributional assumption on the quantile errors could be made more flexible to enhance the fit of $\sigma_{Q_{Y,i}} \sim \mathcal{E}\mathrm{xp}(\beta, \beta_2\sigma_{q_{x,i}})$, as derived in Proposition~\ref{prop:law.mu.sigma}, to the post-treatment data. The exponential distribution was chosen to enable the derivation of explicit estimators and confidence regions, albeit at the expense of some accuracy. An in-depth study is needed to determine whether a wider family of distributions can be imposed on $\sigma_{Q_{Y,i}}$.

A promising direction for future research is extending the method to a multivariate setting. In the context of the lung application, the linear regression model could be adapted to analyze bivariate quantile functions \citep{MR1934981}, with one component for expiration and one for inspiration. These bivariate quantile functions would be derived from CT scans by aligning expiration and inspiration images using B-spline registration \cite{rohr,sengupta_survey_2022}. The registration of CT scans during both inspiration and expiration is already done in parametric response map (PRM) studies; see, e.g., \cite{galban_computed_2012,POMPE201748}. This new model would leverage information from both phases, enhancing the analysis beyond current approaches. Although the formalism of the regression model seems to permit such an extension, deriving the corresponding estimators and confidence regions may be more challenging. Overcoming these challenges could significantly advance the methodology.

\section*{Data availability statement}\label{sec:data.availability}

The computing codes and datasets that generated the figures throughout the paper and the lung application results in Section~\ref{sec:application} are available online in the Github repository of \citet{BeclinMicheauxMolinariOuimet_github_2024}.

\begin{acks}[Acknowledgments]
Grateful acknowledgment is extended to Dr. Arnaud Bourdin (MD, PhD; Pulmonary Medicine; Montpellier, Occitanie, France) for invaluable assistance in the outlier analysis (Section~\ref{sec:outlier.analysis}) and for providing key medical insights. 
\end{acks}

\begin{funding}
F.\ Ouimet's was previously funded through the Canada Research Chairs Program (Grant 950-231937 to Christian Genest) and the Natural Sciences and Engineering Research Council of Canada (Grant RGPIN-2024-04088 to Christian Genest). F.\ Ouimet’s current postdoctoral fellowship is funded through the Natural Science and Engineering Research Council of Canada (Grant RGPIN-2024-05794 to Anne MacKay).
\end{funding}


\bibliographystyle{imsart-nameyear} 
\bibliography{main_bib.bib} 

\section{Explicit expression for the minimizer \texorpdfstring{$q_{x,i,d}^{\star}$}{q\_x,i,d\_star}}\label{sec:explicit.minimizer}

\begin{proposition}\label{prop:q.opt}
Let $x_{i,1} \leq \dots \leq x_{i,N_i^x}$ be a sample of pre-treatment HU values for the $i$th patient, sorted in nondecreasing order. Denote by $\Gamma(\cdot)$ the Euler gamma function, $\gamma(\cdot, \cdot)$ the lower incomplete gamma function, and $\Gamma(\cdot, \cdot)$ the upper incomplete gamma function. Recall the parametric space of quantile functions, $\mathcal{Q}_d^2$, from Section~\ref{sec:math}, and recall that the empirical quantile function is defined as $\widehat{q}_{x,i}(p) = \smash{\sum_{\ell=1}^{N_i^x} x_{i,\ell} \, \ind_{\{(\ell-1)/N_i^x < p \leq \ell/N_i^x\}}}$. Then, the solution $q_{x,i,d}^{\star}$ to the minimization problem,
\[
\min_{q\in \mathcal{Q}_d^2} \int_0^1 \left(q(p)-\widehat{q}_{x,i}(p)\right)^2 \rd p,
\]
is $q_{x,i,d}^{\star} = \sum_{j=0}^d a_{i,d,j}^{\star} (\Phi^{-1})^j(\cdot)$, where
\[
\begin{aligned}
\bb{a}_{i,d}^{\star} \equiv (a_{i,d,0}^{\star},\ldots,a_{i,d,d}^{\star})
&= \argmin_{\bb{a}_{i,d}\in \Rnearrow} \left\{\sum_{j=0}^d a_{i,d,j} \left(\sum_{\ell=j}^{j+d} a_{i,d,\ell-j} \Phi_{\ell}(1) - 2 \psi_{i,j}\right)\right\}, \\
\psi_{i,j}
&= \sum_{\ell=1}^{N_i^x} x_{i,\ell} \left[\Phi_j(\ell/N_i^x)-\Phi_j((\ell-1)/N_i^x)\right], \quad j\in \N_0,
\end{aligned}
\]
and where the truncated Gaussian moment $\Phi_j(b)$ satisfies, for all $j\in \N_0$ and all $b\in (0,1]$,
\[
\begin{aligned}
&\Phi_j(b)
\leqdef \int_{-\infty}^{\Phi^{-1}(b)} x^j \phi(x) \rd x \\
&~~= \frac{2^{j/2-1}}{\sqrt{\pi}} \times
\begin{cases}
\Gamma(j/2+1/2) + \mathrm{sgn}(b - 1/2) \gamma(j/2 + 1/2, |\Phi^{-1}(b)|^2/2), &\mbox{if $j$ is even}, \\
- \Gamma(j/2+1/2, |\Phi^{-1}(b)|^2/2), &\mbox{if $j$ is odd}.
\end{cases}
\end{aligned}
\]
In the limit as $b \to 1$, Legendre's duplication formula ensures that $\Phi_j(b)$ reduces to the well-known expressions for the non-truncated Gaussian moments:
\[
\Phi_j(1) =
\begin{cases}
\frac{j!}{2^{j/2}(j/2)!}, &\mbox{if $j$ is even}, \\
0, &\mbox{if $j$ is odd}.
\end{cases}
\]
In the special case $d=1$, the minimizer is $q_{x,i,1}^{\star} = a_{i,1,0}^{\star} + a_{i,1,1}^{\star} \Phi^{-1}$ with
\begin{equation}\label{eq:special.case}
a_{i,1,0}^{\star} = \psi_{i,0}, \qquad a_{i,1,1}^{\star} = \psi_{i,1}.
\end{equation}
\end{proposition}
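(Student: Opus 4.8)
\emph{Proof plan.} The plan is to expand the squared $\mathcal{L}^2([0,1])$ objective into three pieces and reduce each one to Gaussian integrals through the change of variables $u = \Phi(x)$. Writing
\[
\int_0^1 (q(p)-\widehat{q}_{x,i}(p))^2 \rd p = \int_0^1 q(p)^2 \rd p - 2\int_0^1 q(p)\,\widehat{q}_{x,i}(p)\rd p + \int_0^1 \widehat{q}_{x,i}(p)^2 \rd p,
\]
the third term is independent of $q$ and can be discarded. For the first term, substituting $q(p) = \sum_{j=0}^d a_{i,d,j}(\Phi^{-1}(p))^j$, expanding the square, and using $\int_0^1 (\Phi^{-1}(p))^{j+k}\rd p = \int_{\R} x^{j+k}\phi(x)\rd x = \Phi_{j+k}(1)$ gives $\sum_{j,k=0}^d a_{i,d,j}a_{i,d,k}\Phi_{j+k}(1)$, which after reindexing with $\ell = j+k$ becomes $\sum_{j=0}^d a_{i,d,j}\sum_{\ell=j}^{j+d}a_{i,d,\ell-j}\Phi_\ell(1)$. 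For the cross term, inserting the step-function form $\widehat{q}_{x,i}(p) = \sum_{\ell=1}^{N_i^x} x_{i,\ell}\,\ind_{\{(\ell-1)/N_i^x < p \leq \ell/N_i^x\}}$, swapping the finite sum with the integral, and applying the change of variables $p = \Phi(x)$ on each subinterval $((\ell-1)/N_i^x, \ell/N_i^x]$ turns the inner integral into $\Phi_j(\ell/N_i^x) - \Phi_j((\ell-1)/N_i^x)$, so that $\int_0^1 (\Phi^{-1}(p))^j\,\widehat{q}_{x,i}(p)\rd p = \psi_{i,j}$. Combining the two pieces produces the stated objective, and minimizing it over $\Rnearrow$ gives $\bb{a}_{i,d}^{\star}$.

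Next I would establish the closed form of the truncated Gaussian moment $\Phi_j(b) = \int_{-\infty}^{\Phi^{-1}(b)} x^j\phi(x)\rd x$. Setting $c = \Phi^{-1}(b)$, I would split $\Phi_j(b) = \int_{-\infty}^0 x^j\phi(x)\rd x + \int_0^c x^j\phi(x)\rd x$ and apply the substitution $t = x^2/2$ on $(0,|c|)$, which converts $x^j\phi(x)\rd x$ (for $x>0$) into $\tfrac{2^{j/2-1}}{\sqrt{\pi}}\,t^{(j+1)/2-1}e^{-t}\rd t$ and hence yields incomplete gamma functions with parameter $j/2+1/2$. For $j$ even, symmetry gives $\int_{-\infty}^0 x^j\phi(x)\rd x = \tfrac12\Phi_j(1) = \tfrac{2^{j/2-1}}{\sqrt{\pi}}\Gamma(j/2+1/2)$ and $\int_0^c x^j\phi(x)\rd x = \mathrm{sgn}(c)\,\tfrac{2^{j/2-1}}{\sqrt{\pi}}\gamma(j/2+1/2,c^2/2)$; for $j$ odd, $\int_{-\infty}^0 x^j\phi(x)\rd x = -\tfrac{2^{j/2-1}}{\sqrt{\pi}}\Gamma(j/2+1/2)$ and $\int_0^c x^j\phi(x)\rd x = \tfrac{2^{j/2-1}}{\sqrt{\pi}}\gamma(j/2+1/2,c^2/2)$, after which the identity $\gamma(a,x)+\Gamma(a,x)=\Gamma(a)$ collapses the odd case to $-\tfrac{2^{j/2-1}}{\sqrt{\pi}}\Gamma(j/2+1/2,c^2/2)$. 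Since $\mathrm{sgn}(c)=\mathrm{sgn}(b-1/2)$ and $c^2=|\Phi^{-1}(b)|^2$, this is exactly the displayed formula. Finally, letting $b\to1$ so that $c^2/2\to\infty$ and using $\gamma(a,\infty)=\Gamma(a)$, $\Gamma(a,\infty)=0$ leaves $\Phi_j(1)=\tfrac{2^{j/2}}{\sqrt{\pi}}\Gamma(j/2+1/2)$ for $j$ even and $0$ for $j$ odd; Legendre's duplication formula in the form $\Gamma(\tfrac{j+1}{2})\Gamma(\tfrac{j+2}{2}) = 2^{-j}\sqrt{\pi}\,j!$ together with $\Gamma(\tfrac{j+2}{2}) = (j/2)!$ for even $j$ converts the even case into $j!/(2^{j/2}(j/2)!)$.

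For the special case $d=1$, I would substitute $\Phi_0(1)=1$, $\Phi_1(1)=0$, $\Phi_2(1)=1$ into the objective to obtain $a_{i,1,0}^2 - 2a_{i,1,0}\psi_{i,0} + a_{i,1,1}^2 - 2a_{i,1,1}\psi_{i,1} = (a_{i,1,0}-\psi_{i,0})^2 + (a_{i,1,1}-\psi_{i,1})^2 - \psi_{i,0}^2 - \psi_{i,1}^2$, whose unconstrained minimizer is $(\psi_{i,0},\psi_{i,1})$. It then remains to verify feasibility, i.e. $\psi_{i,1}>0$: since $\psi_{i,1} = \int_0^1 \Phi^{-1}(p)\,\widehat{q}_{x,i}(p)\rd p$ with both $\Phi^{-1}$ and $\widehat{q}_{x,i}$ nondecreasing and $\int_0^1\Phi^{-1}(p)\rd p = 0$, the Chebyshev correlation inequality for comonotone functions gives $\psi_{i,1}\geq \int_0^1\Phi^{-1}(p)\rd p\cdot\int_0^1\widehat{q}_{x,i}(p)\rd p = 0$, with strict inequality unless $\widehat{q}_{x,i}$ is constant, so $(\psi_{i,0},\psi_{i,1})\in\Rnearrowtwo$ and is therefore the constrained minimizer $\bb{a}_{i,1}^{\star}$.

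The step I expect to be the main obstacle is the closed-form derivation of $\Phi_j(b)$: keeping the even/odd parity cases and the signs for $b$ below and above $1/2$ consistent, reconciling the incomplete-gamma contributions with the complete-gamma constant via $\gamma+\Gamma=\Gamma$, and only then invoking Legendre duplication to recover the classical factorial formulas in the limit. The remaining steps are a routine expansion of the quadratic form, the elementary change of variables $u=\Phi(x)$, and completing the square.
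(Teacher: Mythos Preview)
Your proposal is correct and follows essentially the same route as the paper: expand the quadratic, drop the constant, change variables $p=\Phi(x)$ to reduce everything to Gaussian moments, and evaluate $\Phi_j(b)$ via the substitution $t=x^2/2$. The only noteworthy differences are cosmetic---you complete the square for $d=1$ where the paper differentiates and solves the $2\times2$ linear system, and for odd $j$ you split at $0$ and invoke $\gamma+\Gamma=\Gamma$ whereas the paper uses $\Phi_j(1)=0$ directly---plus you add a feasibility check ($\psi_{i,1}\geq0$ via Chebyshev's correlation inequality) that the paper omits.
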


\begin{remark}
When $N_i^x$ is large, the optimization procedure can be very slow. However, if the observations are drawn from a finite set of possible values and include repetitions, the process can be significantly accelerated. Specifically, suppose there are $M_i$ distinct sorted values, denoted $\tilde{x}_{i,1} \leq \dots \leq \tilde{x}_{i,M_i}$, where each value $\tilde{x}_{i,\ell}$ occurs $c_{i,\ell}$ times. In this case, the total number of observations satisfies $N_i^x = c_{i,1} + \dots + c_{i,M_i}$. This structure enables optimization by leveraging the reduced dimensionality of the unique values and their frequencies. Consequently, one can redefine $\psi_{i,j}$ as
\[
\psi_{i,j} = \sum_{\ell=1}^K \tilde{x}_{i,\ell} \left[\Phi_j(n_{i,\ell}/N_i^x)-\Phi_j(n_{i,\ell-1}/N_i^x)\right], \quad j\in \N_0,
\]
where $n_{i,\ell} = c_{i,1} + \dots + c_{i,\ell}$ for all $\ell\in \{1,\ldots,M_i\}$.
\end{remark}

\begin{remark}
In Proposition~\ref{prop:q.opt}, the empirical quantile function $\widehat{q}_{x,i}(p)$ can be substituted with the refined expression,
\[
\begin{aligned}
\widetilde{q}_{x,i}(p)
&= \big[(N_i^x p-1/2) (x_{i,2} - x_{i,1}) + x_{i,1}\big] \ind_{\{0\leq p\leq \frac{1}{2N_i^x}\}} \\[-1mm]
&\quad+ \sum_{j=1}^{N_i^x-1} \big[(N_i^x p - j + 1/2) (x_{i,j+1} - x_{i,j}) + x_{i,j}\big] \ind_{\{\frac{2j-1}{2N_i^x}< p\leqslant\frac{2j+1}{2N_i^x}\}} \\
&\quad+ \big[(N_i^x p - N_i^x + 1/2) (x_{i,N_i^x} - x_{i,N_i^x-1}) + x_{i,N_i^x}\big] \ind_{\{1-\frac{1}{2N_i^x} < p\leq 1\}},
\end{aligned}
\]
to derive an alternative explicit formula for $q_{x,i,d}^{\star}$ by mimicking the proof presented below. This refinement was introduced by \citet[p.~163]{Blanke2021_supp}, who demonstrated that it results in a mean integrated squared error which is strictly lower than that of the classical empirical quantile function.
\end{remark}

\begin{proof}[Proof of Proposition~\ref{prop:q.opt}]
Recall the definition of $\Rnearrow$ from \eqref{eq:R.inc}. One has
\[
\begin{aligned}
q_{x,i,d}^{\star}
&= \argmin_{q\in \mathcal{Q}_d^2} \int_0^1 \left(q(p)-\widehat{q}_{x,i}(p)\right)^2 \rd p \\
&= \argmin_{q\in \mathcal{Q}_d^2} \left\{\int_0^1 q^2(p) \rd p - 2 \int_0^1 q(p) \widehat{q}_{x,i}(p) \rd p + \int_0^1 \widehat{q}_{x,i}^2(p) \rd p \right\} \\
&= \argmin_{q\in \mathcal{Q}_d^2} \left\{\int_0^1q^2(p) \rd p - 2\int_0^1q(p)\widehat{q}_{x,i}(p)\rd p\right\}.
\end{aligned}
\]
In particular, if $q_{x,i,d}^{\star} = \sum_{j=0}^d a_{i,d,j}^{\star} (\Phi^{-1})^j(\cdot)$, then
\[
\begin{aligned}
(a_{i,d,0}^{\star},\ldots,a_{i,d,d}^{\star})
&= \argmin_{\bb{a}_{i,d}\in \Rnearrow} \bigg\{\sum_{j=0}^d \sum_{k=0}^d a_{i,d,j} a_{i,d,k} \int_0^1\left(\Phi^{-1}\right)^{j+k}(p) \rd p \bigg. \\[-4mm]
&\hspace{40mm} \bigg.- 2 \sum_{j=0}^d a_{i,d,j} \int_0^1 (\Phi^{-1})^j(p) \widehat{q}_{x,i}(p) \rd p \bigg\} \\[-2mm]
&= \argmin_{\bb{a}_{i,d}\in \Rnearrow} \bigg\{\sum_{j=0}^d a_{i,d,j} \bigg(\sum_{\ell=j}^{j+d} a_{i,d,\ell-j} \Phi_{\ell}(1) - 2 \psi_{i,j}\bigg)\bigg\}.
\end{aligned}
\]

In the special case $d=1$, the above reduces to
\[
\begin{aligned}
q_{x,i,d}^{\star}
&= \argmin_{(a_{i,1,0},a_{i,1,1})\in \R \times (0,\infty)} \Big\{a_{i,1,0} (a_{i,1,0} \Phi_0(1) + a_{i,1,1} \Phi_1(1) - 2 \psi_{i,0}) \Big. \\[-4mm]
&\hspace{40mm} \Big. + a_{i,1,1} (a_{i,1,0} \Phi_1(1) + a_{i,1,1} \Phi_2(1) - 2 \psi_{i,1})\Big\}.
\end{aligned}
\]
After differentiating with respect to $a_{i,1,0}$ and $a_{i,1,1}$ to find the critical points, one has to solve the linear system:
\vspace{-2mm}
\[
\begin{bmatrix}
2 \Phi_0(1) ~~& 2 \Phi_1(1) \\[2mm]
2 \Phi_1(1) ~~& 2 \Phi_2(1)
\end{bmatrix}
\begin{bmatrix}
a_{i,1,0} \\[2mm]
a_{i,1,1}
\end{bmatrix}
=
\begin{bmatrix}
2 \psi_{i,0} \\[2mm]
2 \psi_{i,1}
\end{bmatrix}.
\]
Since $\Phi_0(1) = 1$, $\Phi_1(1) = 0$, and $\Phi_2(1) = 1$, the above yields
\[
\begin{bmatrix}
a_{i,1,0}^{\star} \\[2mm]
a_{i,1,1}^{\star}
\end{bmatrix}
=
\frac{1}{\Phi_0(1) \Phi_2(1) - \Phi_1(1)^2}
\begin{bmatrix}
\Phi_2(1) ~~& -\Phi_1(1) \\[2mm]
-\Phi_1(1) ~~& \Phi_0(1)
\end{bmatrix}
\begin{bmatrix}
\psi_{i,0} \\[2mm]
\psi_{i,1}
\end{bmatrix}
=
\begin{bmatrix}
\psi_{i,0} \\[2mm]
\psi_{i,1}
\end{bmatrix},
\]
which proves the claim \eqref{eq:special.case}.

To complete the proof, it remains to find a general expression for the truncated Gaussian moments $\Phi_j(b)$ that appear in $\psi_{i,j}$. For even $j$, the symmetry of the integrand and the change of variable $t = x^2/2$ yield the following expression, for all $b\in (0,1]$,
\[
\begin{aligned}
\Phi_j(b)
&= \int_{-\infty}^{\Phi^{-1}(b)} x^j \phi(x) \rd x
= \int_{-\infty}^0 x^j \phi(x) \rd x + \mathrm{sgn}(b - 1/2) \int_0^{|\Phi^{-1}(b)|} x^j \phi(x) \rd x \\[-1mm]
&= \int_0^{\infty} (2 t)^{j/2} \frac{\exp(-t)}{\sqrt{2\pi}} \frac{\rd t}{\sqrt{2 t}} + \mathrm{sgn}(b - 1/2) \int_0^{|\Phi^{-1}(b)|^2/2} (2 t)^{j/2} \frac{\exp(-t)}{\sqrt{2\pi}} \frac{\rd t}{\sqrt{2 t}} \\
&= \frac{2^{j/2-1}}{\sqrt{\pi}} \left[\Gamma(j/2+1/2) + \mathrm{sgn}(b - 1/2) \gamma(j/2 + 1/2, |\Phi^{-1}(b)|^2/2)\right].
\end{aligned}
\]
For odd $j$, the fact that $\Phi_j(1) = 0$ and the change of variable $t = x^2/2$ yield the following expression, for all $b\in (0,1]$,
\[
\begin{aligned}
\Phi_j(b)
&= \int_{-\infty}^{\Phi^{-1}(b)} x^j \phi(x) \rd x
= - \int_{|\Phi^{-1}(b)|}^{\infty} x^j \phi(x) \rd x \\[-1mm]
&= - \int_{|\Phi^{-1}(b)|^2/2}^{\infty} (2 t)^{j/2} \frac{\exp(-t)}{\sqrt{2\pi}} \frac{\rd t}{\sqrt{2 t}}
= - \frac{2^{j/2-1}}{\sqrt{\pi}} \Gamma(j/2 + 1/2, |\Phi^{-1}(b)|^2/2).
\end{aligned}
\]

\vspace{2mm}
\noindent
The expression for $\Phi_j(b)$ can also be verified numerically using the following \texttt{R} code:
\begin{quote}
\begin{verbatim}
## integrate() versus the explicit formula:
b <- 0.3 # some arbitrary cutoff
j <- 4 # case when j is even
integrate(function(p) (qnorm(p)) ^ j, 0, b)$value
2 ^ (j / 2 - 1) / sqrt(pi) *
    (gamma(j / 2 + 0.5) + sign(b - 0.5) *
        pgamma(qnorm(b) ^ 2 / 2, j / 2 + 0.5, lower.tail = TRUE) *
            gamma(j / 2 + 0.5))
j <- 3 # case when j is odd
integrate(function(p) (qnorm(p)) ^ j, 0, b)$value
-2 ^ (j / 2 - 1) / sqrt(pi) *
    pgamma(qnorm(b) ^ 2 / 2, j / 2 + 0.5, lower.tail = FALSE) *
        gamma(j / 2 + 0.5)
\end{verbatim}
\end{quote}
This concludes the proof.
\end{proof}

\section{Proofs of the results stated in Section~\ref{sec:regression}}\label{sec:proofs}

Two preliminary results are needed before proving Propositions~\ref{prop:MLE}~and~\ref{prop:pivot.distributions}.

\begin{lemma}\label{lem:Fred.1}
Let $\theta_1,\ldots,\theta_n\in (0,\infty)$ be given, and let $X_1,\ldots,X_n\stackrel{\mathrm{iid}}{\sim} \mathcal{E}\mathrm{xp}(1)$ for some integer $n\geq 2$. Then, the joint survival function of $\min_{1\leq i\leq n} \theta_i X_i$ and $\frac{1}{n} \sum_{i=1}^n X_i$ is, for all $(x,y)\in (0,\infty)^2$,
\[
\begin{aligned}
S(x,y)
&\equiv \PP\left(\min_{1\leq i\leq n} \theta_i X_i \geq x, \frac{1}{n} \sum_{i=1}^n X_i \geq y\right) \\
&=
\begin{cases}
\int_{n y}^{\infty} \left(1 - \frac{x}{t} \sum_{i=1}^n \frac{1}{\theta_i}\right)^{n-1} \frac{t^{n-1} \exp(-t)}{\Gamma(n)} \rd t, &\mbox{if } \sum_{i=1}^n \frac{x}{\theta_i} < n y, \\
\exp\left(-\sum_{i=1}^n \frac{x}{\theta_i}\right), &\mbox{if } \sum_{i=1}^n \frac{x}{\theta_i} \geq n y.
\end{cases}
\end{aligned}
\]
The corresponding joint density function is, for all $(x,y)\in \R^2$,
\[
\begin{aligned}
&f_{\min_{1\leq i\leq n} \theta_i X_i, \frac{1}{n} \sum_{i=1}^n X_i}(x,y) \\
&\quad=
\begin{cases}
n \left(\sum_{i=1}^n \frac{1}{\theta_i}\right) \frac{\left(n y - \sum_{i=1}^n \frac{x}{\theta_i}\right)^{n-2} \exp(-n y)}{\Gamma(n-1)}, &\mbox{if } \sum_{i=1}^n \frac{x}{\theta_i} < n y, \\
0, &\mbox{otherwise}.
\end{cases}
\end{aligned}
\]
\end{lemma}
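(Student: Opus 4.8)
The plan is to compute the joint survival function $S(x,y)$ directly by conditioning on the value of $T \leqdef \sum_{i=1}^n X_i$, which has a $\mathcal{G}\mathrm{amma}(n,1)$ distribution, and then differentiate twice to obtain the density. The key observation is that, conditionally on $\sum_{i=1}^n X_i = t$, the vector $(X_1,\ldots,X_n)/t$ is uniformly distributed on the standard simplex $\Delta_{n-1} = \{(u_1,\ldots,u_n) : u_i \geq 0, \sum u_i = 1\}$; this is the classical fact that a sum of iid exponentials, rescaled, gives a uniform spacing vector. Hence, for fixed $t \geq ny$, I would write
\[
\PP\left(\min_{1\leq i\leq n} \theta_i X_i \geq x \,\Big|\, \sum_{i=1}^n X_i = t\right)
= \PP\left(U_i \geq \frac{x}{t\,\theta_i} \text{ for all } i \,\Big|\, (U_1,\ldots,U_n)\sim \mathrm{Unif}(\Delta_{n-1})\right),
\]
where $U_i = X_i / t$. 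The probability that a uniform point on $\Delta_{n-1}$ lies in the shifted sub-simplex $\{u_i \geq \delta_i\}$ (with $\delta_i = x/(t\theta_i)$) equals $\left(1 - \sum_{i=1}^n \delta_i\right)^{n-1}$ when $\sum_i \delta_i < 1$, and $0$ otherwise, by the affine scaling $u \mapsto (u - \delta)/(1 - \sum \delta_i)$ of the simplex onto itself. This substitution gives $\left(1 - \frac{x}{t}\sum_{i=1}^n \frac{1}{\theta_i}\right)^{n-1}$ inside the integral, and integrating against the $\mathcal{G}\mathrm{amma}(n,1)$ density $t^{n-1}e^{-t}/\Gamma(n)$ over $t \in [ny,\infty)$ yields the first case of $S(x,y)$.

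For the second case, when $\sum_{i=1}^n x/\theta_i \geq ny$, I would argue that the event $\{\min_i \theta_i X_i \geq x\}$ already forces $\sum_{i=1}^n X_i \geq \sum_{i=1}^n x/\theta_i \geq ny$, so the second constraint $\frac1n\sum X_i \geq y$ is redundant; hence $S(x,y) = \PP(\theta_i X_i \geq x \ \forall i) = \prod_{i=1}^n \PP(X_i \geq x/\theta_i) = \exp\left(-\sum_{i=1}^n x/\theta_i\right)$ by independence. For the joint density, I would take $-\partial^2/\partial x\,\partial y$ of $S(x,y)$ on the region $\sum_i x/\theta_i < ny$ (the only region contributing an absolutely continuous part): differentiating the integral representation in $y$ by the fundamental theorem of calculus kills the integral and evaluates the integrand at $t = ny$, producing $\left(1 - \frac{x}{ny}\sum_i \frac1{\theta_i}\right)^{n-1}\frac{(ny)^{n-1}e^{-ny}}{\Gamma(n)}\cdot n$; then differentiating in $x$ brings down a factor $-(n-1)\left(1 - \frac{x}{ny}\sum_i\frac1{\theta_i}\right)^{n-2}\cdot\frac{1}{ny}\sum_i\frac1{\theta_i}$, and collecting terms, using $(ny)^{n-1}/\Gamma(n) = (ny)^{n-1}/((n-1)!)$ and $\left(ny - x\sum_i \frac1{\theta_i}\right)^{n-2} = (ny)^{n-2}\left(1 - \frac{x}{ny}\sum_i\frac1{\theta_i}\right)^{n-2}$, gives exactly $n\left(\sum_{i=1}^n \frac1{\theta_i}\right)\frac{\left(ny - \sum_{i=1}^n x/\theta_i\right)^{n-2}e^{-ny}}{\Gamma(n-1)}$, as claimed.

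The main obstacle, and the step deserving the most care, is justifying the simplex geometry cleanly: namely that $\mathrm{Unif}(\Delta_{n-1})$ conditioned on lying in the corner region $\{u : u_i \geq \delta_i \ \forall i\}$ has mass $(1 - \sum_i \delta_i)^{n-1}$. I would either cite the standard Dirichlet/order-statistics representation of exponential sums, or verify directly that the map $u \mapsto \delta + (1 - \sum_j \delta_j)\,u$ is an affine bijection of $\Delta_{n-1}$ onto $\{u \in \Delta_{n-1} : u_i \geq \delta_i\}$ with constant Jacobian $(1 - \sum_j \delta_j)^{n-1}$ relative to the $(n-1)$-dimensional Lebesgue measure on the simplex. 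A secondary technical point is handling the boundary case $\sum_i x/\theta_i = ny$ where the two branches of $S$ meet; continuity of both expressions there confirms consistency, and the density's null set at the boundary is immaterial. Everything else is routine differentiation and bookkeeping with gamma functions.
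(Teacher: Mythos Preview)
Your proposal is correct and follows essentially the same route as the paper. The paper writes $S(x,y)$ as an $n$-fold integral and performs the explicit change of variables $x_i = t s_i$ (with $x_n = t(1-\sum_{i<n} s_i)$ and Jacobian $t^{n-1}$) to reduce to the same simplex volume $\int_{\mathcal{S}_{n-1}^{\star}} \rd s = (1-\sum_i x/(\theta_i t))^{n-1}/(n-1)!$; this is precisely the analytic content of your ``conditionally uniform on the simplex'' fact, and the remaining steps---the redundancy argument for the second case and the two differentiations for the density---match yours line for line.
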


\begin{proof}[Proof of Lemma~\ref{lem:Fred.1}]
Assume throughout the proof that $n\geq 2$, $\theta_1,\ldots,\theta_n\in (0,\infty)$, and $x,y\in (0,\infty)$, are given. If $\sum_{i=1}^n \frac{x}{\theta_i} \geq n y$, then the event $\min_{1\leq i\leq n} \theta_i X_i \geq x$ implies
\[
\frac{1}{n} \sum_{i=1}^n X_i = \frac{1}{n} \sum_{i=1}^n \frac{1}{\theta_i} (\theta_i X_i) \geq \frac{1}{n} \sum_{i=1}^n \frac{x}{\theta_i} \geq y.
\]
Therefore, the independence of $X_1,\ldots,X_n$ yields
\[
S(x,y) = \PP\left(\min_{1\leq i\leq n} \theta_i X_i \geq x\right) = \prod_{i=1}^n \PP\bigg(X_i \geq \frac{x}{\theta_i}\bigg) = \exp\bigg(-\sum_{i=1}^n \frac{x}{\theta_i}\bigg).
\]

Now, assume that $\sum_{i=1}^n \frac{x}{\theta_i} < n y$ holds for the remainder of the proof. One has
\[
\begin{aligned}
S(x,y) = \int_{x/\theta_1}^{\infty} \dots \int_{x/\theta_n}^{\infty} \mathds{1}_{\{\sum_{i=1}^n x_i \geq n y\}} \exp\left(-\sum_{i=1}^n x_i\right) \rd x_1 \dots \rd x_n
\end{aligned}
\]
Consider the change of variables
\[
x_1 = t s_1, \quad \dots, \quad x_{n-1} = t s_{n-1}, \quad x_n = t (1 - s_1 - \dots - s_{n-1}),
\]
so that $\sum_{i=1}^n x_i = t$. Using a cofactor expansion, the Jacobian determinant is equal to
\[
\left|\det
\begin{bmatrix}
\frac{\rd (x_1,\ldots,x_{n-1},x_n)}{\rd (s_1,\ldots,s_{n-1},t)}
\end{bmatrix}
\right|
=
\left|\det
\begin{bmatrix}
t ~~ & 0 ~~ & 0 ~~ & \dots ~~ & 0 ~~ & s_1 \\
0 ~~ & t ~~ & 0 ~~ & \dots ~~ & 0 ~~ & s_2 \\
0 ~~ & 0 ~~ & t ~~ & \dots ~~ & 0 ~~ & s_3 \\
\vdots ~~ & \vdots ~~ & \vdots ~~ & \ddots ~~ & \vdots ~~ & \vdots \\
0 ~~ & 0 ~~ & 0 ~~ & \dots ~~ & t ~~ & s_{n-1} \\
-t ~~ & -t ~~ & -t ~~ & \dots ~~ & -t ~~ & 1 - \sum_{i=1}^{n-1} s_i
\end{bmatrix}
\right|
=
t^{n-1}.
\]
Hence, the above integral reduces to
\[
S(x,y) = \int_{n y}^{\infty} \int_{\mathcal{S}_{n-1}^{\star}} \rd (s_1,\ldots,s_{n-1}) ~ t^{n-1} \exp(-t) \rd t,
\]
where $\mathcal{S}_{n-1}^{\star}$ denotes the $(n-1)$-dimensional simplex with a certain buffer near the boundary:
\[
\mathcal{S}_{n-1}^{\star} = \left\{(s_1,\ldots,s_{n-1})\in (0,1)^{n-1} : s_1 \geq \frac{x}{\theta_1 t}, \ldots, s_{n-1} \geq \frac{x}{\theta_{n-1} t}, 1 - \sum_{i=1}^{n-1} s_i \geq \frac{x}{\theta_n t}\right\}.
\]
A simple geometric argument shows that
\[
\int_{\mathcal{S}_{n-1}^{\star}} \rd (s_1,\ldots,s_{n-1}) = \frac{\big(1 - \sum_{i=1}^n \frac{x}{\theta_i t}\big)^{n-1}}{(n-1)!},
\]
which proves the claimed expression for the joint survival function.

To obtain the expression for the joint density function, observe that
\[
\begin{aligned}
&f_{\min_{1\leq i\leq n} \theta_i X_i, \frac{1}{n} \sum_{i=1}^n X_i}(x,y)
= \frac{\partial^2}{\partial x \partial y} S(x,y) \\
&\quad=
\begin{cases}
\frac{\partial}{\partial x} n \left(1 - \frac{1}{n y} \sum_{i=1}^n \frac{x}{\theta_i}\right)^{n-1} \frac{(n y)^{n-1} \exp(-n y)}{\Gamma(n)}, &\mbox{if } \sum_{i=1}^n \frac{x}{\theta_i} < n y, \\
0, &\mbox{otherwise},
\end{cases} \\
&\quad=
\begin{cases}
n (n-1) \left(1 - \frac{1}{n y} \sum_{i=1}^n \frac{x}{\theta_i}\right)^{n-2} \left(\frac{1}{n y} \sum_{i=1}^n \frac{1}{\theta_i}\right) \frac{(n y)^{n-1} \exp(-n y)}{\Gamma(n)}, &\mbox{if } \sum_{i=1}^n \frac{x}{\theta_i} < n y, \\
0, &\mbox{otherwise},
\end{cases} \\
&\quad=
\begin{cases}
n \left(\sum_{i=1}^n \frac{1}{\theta_i}\right) \frac{\left(n y - \sum_{i=1}^n \frac{x}{\theta_i}\right)^{n-2} \exp(-n y)}{\Gamma(n-1)}, &\mbox{if } \sum_{i=1}^n \frac{x}{\theta_i} < n y, \\
0, &\mbox{otherwise}.
\end{cases}
\end{aligned}
\]
This concludes the proof.
\end{proof}

\begin{corollary}\label{cor:density.afer.linear.transformation}
Let $\theta_1,\ldots,\theta_n\in (0,\infty)$ be given, and let $X_1,\ldots,X_n\stackrel{\mathrm{iid}}{\sim} \mathcal{E}\mathrm{xp}(1)$ for some integer $n\geq 2$. For an invertible matrix $A \equiv (a_{ij})_{1\leq i,j\leq 2}\in \R^{2 \times 2}$, define the linear change of variable
\[
\begin{bmatrix}
U \\[2mm]
V
\end{bmatrix}
=
\begin{bmatrix}
a_{11} & ~~ a_{12} \\[2mm]
a_{21} & ~~ a_{22}
\end{bmatrix}
\begin{bmatrix}
\min_{1\leq i\leq n} \theta_i X_i \\[2mm]
\frac{1}{n} \sum_{i=1}^n X_i
\end{bmatrix}.
\]
Let $(a^{ij})_{1\leq i,j \leq 2}$ denote the entries of $A^{-1}$:
\[
A^{-1}
=
\begin{bmatrix}
a^{11} & ~~ a^{12} \\[2mm]
a^{21} & ~~ a^{22}
\end{bmatrix}
=
\frac{1}{a_{11} a_{22} - a_{21} a_{12}}
\begin{bmatrix}
a_{22} & ~~ -a_{12} \\[2mm]
-a_{21} & ~~ a_{11}
\end{bmatrix}.
\]
Then, for all $(u,v)\in \R^2$ such that $a^{11} u + a^{12} v\in (0,\infty)$ and $a^{21} u + a^{22} v\in (0,\infty)$, the joint density function of $(U,V)$ is
\[
\begin{aligned}
f_{(U,V)}(u,v)
&= f_{(\min_{1\leq i\leq n} \theta_i X_i,\frac{1}{n} \sum_{i=1}^n X_i)}(a^{11} u + a^{12} v, a^{21} u + a^{22} v) \frac{1}{|\det(A)|} \\
&=
\left\{\hspace{-1mm}
\begin{array}{l}
\frac{n \left(\sum_{i=1}^n 1/\theta_i\right)}{|a_{11} a_{22} - a_{21} a_{12}|} \frac{\left(n (a^{21} u + a^{22} v) - (a^{11} u + a^{12} v) \sum_{i=1}^n \frac{1}{\theta_i}\right)^{n-2} \exp\left(-n (a^{21} u + a^{22} v)\right)}{\Gamma(n-1)}, \\[2mm]
\hspace{45mm} \mbox{if } (a^{11} u + a^{12} v) \sum_{i=1}^n \frac{1}{\theta_i} < n (a^{21} u + a^{22} v), \\[4mm]
0, \hspace{42mm} \mbox{otherwise}.
\end{array}
\right.
\end{aligned}
\]
\end{corollary}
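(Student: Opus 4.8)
The plan is to treat Corollary~\ref{cor:density.afer.linear.transformation} as a direct application of the multivariate change-of-variables formula for probability densities to the invertible linear map $T(w) = Aw$ on $\R^2$, starting from the explicit joint density of
\[
W \reqdef \Big(\min_{1\leq i\leq n} \theta_i X_i,\ \tfrac{1}{n}\sum_{i=1}^n X_i\Big)
\]
furnished by Lemma~\ref{lem:Fred.1}. Since the two previous results already do the analytic work (the simplex volume computation and the differentiation that turns the survival function into a density), nothing new of substance is needed here; the proof is a bookkeeping exercise.

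First I would observe that $T$ is a bijection of $\R^2$ because $A$ is invertible, with inverse given componentwise by $T^{-1}(u,v) = (a^{11}u + a^{12}v,\ a^{21}u + a^{22}v)$ and constant Jacobian determinant $\det(DT^{-1}) = \det(A^{-1}) = 1/\det(A)$. Since $\det(A) = a_{11}a_{22} - a_{21}a_{12}$, the density-transformation factor is $1/|a_{11}a_{22} - a_{21}a_{12}|$, and the transformation theorem yields, for almost every $(u,v)\in\R^2$,
\[
f_{(U,V)}(u,v) = f_{W}\big(a^{11}u + a^{12}v,\ a^{21}u + a^{22}v\big)\,\frac{1}{|a_{11}a_{22} - a_{21}a_{12}|},
\]
which is the first displayed equality in the statement.

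Next I would substitute the piecewise formula for $f_W$ from Lemma~\ref{lem:Fred.1}. Writing $x = a^{11}u + a^{12}v$ and $y = a^{21}u + a^{22}v$, the branch condition $\sum_{i=1}^n x/\theta_i < ny$ becomes $(a^{11}u + a^{12}v)\sum_{i=1}^n 1/\theta_i < n(a^{21}u + a^{22}v)$, and inserting $x$ and $y$ into $n\big(\sum_i 1/\theta_i\big)\big(ny - \sum_i x/\theta_i\big)^{n-2} e^{-ny}/\Gamma(n-1)$ produces exactly the claimed expression, with value $0$ on the complementary region.

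The one point requiring care --- and the reason for the hypotheses $a^{11}u + a^{12}v\in(0,\infty)$ and $a^{21}u + a^{22}v\in(0,\infty)$ --- is that the closed-form expression for $f_W$ in Lemma~\ref{lem:Fred.1} is valid only on the open quadrant $(0,\infty)^2$, where $W$ is actually supported; outside that quadrant one must not naively evaluate the ``$\sum_i x/\theta_i < ny$'' branch. Restricting attention to those $(u,v)$ whose preimage $(x,y) = T^{-1}(u,v)$ lies in $(0,\infty)^2$ removes this ambiguity and makes the substitution legitimate. I expect this domain bookkeeping to be the main --- and essentially only --- obstacle; the rest is the routine algebra of plugging the entries of $A^{-1}$ into the Lemma's formula.
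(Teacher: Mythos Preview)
Your proposal is correct and matches the paper's approach: the paper states this result as an immediate corollary of Lemma~\ref{lem:Fred.1} without a separate proof, treating it as a direct application of the change-of-variables formula for densities under the invertible linear map $A$, which is exactly what you do. Your care about the domain restriction $(x,y)\in(0,\infty)^2$ is appropriate and is the only substantive point to check.
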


\begin{proof}[Proof of Proposition~\ref{prop:MLE}]
Recall that $(\mu_{Q_{Y,i}},\sigma_{Q_{Y,i}})=\Delta^{-1}(Q_{Y,i})$ as per Section~\ref{sec:math}. The likelihood function of the model~\eqref{eq:model} is equal to
\[
\begin{aligned}
L(\beta_0,\beta_1,\beta_2,\sigma^2,\beta)
&= \prod_{i=1}^n f_{\beta_0,\beta_1,\beta_2,\sigma^2,\beta}(Q_{Y,i}) \\
&\equiv \prod_{i=1}^n f_{\mathcal{N}(\beta_0 + \beta_1 \mu_{q_{x,i}},\sigma^2)}(\mu_{Q_{Y,i}}) \times f_{\mathcal{E}\mathrm{xp}(\beta, \beta_2\sigma_{q_{x,i}})}(\sigma_{Q_{Y,i}}) \\
&= \frac{1}{(2\pi\sigma^2)^{n/2}} \exp\left\{-\sum_{i=1}^n \frac{(\mu_{Q_{Y,i}} -\beta_0 - \beta_1 \mu_{q_{x,i}})^2}{2\sigma^2}\right\} \\
&\quad\times \frac{1}{\beta^n } \exp\left\{\sum_{i=1}^n \frac{-(\sigma_{Q_{Y,i}} -\beta_2\sigma_{q_{x,i}})}{\beta} \right\} \ind_{\left\{\min\left(\frac{\sigma_{Q_{Y,i}}}{\sigma_{q_{x,i}}}\right) \geq \beta_2\right\}}.
\end{aligned}
\]
Differentiating the likelihood with respect to the variables $\beta_0$, $\beta_1$, $\sigma^2$ and $\beta$, respectively, for $\beta_2 \leq \min_{1\leq i\leq n}(\sigma_{Q_{Y,i}}/\sigma_{q_{x,i}})$, and setting each result to zero leads to the system of equations:
\[
\begin{aligned}
\sum_{i=1}^n (\mu_{Q_{Y,i}} -\beta_0 -\beta_1\mu_{q_{x,i}}) = 0, & \qquad
\sum_{i=1}^n \mu_{q_{x,i}}\bigl(\mu_{Q_{Y,i}} -\beta_0 -\beta_1\mu_{q_{x,i}}\bigr) = 0, & \\
-\frac{n}{2\sigma^2} + \sum_{i=1}^n \bigr(\mu_{Q_{Y,i}} - \beta_0-\beta_1\mu_{q_{x,i}}\bigl)^2 \frac{1}{2\sigma^4} = 0, & \qquad
-\frac{n}{\beta} + \frac{1}{\beta^2} \sum_{i=1}^n (\sigma_{Q_{Y,i}} -\beta_2\sigma_{q_{x,i}}) = 0.
\end{aligned}
\]
In turn, straightforward algebraic manipulations yield the ML estimators:
\[
\widehat{\beta}_{0,\mathrm{ML}} = \bar{\mu}_{Q_Y} - \widehat{\beta}_{1,\mathrm{ML}}\bar{\mu}_{q_x},\qquad\widehat{\beta}_{1,\mathrm{ML}} = \frac{\sum_{i=1}^n \mu_{Q_{Y,i}}\mu_{q_{x,i}} - n\bar{\mu}_{Q_Y}\bar{\mu}_{q_x} }{\sum_{i=1}^n \mu_{q_{x,i}}^2 - n\bar{\mu}_{q_x}^2},
\]
\[
\widehat{\beta}_{\mathrm{ML}} = \frac{1}{n} \sum_{i=1}^n (\sigma_{Q_{Y,i}}-\widehat{\beta}_{2,\mathrm{ML}}\sigma_{q_{x,i}})=\bar{\sigma}_{Q_Y}-\widehat{\beta}_{2,\mathrm{ML}}\bar{\sigma}_{q_x},
\]
and
\[
\widehat{\sigma^2}_{\mathrm{ML}} = \frac{1}{n}\sum_{i=1}^n \bigr(\mu_{Q_{Y,i}}-\widehat{\beta}_{0,\mathrm{ML}}-\widehat{\beta}_{1,\mathrm{ML}}\mu_{q_{x,i}}\bigl)^2.
\]
Moreover, given that the likelihood function is positive and increasing in $\beta_2$, it is clear that it maximizes when $\beta_2$ is set to
\[
\widehat{\beta}_{2,\mathrm{ML}} = \min_{1\leq i\leq n} \left(\frac{\sigma_{Q_{Y,i}}}{\sigma_{q_{x,i}}}\right).
\]

To finalize the proof, the distribution of each ML estimator must be derived. Since $\mu_{Q_{Y,i}} \stackrel{\mathrm{iid}}{\sim} \mathcal{N}(\beta_0+\beta_1\mu_{q_{x,i}},\sigma^2)$ by Proposition~\ref{prop:law.mu.sigma}, one can write, for every $i\in \{1,\ldots,n\}$,
\[
\mu_{Q_{Y,i}} = \beta_0 + \beta_1 \mu_{q_{x,i}} + \e_i,
\]
where $\e_1, \ldots, \e_n \overset{\mathrm{iid}}{\sim} \mathcal{N}(0, \sigma^2)$. This corresponds to a classical simple linear regression model, with Gaussian errors, for the $\mu_{Q_{Y,i}}$'s regressed on the $\mu_{q_{x,i}}$'s. Consequently,
\[
\widehat{\beta}_{0,\mathrm{ML}} \sim \mathcal{N}\bigg(\beta_0, \frac{\sigma^2}{n}\bigg(1+\frac{\bar{\mu}_{q_x}^2}{w}\bigg)\bigg), \quad
\widehat{\beta}_{1,\mathrm{ML}} \sim \mathcal{N}\bigg(\beta_1,\frac{\sigma^2}{nw}\bigg),
\]
where $w = n^{-1} \sum_{i=1}^n \mu_{q_{x,i}}^2 - \bar{\mu}_{q_{x}}^2$, and
\[
\frac{n}{\sigma^2} \widehat{\sigma^2}_{\mathrm{ML}} \sim \chi^2(n-2).
\]

Similarly, since $\sigma_{Q_{Y,i}} \stackrel{\mathrm{iid}}{\sim} \mathcal{E}\mathrm{xp}(\beta, \beta_2\sigma_{q_{x,i}})$, one can write, for every $i\in \{1,\ldots,n\}$,
\[
\sigma_{Q_{Y,i}} = \beta_2 \sigma_{q_{x,i}} + \e_i,
\]
where $\e_1,\ldots,\e_n \overset{\mathrm{iid}}{\sim} \mathcal{E}\mathrm{xp}(\beta)$. This represents a classical simple linear regression model, with exponential errors and no intercept, for the $\sigma_{Q_{Y,i}}$'s regressed on the $\sigma_{q_{x,i}}$'s. Given that this setting is not standard, the distributions of $\smash{\widehat{\beta}_{\mathrm{ML}}}$ and $\smash{\widehat{\beta}_{2,\mathrm{ML}}}$ are derived below.

From Definition~\ref{def:unbiased.estimators} and Proposition~\ref{prop:pivot.distributions}, it is known that \[
\widehat{\beta}_{\mathrm{ML}} = \frac{n-1}{n} \widehat{\beta}, \qquad \widehat{\beta}\sim \mathcal{G}\mathrm{amma}\left(n-1, \frac{\beta}{n-1}\right),
\]
in the shape-scale parametrization. Therefore, $\widehat{\beta}_{\mathrm{ML}}\sim \mathcal{G}\mathrm{amma}(n-1,\beta/n)$.

Finally, in the scale-shift parametrization, the survival function of a shifted exponential distribution, $\mathcal{E}\mathrm{xp}(\lambda,\delta)$, is $\smash{\exp(-\lambda^{-1}(x-\delta)) \ind_{\{x\geq\delta\}} + \ind_{\{x < \delta\}}}$. Since $\smash{\sigma_{Q_{Y,i}} \stackrel{\mathrm{iid}}{\sim} \mathcal{E}\mathrm{xp}(\beta, \beta_2\sigma_{q_{x,i}})}$, one has, for all $t\geq \beta_2$,
\[
\begin{aligned}
\PP\left(\min_{1\leq i\leq n}\left(\frac{\sigma_{Q_{Y,i}}}{\sigma_{q_{x,i}}}\right)>t\right)
&= \prod_{i=1}^n \PP(\sigma_{Q_{Y,i}} > t \sigma_{q_{x,i}}) \\[-0.5mm]
&= \prod_{i=1}^n \exp(-\beta^{-1}(t \sigma_{q_{x,i}} -\beta_2\sigma_{q_{x,i}})) \\[-1mm]
&= \exp\left(-\beta^{-1}\sum_{i=1}^n \sigma_{q_{x,i}}(t-\beta_2)\right),
\end{aligned}
\]
which shows that $\widehat{\beta}_{2,\mathrm{ML}} \sim \mathcal{E}\mathrm{xp}(\beta / (n\bar{\sigma}_{q_{x,i}}), \beta_2)$. This concludes the proof.
\end{proof}

\begin{proof}[Proof of Proposition~\ref{prop:unbiased.estimators}]
A quick verification using the calculations in Remark~\ref{eq:bias} shows that $\smash{\widehat{\sigma^2}}$, $\smash{\widehat{\beta}_2}$, and $\smash{\widehat{\beta}}$, as given in Definition~\ref{def:unbiased.estimators}, are indeed unbiased:
\[
\begin{aligned}
&\EE(\widehat{\sigma^2}) = \frac{n}{n-2} \EE(\widehat{\sigma^2}_{\mathrm{ML}}) = \frac{n}{n-2} \left(\frac{n-2}{n} \sigma^2\right) = \sigma^2, \\
&\EE(\widehat{\beta}_2) = \frac{n}{n-1} \left(\beta_2 + \frac{\beta}{n \bar{\sigma}_{q_x}}\right) - \frac{(\beta + \beta_2 \bar{\sigma}_{q_x})}{(n-1) \bar{\sigma}_{q_x}} = \beta_2, \\
&\EE(\widehat{\beta}) = \frac{n}{n-1} \EE(\widehat{\beta}) = \frac{n}{n-1} \left(\frac{n-1}{n} \beta\right) = \beta.
\end{aligned}
\]
This concludes the proof.
\end{proof}

\begin{proof}[Proof of Proposition~\ref{prop:pivot.distributions}]
From Proposition~\ref{prop:MLE} and Definition~\ref{def:unbiased.estimators}, one has
\[
\begin{bmatrix}
(\widehat{\beta}_2 - \beta_2) / \beta \\[1mm]
\widehat{\beta} / \beta
\end{bmatrix}
\stackrel{\mathrm{law}}{=}
\frac{n}{n-1}
\begin{bmatrix}
1 & ~~ \frac{-1}{n \bar{\sigma}_{q_x}} \\[2mm]
- \bar{\sigma}_{q_x} & ~~ 1
\end{bmatrix}
\begin{bmatrix}
\min_{1\leq i\leq n} \sigma_{q_{x,i}}^{-1} X_i \\[2mm]
\frac{1}{n} \sum_{i=1}^n X_i
\end{bmatrix},
\]
where $X_1,\ldots,X_n \stackrel{\mathrm{iid}}{\sim} \mathcal{E}\mathrm{xp}(1)$.

An application of Corollary~\ref{cor:density.afer.linear.transformation} with
\[
\theta_i = \sigma_{q_{x,i}}^{-1},
\qquad
A = \frac{n}{n-1}
\begin{bmatrix}
1 & ~~ \frac{-1}{n \bar{\sigma}_{q_x}} \\[1mm]
- \bar{\sigma}_{q_x} & ~~ 1
\end{bmatrix},
\qquad
A^{-1}
=
\begin{bmatrix}
1 & ~~ \frac{1}{n \bar{\sigma}_{q_x}} \\[2mm]
\bar{\sigma}_{q_x} & ~~ 1
\end{bmatrix},
\]
leads to expressions for the densities of $((\widehat{\beta}_2 - \beta_2)/\beta,\widehat{\beta}/\beta)$, $\widehat{\beta}/\beta$, and $(\widehat{\beta}_2 - \beta_2)/\beta$, viz.,
\[
\begin{aligned}
f_{((\widehat{\beta}_2 - \beta_2)/\beta,\widehat{\beta}/\beta)}(u,v)
&=
\begin{cases}
n \bar{\sigma}_{q_x} \exp(- n \bar{\sigma}_{q_x} u) \times \frac{(n-1)^{n-1} v^{n-2} \exp(-n v)}{\Gamma(n-1)}, &\mbox{if } u > \frac{-v}{n \bar{\sigma}_{q_x}}, ~ v > 0, \\
0, &\mbox{otherwise}.
\end{cases} \\
f_{\widehat{\beta}/\beta}(v)
&=
\begin{cases}
\frac{(n-1)^{n-1} v^{n-2} \exp(-(n-1) v)}{\Gamma(n-1)}, &\mbox{if } v > 0, \\
0, &\mbox{otherwise},
\end{cases} \\
f_{(\widehat{\beta}_2 - \beta_2)/\beta}(u)
&=
\begin{cases}
n \bar{\sigma}_{q_x} \exp(-n \bar{\sigma}_{q_x} u) \left(\frac{n-1}{n}\right)^{n-1} \overline{\Gamma}(n-1, -n^2 \bar{\sigma}_{q_x} u), &\mbox{if } u < 0 \\
n \bar{\sigma}_{q_x} \exp(-n \bar{\sigma}_{q_x} u) \left(\frac{n-1}{n}\right)^{n-1}, &\mbox{if } u\geq 0,
\end{cases}
\end{aligned}
\]
where $\overline{\Gamma}$ denotes the regularized upper incomplete gamma function, i.e.,
\[
\overline{\Gamma}(a,b) = \frac{1}{\Gamma(a)} \int_b^{\infty} t^{a-1} \exp(-t) \rd t, \quad (a,b)\in (0,\infty) \times \R.
\]
In particular, this shows that $\widehat{\beta}/\beta\sim \mathcal{G}\mathrm{amma}(n-1,1/(n-1))$.

Furthermore, for all $z\in (-1/(n \bar{\sigma}_{q_x}),\infty)$, one has
\[
\begin{aligned}
f_{(\widehat{\beta}_2 - \beta_2) / \widehat{\beta}}(z)
&= \int_0^{\infty} f_{((\widehat{\beta}_2 - \beta_2)/\beta,\widehat{\beta}/\beta)}(z v,v) v \rd v \\
&= \int_0^{\infty} n \bar{\sigma}_{q_x} \exp(-n \bar{\sigma}_{q_x} z v) \frac{(n-1)^{n-1} v^{n-2} \exp(-n v)}{\Gamma(n-1)} \, v \rd v \\
&= \frac{n (n-1)^n \bar{\sigma}_{q_x}}{(n (1 + \bar{\sigma}_{q_x} z))^n } \times \frac{1}{\Gamma(n)} \int_0^{\infty} (n (1 + \bar{\sigma}_{q_x} z))^n v^{n-1} \exp(-n (1 + \bar{\sigma}_{q_x} z) v) \rd v \\
&= \frac{n (1-1/n)^n \bar{\sigma}_{q_x}}{(1 + \bar{\sigma}_{q_x} z)^n } \times 1.
\end{aligned}
\]
It follows that, for all $z\in (0,\infty)$,
\[
\begin{aligned}
f_{(\widehat{\beta}_2 - \beta_2) / \widehat{\beta} + 1/(n \bar{\sigma}_{q_x})}(z)
&= f_{(\widehat{\beta}_2 - \beta_2) / \widehat{\beta}}(z - 1/(n \bar{\sigma}_{q_x})) \\
&= \frac{(n-1)}{(1-1/n)/\bar{\sigma}_{q_x}} \left(\frac{(1-1/n)/\bar{\sigma}_{q_x}}{(1-1/n)/\bar{\sigma}_{q_x} + z}\right)^{(n-1)+1},
\end{aligned}
\]
meaning that $(\widehat{\beta}_2 - \beta_2) / \widehat{\beta} + 1/(n \bar{\sigma}_{q_x})\sim \mathcal{P}\mathrm{areto}\text{-}\mathcal{T}\mathrm{ype}\text{-}\mathrm{II}(n-1, (1-1/n)/\bar{\sigma}_{q_x})$. This concludes the proof.
\end{proof}

\begin{proof}[Proof of Remark~\ref{rem:hat.beta.2.positive}]
Using the expression of the density $\smash{f_{(\widehat{\beta}_2 - \beta_2)/\beta}}$ in the proof of Proposition~\ref{prop:pivot.distributions}, calculations in \texttt{Mathematica} show that
\[
\EE(\widehat{\beta}_2) = \beta_2, \qquad \Var(\widehat{\beta}_2) = \frac{\beta^2}{n (n-1) \bar{\sigma}_{q_x}^2}.
\]
It follows from Chebyshev's inequality that, for all $\e > 0$,
\[
\PP(\widehat{\beta}_2 < \beta_2 - \e) \leq \PP(|\widehat{\beta}_2 - \beta_2| > \e) \leq \frac{\beta^2}{n (n-1) \e^2 \bar{\sigma}_{q_x}^2}.
\]
This concludes the proof.
\end{proof}

\begin{proof}[Proof of Proposition~\ref{prop:CR.new.mean.response}]
Using the expressions for $\widehat{\beta}_0$ and $\widehat{\beta}_1$ in Proposition~\ref{prop:MLE}, one has
\[
\Var(\widehat{\beta}_0)
= \frac{\sigma^2}{n} \bigg(1 + \frac{\bar{\mu}_{q_x}^2}{w}\bigg), \qquad
\Cov(\widehat{\beta}_0,\widehat{\beta}_1)
= \frac{- \sigma^2 \bar{\mu}_{q_x}}{n w}, \qquad
\Var(\widehat{\beta}_1)
= \frac{\sigma^2}{n w}.
\]
Since $\mu_{\widehat{Q}_{Y,n+1}} = \widehat{\beta}_0 + \widehat{\beta}_1 \mu_{q_{x,n+1}}$, one deduces
\[
\begin{aligned}
\Var(\mu_{\widehat{Q}_{Y,n+1}})
&= \Var(\widehat{\beta}_0) + 2 \mu_{q_{x,n+1}} \Cov(\widehat{\beta}_0,\widehat{\beta}_1) + \mu_{q_{x,n+1}}^2 \Var(\widehat{\beta}_1) \\
&= \frac{\sigma^2}{n} \left(1 + \frac{(\mu_{q_{x,n+1}} - \bar{\mu}_{q_x})^2}{w}\right).
\end{aligned}
\]
Given that $(\widehat{\beta}_0,\widehat{\beta}_1)$ is a nonsingular Gaussian random vector, it follows that
\begin{equation}\label{eq:hat.mu.Q.Y.n+1}
\mu_{\widehat{Q}_{Y,n+1}} = \widehat{\beta}_0 + \mu_{q_{x,n+1}} \widehat{\beta}_1 \sim \mathcal{N}\left(\beta_0 + \mu_{q_{x,n+1}} \beta_1, \frac{\sigma^2}{n} \left(1 + \frac{(\mu_{q_{x,n+1}} - \bar{\mu}_{q_x})^2}{w}\right)\right).
\end{equation}

The next goal is to derive the distribution of $\smash{\sigma_{\widehat{Q}_{Y,n+1}}} = \widehat{\beta}_2 \sigma_{q_{x,n+1}} + \widehat{\beta}$. Using the expression found in the proof of Proposition~\ref{prop:pivot.distributions} for the joint density function of $\smash{((\widehat{\beta}_2 - \beta_2)/\beta,\widehat{\beta}/\beta)}$, one obtains that, under the restriction $\sigma_{q_{x,n+1}} < n \bar{\sigma}_{q_x}$, and for all $t\in (0,\infty)$,
\[
\begin{aligned}
&f_{\sigma_{q_{x,n+1}} (\widehat{\beta}_2 - \beta_2)/\beta + \widehat{\beta}/\beta}(t) \\
&\quad= \int_0^{t (1-\sigma_{q_{x,n+1}}/(n \bar{\sigma}_{q_x}))^{-1}} f_{((\widehat{\beta}_2 - \beta_2)/\beta, \widehat{\beta}/\beta)}((t-v)/\sigma_{q_{x,n+1}},v) \frac{1}{\sigma_{q_{x,n+1}}} \rd v \\
&\quad= \int_0^{t (1-\sigma_{q_{x,n+1}}/(n \bar{\sigma}_{q_x}))^{-1}} \frac{n \bar{\sigma}_{q_x}}{\sigma_{q_{x,n+1}}} \exp\left(\frac{- n \bar{\sigma}_{q_x}}{\sigma_{q_{x,n+1}}} (t - v)\right) \times \frac{(n-1)^{n-1} v^{n-2} \exp(-n v)}{\Gamma(n-1)} \rd v \\
&\quad= \frac{\frac{n \bar{\sigma}_{q_x}}{\sigma_{q_{x,n+1}}} \exp\left(\frac{- n \bar{\sigma}_{q_x}}{\sigma_{q_{x,n+1}}} t\right)}{\left(\frac{n}{n-1} \left(1 - \frac{\bar{\sigma}_{q_x}}{\sigma_{q_{x,n+1}}}\right)\right)^{n-1}} \\[-1mm]
&\qquad\times \int_0^{t (1-\sigma_{q_{x,n+1}}/(n \bar{\sigma}_{q_x}))^{-1}} \frac{\left(n \left(1 - \frac{\bar{\sigma}_{q_x}}{\sigma_{q_{x,n+1}}}\right)\right)^{n-1} v^{n-2} \exp\left(- n \left(1 - \frac{\bar{\sigma}_{q_x}}{\sigma_{q_{x,n+1}}}\right) v\right)}{\Gamma(n-1)}\rd v \\
&\quad= \frac{\frac{n \bar{\sigma}_{q_x}}{\sigma_{q_{x,n+1}}} \exp\left(\frac{- n \bar{\sigma}_{q_x}}{\sigma_{q_{x,n+1}}} t\right)}{\left(\frac{n}{n-1} \left(1 - \frac{\bar{\sigma}_{q_x}}{\sigma_{q_{x,n+1}}}\right)\right)^{n-1}} \times \overline{\gamma}\left(n-1, n \left(1 - \frac{\bar{\sigma}_{q_x}}{\sigma_{q_{x,n+1}}}\right) \left(1 - \frac{\sigma_{q_{x,n+1}}}{n \bar{\sigma}_{q_x}}\right)^{-1} t\right),
\end{aligned}
\]
where recall $\overline{\gamma}$ denote the regularized lower incomplete gamma function. Hence, under the restriction $\sigma_{q_{x,n+1}} < n \bar{\sigma}_{q_x}$, one has, for all $t > \beta_2 \sigma_{q_{x,n+1}}$,
\begin{equation}\label{eq:hat.sigma.Q.Y.n+1}
\begin{aligned}
f_{\sigma_{\widehat{Q}_{Y,n+1}}}(t)
&= \frac{\frac{n \bar{\sigma}_{q_x}}{\beta \sigma_{q_{x,n+1}}} \exp\left(\frac{- n \bar{\sigma}_{q_x}}{\beta \sigma_{q_{x,n+1}}} (t - \beta_2 \sigma_{q_{x,n+1}})\right)}{\left(\frac{n}{n-1} \left(1 - \frac{\bar{\sigma}_{q_x}}{\sigma_{q_{x,n+1}}}\right)\right)^{n-1}} \\
&\quad\times \overline{\gamma}\left(n-1, n \frac{\left(1 - \frac{\bar{\sigma}_{q_x}}{\sigma_{q_{x,n+1}}}\right)}{\left(1 - \frac{\sigma_{q_{x,n+1}}}{n \bar{\sigma}_{q_x}}\right)} \frac{(t - \beta_2 \sigma_{q_{x,n+1}})}{\beta}\right).
\end{aligned}
\end{equation}
Since $\smash{\mu_{\widehat{Q}_{Y,n+1}}}$ and $\smash{\sigma_{\widehat{Q}_{Y,n+1}}}$ are independent, it follows from \eqref{eq:hat.mu.Q.Y.n+1} and \eqref{eq:hat.sigma.Q.Y.n+1} that, under the restriction $\sigma_{q_{x,n+1}} < n \bar{\sigma}_{q_x}$, one has, for all $(s,t)\in \R \times (\beta_2 \sigma_{q_{x,n+1}},\infty)$,
\[
\begin{aligned}
&f_{\mu_{\widehat{Q}_{Y,n+1}},\sigma_{\widehat{Q}_{Y,n+1}}}(s,t)
= f_{\mu_{\widehat{Q}_{Y,n+1}}}(s) f_{\sigma_{\widehat{Q}_{Y,n+1}}}(t) \\
&= \frac{1}{\sqrt{2\pi \frac{\sigma^2}{n} \left(1 + \frac{(\mu_{q_{x,n+1}} - \bar{\mu}_{q_x})^2}{w}\right)}} \exp\left(- \frac{(s - (\beta_0 + \mu_{q_{x,n+1}} \beta_1))^2}{2 \frac{\sigma^2}{n} \left(1 + \frac{(\mu_{q_{x,n+1}} - \bar{\mu}_{q_x})^2}{w}\right)}\right) \\
&\quad\times \frac{\frac{n \bar{\sigma}_{q_x}}{\beta \sigma_{q_{x,n+1}}} \exp\left(\frac{- n \bar{\sigma}_{q_x}}{\beta \sigma_{q_{x,n+1}}} (t - \beta_2 \sigma_{q_{x,n+1}})\right)}{\left(\frac{n}{n-1} \left(1 - \frac{\bar{\sigma}_{q_x}}{\sigma_{q_{x,n+1}}}\right)\right)^{n-1}} \, \overline{\gamma}\left(n-1, n \frac{\left(1 - \frac{\bar{\sigma}_{q_x}}{\sigma_{q_{x,n+1}}}\right)}{\left(1 - \frac{\sigma_{q_{x,n+1}}}{n \bar{\sigma}_{q_x}}\right)} \frac{(t - \beta_2 \sigma_{q_{x,n+1}})}{\beta}\right).
\end{aligned}
\]
Now replacing $\beta_0$, $\beta_1$, $\sigma^2$, $\beta_2$, and $\beta$, in the last equation by the observed values of the unbiased estimators $\smash{\widehat{\beta}_0}$, $\smash{\widehat{\beta}_1}$, $\smash{\widehat{\sigma^2}}$, $\smash{\widehat{\beta}_2}$, and $\smash{\widehat{\beta}}$, one defines, for all $\smash{(s,t)\in \R \times (\widehat{\beta}_2 \sigma_{q_{x,n+1}},\infty)}$,
\[
\begin{aligned}
&\widehat{f}_{\widehat{Q}_{Y,n+1}}(s,t)
\equiv \widehat{f}_{\mu_{\widehat{Q}_{Y,n+1}},\sigma_{\widehat{Q}_{Y,n+1}}}(s,t) \\[-1mm]
&\leqdef \frac{1}{\sqrt{2\pi \frac{\widehat{\sigma^2}}{n} \left(1 + \frac{(\mu_{q_{x,n+1}} - \bar{\mu}_{q_x})^2}{w}\right)}} \exp\left(- \frac{(s - (\widehat{\beta}_0 + \mu_{q_{x,n+1}} \widehat{\beta}_1))^2}{2 \frac{\widehat{\sigma^2}}{n} \left(1 + \frac{(\mu_{q_{x,n+1}} - \bar{\mu}_{q_x})^2}{w}\right)}\right) \\
&~\times \frac{\frac{n \bar{\sigma}_{q_x}}{\widehat{\beta} \sigma_{q_{x,n+1}}} \exp\left(\frac{- n \bar{\sigma}_{q_x}}{\widehat{\beta} \sigma_{q_{x,n+1}}} (t - \widehat{\beta}_2 \sigma_{q_{x,n+1}})\right)}{\left(\frac{n}{n-1} \left(1 - \frac{\bar{\sigma}_{q_x}}{\sigma_{q_{x,n+1}}}\right)\right)^{n-1}} \, \overline{\gamma}\left(n-1, n \frac{\left(1 - \frac{\bar{\sigma}_{q_x}}{\sigma_{q_{x,n+1}}}\right)}{\left(1 - \frac{\sigma_{q_{x,n+1}}}{n \bar{\sigma}_{q_x}}\right)} \frac{(t - \widehat{\beta}_2 \sigma_{q_{x,n+1}})}{\widehat{\beta}}\right).
\end{aligned}
\]
This concludes the proof.
\end{proof}

The following technical lemma is needed before proving Proposition~\ref{prop:residuals}. It is a consequence of the memoryless property of exponential random variables.

\begin{lemma}\label{lem:memoryless}
Let $\theta_1,\ldots,\theta_n\in (0,\infty)$ and let $X_1,\ldots,X_n$ be a sequence of independent scale-parametrized exponential random variables such that $X_j\sim \mathcal{E}\mathrm{xp}(\theta_j)$ for each $j$. Then, for any given $\ell\in \{1,\ldots,n\}$ and $t\in (0,\infty)$, one has
\[
\PP\left(\bigcap_{\substack{j=1 \\ j\neq \ell}}^n \{X_j - X_{\ell} > t\} ~\Bigg|\Bigg.~ \bigcap_{\substack{j=1 \\ j\neq \ell}}^n \{X_j > X_{\ell}\}\right) = \prod_{j\neq \ell}^n \exp(-\theta_j^{-1} t) = \PP\left(\bigcap_{j\neq \ell}^n \{X_j > t\}\right).
\]
In simple terms, the above equation indicates that the joint distribution of the random vector $(X_1,\ldots,X_{\ell-1},X_{\ell+1},\ldots,X_n)$, given that each component exceeds the exponential time $X_{\ell}$, is the same as the unconditional distribution of $(X_1,\ldots,X_{\ell-1},X_{\ell+1},\ldots,X_n)$ on its own.
\end{lemma}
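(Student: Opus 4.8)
The plan is to reduce the statement to the scalar memoryless property of the exponential law by conditioning on the value of $X_{\ell}$. Write $A \leqdef \bigcap_{j\neq\ell}\{X_j - X_{\ell} > t\}$ and $B \leqdef \bigcap_{j\neq\ell}\{X_j > X_{\ell}\}$. Since $t>0$, one has $A\subseteq B$, so $\PP(A\mid B) = \PP(A)/\PP(B)$, and it suffices to evaluate the two unconditional probabilities. I would condition on $X_{\ell}=s$: because $X_{\ell}$ is independent of $(X_j)_{j\neq\ell}$, the conditional law of $(X_j)_{j\neq\ell}$ given $X_{\ell}=s$ is their unconditional (product) law, and on this event $A$ becomes $\bigcap_{j\neq\ell}\{X_j>s+t\}$ while $B$ becomes $\bigcap_{j\neq\ell}\{X_j>s\}$.

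Setting $\Lambda \leqdef \sum_{j\neq\ell}\theta_j^{-1}$ and using $\PP(X_j>u)=e^{-\theta_j^{-1}u}$ together with independence gives $\PP\!\big(\bigcap_{j\neq\ell}\{X_j>s+t\}\big)=e^{-\Lambda(s+t)}$. Integrating against $f_{X_{\ell}}(s)=\theta_{\ell}^{-1}e^{-\theta_{\ell}^{-1}s}$ over $s\in(0,\infty)$ yields $\PP(A)=e^{-\Lambda t}\,\theta_{\ell}^{-1}/(\Lambda+\theta_{\ell}^{-1})$, and the same computation with $t=0$ yields $\PP(B)=\theta_{\ell}^{-1}/(\Lambda+\theta_{\ell}^{-1})$ (which is the familiar probability that $X_{\ell}=\min_{j}X_j$). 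Dividing, $\PP(A\mid B)=e^{-\Lambda t}=\prod_{j\neq\ell}e^{-\theta_j^{-1}t}$, and this last product is exactly $\PP\!\big(\bigcap_{j\neq\ell}\{X_j>t\}\big)$ by independence, which is the desired identity.

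A slicker variant avoids the integral: conditionally on $X_{\ell}=s$, applying the memoryless property to each $X_j$ separately gives $\PP\!\big(\bigcap_{j\neq\ell}\{X_j>s+t\}\,\big|\,\bigcap_{j\neq\ell}\{X_j>s\}\big)=\prod_{j\neq\ell}\PP(X_j>s+t\mid X_j>s)=\prod_{j\neq\ell}e^{-\theta_j^{-1}t}$, which does not depend on $s$; averaging this constant over the law of $X_{\ell}$ via the tower property returns the claim at once. I do not expect any genuine obstacle here: the only points needing a sentence of care are the measure-theoretic remark that independence makes conditioning on $X_{\ell}$ leave the joint law of the other coordinates unchanged, and the inclusion $A\subseteq B$ that justifies writing $\PP(A\mid B)=\PP(A)/\PP(B)$.
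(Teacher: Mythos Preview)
Your proposal is correct, and your ``slicker variant'' is precisely the paper's own argument: condition on $X_{\ell}=s$, observe that the conditional probability $\PP\bigl(\bigcap_{j\neq\ell}\{X_j>s+t\}\,\big|\,\bigcap_{j\neq\ell}\{X_j>s\}\bigr)=\prod_{j\neq\ell}e^{-\theta_j^{-1}t}$ does not depend on $s$, and integrate against $f_{X_{\ell}}(s)\,\rd s$. Your first approach (computing $\PP(A)$ and $\PP(B)$ separately and dividing) is an equally valid minor variant.
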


\begin{proof}[Proof of Lemma~\ref{lem:memoryless}]
Let $\ell\in \{1,\ldots,n\}$ be given. Note that, for all $s,t\in (0,\infty)$,
\[
\begin{aligned}
\PP\left(\bigcap_{j\neq \ell}^n \{X_j - s > t\} ~\Bigg|\Bigg.~ \bigcap_{j\neq \ell}^n \{X_j > s\}\right)
&= \PP\left(\bigcap_{j\neq \ell}^n \{X_j > s + t\} ~\Bigg|\Bigg.~ \bigcap_{j\neq \ell}^n \{X_j > s\}\right) \\
&= \frac{\prod_{j\neq \ell}^n \exp(-\theta_j^{-1} (s + t))}{\prod_{j\neq \ell}^n \exp(-\theta_j^{-1} s)}
= \prod_{j\neq \ell}^n \exp(-\theta_j^{-1} t).
\end{aligned}
\]
This last expression does not depend on $s$, so integrating it over $s\in (0,\infty)$ with respect to the probability measure $f_{X_{\ell}}(s) \rd s$ yields the conclusion.
\end{proof}

\begin{proof}[Proof of Proposition~\ref{prop:residuals}]
Let $i\in \{1,\ldots,n\}$ be given. Using the expressions for $\smash{\widehat{\beta}_0}$ and $\smash{\widehat{\beta}_1}$ in Proposition~\ref{prop:MLE}, it is straightforward to show that
\[
\widehat{\beta}_0 = \frac{1}{n} \sum_{j=1}^n \left(1 - \frac{\bar{\mu}_{q_x} (\mu_{q_{x,j}} - \bar{\mu}_{q_x})}{w}\right) \mu_{Q_{Y,j}}, \qquad
\widehat{\beta}_1 = \frac{1}{n} \sum_{j=1}^n \frac{(\mu_{q_{x,j}} - \bar{\mu}_{q_x})}{w} \mu_{Q_{Y,j}}.
\]
Starting from \eqref{eq:mu.sigma.residuals}, one has
\[
\begin{aligned}
\mu_{\widehat{E}_i}
&= \mu_{Q_{Y,i}} - \mu_{\widehat{Q}_{Y,i}}
= \mu_{Q_{Y,i}} - (\widehat{\beta}_0 + \widehat{\beta}_1 \mu_{q_{x,i}}) \\
&= \mu_{Q_{Y,i}} - \frac{1}{n} \sum_{j=1}^n \left(1 - \frac{\bar{\mu}_{q_x} (\mu_{q_{x,j}} - \bar{\mu}_{q_x})}{w} + \frac{(\mu_{q_{x,j}} - \bar{\mu}_{q_x})}{w} \mu_{q_{x,i}}\right) \mu_{Q_{Y,j}} \\[-1mm]
&= \left(1 - \frac{1}{n} - \frac{(\mu_{q_{x,i}} - \bar{\mu}_{q_x})^2}{n w}\right) \mu_{Q_{Y,i}} - \frac{1}{n} \sum_{\substack{j=1 \\ j\neq i}}^n \left(1 + \frac{(\mu_{q_{x,j}} - \bar{\mu}_{q_x})}{n w} (\mu_{q_{x,i}} - \bar{\mu}_{q_x})\right) \mu_{Q_{Y,j}}.
\end{aligned}
\]
Furthermore, it is known from \eqref{eq:model.reformulation} that $\mu_{Q_{Y,i}} = \beta_0 + \beta_1 \mu_{q_{x,i}} + \mu_{E_i}$, and the model~\eqref{eq:model} also assumes that $\mu_{E_i}\sim \mathcal{N}(0,\sigma^2)$, so $\smash{\mu_{\widehat{E}_i}}$ is a centered Gaussian random variable such that
\[
\begin{aligned}
\Var(\mu_{\widehat{E}_i})
&= \sigma^2 \left[\frac{1}{n^2} \left(n - 1 - \frac{(\mu_{q_{x,i}} - \bar{\mu}_{q_x})^2}{w}\right)^2 \right. \\[-5mm]
&\hspace{30mm} \left.+ \frac{1}{n^2} \sum_{\substack{j=1 \\ j\neq i}}^n \left(1 + \frac{(\mu_{q_{x,j}} - \bar{\mu}_{q_x})}{w} (\mu_{q_{x,i}} - \bar{\mu}_{q_x})\right)^2\right].
\end{aligned}
\]
Expanding the squares and simplifying the variance leads to
\[
\mu_{\widehat{E}_i} \sim \mathcal{N}\left(0, \sigma^2 \left(1 - \frac{1}{n} - \frac{(\mu_{q_{x,i}} - \bar{\mu}_{q_x})^2}{n w}\right)\right).
\]

Next, starting from \eqref{eq:mu.sigma.residuals}, one has
\[
\begin{aligned}
\sigma_{\widehat{E}_i}
&= (\beta_2 \sigma_{q_{x,i}} + \sigma_{E_i}) - \widehat{\beta}_2 \sigma_{q_{x,i}} \\
&= \sigma_{E_i} - \left[\left(\frac{n}{n-1} \min_{1 \leq j \leq n} \left(\frac{\sigma_{Q_{Y,j}}}{\sigma_{q_{x,j}}}\right) - \frac{\sum_{j=1}^n \sigma_{Q_{Y,j}}}{n (n-1) \bar{\sigma}_{q_x}}\right) - \beta_2\right] \sigma_{q_{x,i}}\\
&= \sigma_{E_i} - \left[\frac{n}{n-1} \min_{1 \leq j \leq n} \left(\frac{\beta_2 \sigma_{q_{x,j}} + \sigma_{E_j}}{\sigma_{q_{x,j}}}\right) - \frac{\sum_{j=1}^n (\beta_2 \sigma_{q_{x,j}} + \sigma_{E_j})}{n (n-1) \bar{\sigma}_{q_x}} - \beta_2\right] \sigma_{q_{x,i}} \\
&= \frac{-n \sigma_{q_{x,i}}}{n-1} \min_{1 \leq j \leq n} \left(\frac{\sigma_{E_j}}{\sigma_{q_{x,j}}}\right) + \sigma_{q_{x,i}} \left(\frac{\sigma_{E_i}}{\sigma_{q_{x,i}}}\right) + \sum_{j=1}^n \frac{\sigma_{q_{x,i}} \sigma_{q_{x,j}}}{n (n-1) \bar{\sigma}_{q_x}} \left(\frac{\sigma_{E_j}}{\sigma_{q_{x,j}}}\right) \\[-2mm]
&= b_i \min_{1 \leq j \leq n} X_j + \sum_{j=1}^n c_{i,j} X_j,
\end{aligned}
\]
where the constants $b_i,c_{i,1},\ldots,c_{i,n}$ and the independent scale-parametrized exponential random variables $X_1,\ldots,X_n$ are defined, for all $j\in \{1,\ldots,n\}$, by
\[
b_i \leqdef \frac{-n \sigma_{q_{x,i}}}{n-1}, \quad c_{i,j} \leqdef \sigma_{q_{x,i}} \ind_{\{j=i\}} + \frac{\sigma_{q_{x,i}} \sigma_{q_{x,j}}}{n(n-1) \bar{\sigma}_{q_x}}, \quad X_j \leqdef \frac{\sigma_{E_j}}{\sigma_{q_{x,j}}} \sim \mathcal{E}\mathrm{xp}\bigg(\theta_j = \frac{\beta}{\sigma_{q_{x,j}}}\bigg).
\]

For every $\ell\in \{1,\ldots,n\}$, the probability that the $\ell$th exponential random variable $X_{\ell}$ is the smallest in the sequence $X_1,\ldots,X_n$ is equal to
\[
\begin{aligned}
\PP\left(\bigcap_{\substack{j=1 \\ j\neq \ell}}^n \{X_j > X_{\ell}\}\right)
&= \int_0^{\infty} \PP\left(\bigcap_{\substack{j=1 \\ j\neq \ell}}^n \{X_j > s\} ~\Bigg|\Bigg.~ X_{\ell} = s\right) f_{X_{\ell}}(s) \rd s \\
&= \int_0^{\infty} \left(\prod_{j\neq \ell}^n \exp(-\theta_j^{-1} s)\right) \theta_{\ell}^{-1} \exp(-\theta_{\ell}^{-1} s) \rd s \\
&= \int_0^{\infty} \theta_{\ell}^{-1} \exp\bigg(-\sum_{j=1}^n \theta_j^{-1} s\bigg) \rd s = \frac{\theta_{\ell}^{-1}}{\sum_{j=1}^n \theta_j^{-1}} = \frac{\sigma_{q_{x,\ell}}}{n \bar{\sigma}_{q_x}} \reqdef w_{\ell},
\end{aligned}
\]
where the second equality exploited the independence of the $X_j$'s. Hence, by conditioning on which exponential random variable is the smallest, then applying Lemma~\ref{lem:memoryless} and the fact that $b_i + \sum_{j=1}^n c_{i,j} = 0$, one can calculate the survival function of $\sigma_{\widehat{E}_i}$ for all $t\in (0,\infty)$:
\[
\begin{aligned}
\PP(\sigma_{\widehat{E}_i} > t)
&= \sum_{\ell=1}^n w_{\ell} ~ \PP\left(b_i X_{\ell} + \sum_{j=1}^n c_{i,j} X_j > t ~\Bigg|\Bigg.~ \bigcap_{\substack{j=1 \\ j\neq \ell}}^n \{X_j > X_{\ell}\}\right) \\
&= \sum_{\ell=1}^n w_{\ell} ~\PP\left(\sum_{\substack{j=1 \\ j\neq \ell}}^n c_{i,j} (X_j - X_{\ell}) > t - \left(b_i + \sum_{j=1}^n c_{i,j}\right) X_{\ell} ~\Bigg|\Bigg.~ \bigcap_{\substack{j=1 \\ j\neq \ell}}^n \{X_j > X_{\ell}\}\right) \\
&= \sum_{\ell=1}^n w_{\ell} ~\PP\left(\sum_{j\neq \ell}^n c_{i,j} X_j > t\right).
\end{aligned}
\]
By isolating the $i$th term and differentiating on both sides of the last equation, one finds that
\[
f_{\sigma_{\widehat{E}_i}}(t)
= w_i \, f_{\sum_{j\neq i}^n c_{i,j} X_j}(t) + \sum_{\substack{\ell=1 \\ \ell\neq i}}^n w_{\ell} \, f_{c_{i,i} X_i + \sum_{j\not\in \{\ell,i\}}^n c_{i,j} X_j}(t), \quad t\in (0,\infty).
\]
Using the scale parametrization, note that
\[
c_{i,j} X_j
\sim
\begin{cases}
\mathcal{E}\mathrm{xp}\left(\nu_{i,1} \leqdef c_{i,i} \theta_i = \beta + \frac{\beta \sigma_{q_{x,i}}}{n(n-1) \bar{\sigma}_{q_x}}\right), &\mbox{if } j = i, \\[2mm]
\mathcal{E}\mathrm{xp}\left(\nu_{i,2} \leqdef c_{i,j} \theta_j = \frac{\beta \sigma_{q_{x,i}}}{n(n-1) \bar{\sigma}_{q_x}}\right), &\mbox{if } j\neq i.
\end{cases}
\]
Therefore, one has
\[
\begin{aligned}
f_{\sigma_{\widehat{E}_i}}(t)
&= w_i f_{\mathcal{G}\mathrm{amma}(n-1,\nu_{i,2})}(t) \\[-1mm]
&\qquad+ (1 - w_i) \int_0^t f_{\mathcal{E}\mathrm{xp}(\nu_{i,1})}(t - v) f_{\mathcal{G}\mathrm{amma}(n-2,\nu_{i,2})}(v) \rd v,
\end{aligned}
\]
where the last integral is the convolution between the density of $c_{i,i} X_i \sim \mathcal{E}\mathrm{xp}(\nu_{i,1})$ and the density of $\sum_{j\not\in \{\ell,i\}}^n c_{i,j} X_j \sim \mathcal{G}\mathrm{amma}(n-2,\nu_{i,2})$, which, using the notation
\[
\theta \leqdef \left(\widehat{\nu}_{i,2}^{-1} - \widehat{\nu}_{i,1}^{-1}\right)^{-1},
\]
can be simplified as follows:
\[
\begin{aligned}
&\int_0^t f_{\mathcal{E}\mathrm{xp}(\widehat{\nu}_{i,1})}(t - v) f_{\mathcal{G}\mathrm{amma}(n-2,\widehat{\nu}_{i,2})}(v) \rd v \\
&\qquad= \int_0^t \widehat{\nu}_{i,1}^{-1} \exp\left(-\widehat{\nu}_{i,1}^{-1}(t - v)\right) \frac{v^{n-3} \exp(-\widehat{\nu}_{i,2}^{-1} v)}{\widehat{\nu}_{i,2}^{n-2} \Gamma(n-2)} \rd v \\
&\qquad= \widehat{\nu}_{i,1}^{-1} \exp(-\widehat{\nu}_{i,1}^{-1} t) \left(\frac{\theta}{\widehat{\nu}_{i,2}}\right)^{n-2} \int_0^t \frac{v^{n-3} \exp(-\theta^{-1} v)}{\theta^{n-2} \Gamma(n-2)} \rd v \\
&\qquad= \widehat{\nu}_{i,1}^{-1} \exp(-\widehat{\nu}_{i,1}^{-1} t) \left(\frac{\theta}{\widehat{\nu}_{i,2}}\right)^{n-2} F_{\mathcal{G}\mathrm{amma}(n-2,\theta)}(t).
\end{aligned}
\]
This concludes the proof.
\end{proof}

\end{document}